\newtheorem{theorem}{Theorem}[section]
\newtheorem{lemma}[theorem]{Lemma}
\newtheorem{definition}[theorem]{Definition}
\newenvironment{mylist}[1]{\begin{list}{}{
	\setlength{\leftmargin}{#1}
	\setlength{\rightmargin}{0mm}
	\setlength{\labelsep}{2mm}
	\setlength{\labelwidth}{8mm}
	\setlength{\itemsep}{0mm}}}
	{\end{list}}
\newcommand{\tinyspace}{\mspace{1mu}}
\newcommand{\abs}[1]{\left\lvert\tinyspace #1 \tinyspace\right\rvert}
\newcommand{\norm}[1]{\left\lVert\tinyspace #1 \tinyspace\right\rVert}
\newcommand{\inlinenorm}[1]{\lVert\tinyspace #1 \tinyspace\rVert}
\newcommand{\tr}{\operatorname{Tr}}
\newcommand{\setft}[1]{\mathrm{#1}}
\newcommand{\lin}[1]{\setft{L}\left(#1\right)}
\newcommand{\unitary}[1]{\setft{U}\left(#1\right)}
\newcommand{\herm}[1]{\setft{Herm}\left(#1\right)}
\newcommand{\class}[1]{\textup{#1}}
\newcommand{\prob}[1]{\textsc{#1}}
\newcommand{\ayes}{A_{\rm yes}} 
\newcommand{\ano}{A_{\rm no}} 
\newcommand{\inner}[2]{\langle #1, #2 \rangle}
\newcommand{\ketbra}[2]{\ket{#1}\!\bra{#2}}
\newcommand{\kb}[1]{\ketbra{#1}{#1}}
\newcommand{\brakett}[2]{\mbox{$\langle #1  | #2 \rangle$}}
\newcommand{\enorm}[1]{\norm{#1}_{\mathrm{2}}}      
\newcommand{\trnorm}[1]{\norm{#1}_{\mathrm {tr}}}  
\newcommand{\fnorm}[1]{\norm{#1}_{\mathrm {F}}}    
\newcommand{\snorm}[1]{\norm{#1}_{\mathrm {\infty}}}    
\newcommand{\inlinesnorm}[1]{\inlinenorm{#1}_{\mathrm {\infty}}}
\newcommand{\mnorm}[1]{\norm{#1}_{\max}}    
\newcommand{\gscon}{\prob{GSCON}}
\newcommand{\ffgscon}{\prob{FF-GSCON}}
\newcommand{\succgscon}{\prob{SUCCINCT GSCON}}
\newcommand{\osat}{\prob{ORACLE $3$SAT}}
\newcommand{\stcon}{\prob{s,t-CONN}}
\newcommand{\spa}[1]{\mathcal{#1}}
\newcommand{\pacc}{p_{\rm accept}}
\newcommand{\prej}{p_{\rm reject}}
\newcommand{\trace}{\tr}
\newcommand{\histstate}{\ket{\psi_{\rm hist}}}
\newcommand{\histstateketbra}{\ketbra{\psi_{\rm hist}}{\psi_{\rm hist}}}
\newcommand{\enc}[1]{\left<#1\right>}
\newcommand{\travlemmatwo}{Traversal Lemma}
\newcommand{\aaa}{\eta_1}
\newcommand{\bbb}{\eta_2}
\newcommand{\ccc}{\eta_3}
\newcommand{\ddd}{\eta_4}
\newcommand{\round}{\operatorname{round}}
\newcommand{\poly}{\textup{poly}}
\newcommand{\klh}{$k$-LH}
\def\complex{\mathbb{C}}
\def\real{\mathbb{R}}
\def\natural{\mathbb{N}}
\def\({\left(}
\def\){\right)}
\newcommand{\ve}[1]{\mathbf{#1}}
\newcommand{\red}[1]{#1}
\begin{document}

\title{Ground state connectivity of local Hamiltonians\footnote{A preliminary version of this paper appeared in Proceedings of the 42nd International Colloquium on Automata, Languages and Programming (ICALP), volume 9134 of Lecture Notes in Computer Science, pages 617 -- 628, 2015. The full version is published in ACM Transactions on Computation Theory (TOCT), Volume 10, Issue 2, May 2018, Article No. 8.}
}

\author{
  Sevag Gharibian\footnote{Simons Institute for the Theory of Computing and Department of Electrical Engineering and Computer Sciences, University of California, Berkeley, CA 94720, USA, and Department of Computer Science, Virginia Commonwealth University, Richmond, VA 23284, USA. Email: \tt{sevag.gharibian@gmail.com}.}
  \and
  Jamie Sikora\footnote{Centre for Quantum Technologies and MajuLab, CNRS-UNS-NUS-NTU International Joint Research Unit, UMI 3654, National University of Singapore, Singapore 117543. Email: \tt{jamiesikora@gmail.com}.}
}

\date{\today}

%


\maketitle
\begin{abstract}
The study of ground state energies of local Hamiltonians has played a fundamental role in quantum complexity theory. In this paper, we take a new direction by introducing the physically motivated notion of ``ground state connectivity'' of local Hamiltonians, which captures problems in areas ranging from quantum stabilizer codes to quantum memories. {Roughly, ``ground state connectivity'' corresponds to the natural question: Given two ground states $\ket{\psi}$ and $\ket{\phi}$ of a local Hamiltonian $H$, is there an ``energy barrier'' (with respect to $H$) along any sequence of local operations mapping $\ket{\psi}$ to $\ket{\phi}$? We show that the complexity of this question} can range from \class{QCMA}-complete to \class{PSPACE}-complete, as well as \class{NEXP}-complete for an appropriately defined ``succinct'' version of the problem. As a result, we obtain a natural \class{QCMA}-complete problem, a goal which has generally proven difficult since the conception of \class{QCMA} over a decade ago. Our proofs rely on a new technical tool, the Traversal Lemma, which analyzes the Hilbert space a local unitary evolution must traverse under certain conditions.  We show that this lemma is {essentially} tight
with respect to the length of the unitary evolution in question.
\end{abstract}

\section{Introduction}\label{sect:intro}

Over the last fifteen years, the merging of condensed matter physics and computational complexity theory has given rise {to} a new field of study known as \emph{quantum Hamiltonian complexity}~\cite{O11,GHLS14}. The cornerstone of this field is arguably Kitaev's~\cite{KSV02} quantum version of the Cook-Levin theorem~\cite{C72,L73}, which says that the problem of estimating the ground state energy of a local Hamiltonian is complete for the class Quantum Merlin Arthur (\class{QMA}), where \class{QMA} is a natural generalization of NP. Here, a $k$-local Hamiltonian is an operator $H=\sum_i H_i$ acting on $n$ qubits, such that each local Hermitian constraint $H_i$ acts non-trivially on $k$ qubits. The \emph{ground state energy} of $H$ is simply the smallest eigenvalue of $H$, and the corresponding eigenspace is known as the \emph{ground space} of $H$.

 Kitaev's result spurred a long line of subsequent works on variants of the ground energy estimation problem (see, e.g.~\cite{O11,GHLS14} for surveys), known as the $k$-local Hamiltonian problem (\klh). For example, Oliveira and Terhal showed that LH remains \class{QMA}-complete in the physically motivated case of qubits arranged on a 2D lattice~\cite{OT05}. Bravyi and Vyalyi proved~\cite{BV05} that the \emph{commuting} variant of $2$-LH is in NP. More recently, the complexity of the version of $2$-LH in which large positive and negative weights on local terms are allowed\footnote{Note that certain physically motivated local Hamiltonian models, such as the Heisenberg anti-ferromagnet (see, e.g.,~\cite{GHLS14} for a definition), require unit weights on all constraints, and are thus not captured by the dichotomy theorem of~\cite{CM13}.} was characterized by Cubitt and Montanaro~\cite{CM13} in a manner analogous to Schaeffer's dichotomy theorem for Boolean satisfiability~\cite{S78}. Thus, $k$-LH has served as an excellent ``benchmark'' problem for delving into the complexity of problems encountered in the study of local Hamiltonians. Yet, one can also ask about the \emph{properties of the ground space} itself. For example, is it topologically ordered? Can we evaluate local observables against it (e.g. for non-degenerate ground state $\ket{\psi}$ and $2$-local observable $O$, can one estimate $\bra{\psi}I\otimes O\ket{\psi}$)? It is this direction which we pursue in this paper.

Specifically, in this paper we define a notion of \emph{connectivity} of the ground space of $H$, which roughly asks: Given ground states $\ket{\psi}$ and $\ket{\phi}$ of $H$ as input, are they ``connected'' through the ground space of $H$? Somewhat more formally, we have (see Section~\ref{sect:defns} for a formal definition):

\begin{definition}[Ground State Connectivity (\gscon) (informal)]\label{def:GSCONinformal}
Given as input a local Hamiltonian $H$ and two ground states $\ket{\psi}$ and $\ket{\phi}$ (represented succinctly via quantum circuits) of $H$, as well as parameters $m$ and $l$, does there exist a sequence of $l$-qubit unitaries $\left(U_i\right)_{i=1}^m$ such that:
\begin{enumerate}
    \item ($\ket{\psi}$ mapped to $\ket{\phi}$) $U_m\cdots U_1 \ket{\psi}\approx \ket{\phi}$, and
    \item (intermediate states have low energy) $\forall i\in [m]$, $U_i\cdots U_1\ket{\psi}$
    {has low energy with respect to $H$.}
\end{enumerate}
\end{definition}
\noindent In other words, \gscon~asks whether there exists a sequence of $m$ unitaries, each acting on (at most) $l$ qubits, mapping the initial state $\ket{\psi}$ to the final state $\ket{\phi}$ ``{through}'' the ground space of $H$. We stress that the parameters $m$ (i.e. number of unitaries) and $l$ (i.e. the locality of each unitary) are key; as we discuss shortly, depending on their setting, the complexity of \gscon~can vary greatly. (Note: While the most general formulation of \gscon\ above does not require intermediate states to lie exactly in the ground space of $H$, our QCMA-completeness result holds even if one requires all intermediate states to lie fully in the ground space (see Section~\ref{sect:GSCON_QCMA}).)


\paragraph{Physics Motivation.} The original inspiration for this work came from a recently active area in classical complexity theory on \emph{reconfiguration} problems (see \emph{Previous work} below for details). For example, the reconfiguration problem for 3SAT asks: Given a 3SAT formula $\phi$ and satisfying assignments $x$ and $y$ for $\phi$, does there exist a sequence of bit flips mapping $x$ to $y$, such that each intermediate assignment encountered is also a satisfying assignment for $\phi$?   Although the classical study of reconfiguration problems is arguably mostly interesting from a theoretical perspective (i.e. it is theoretically interesting to ask about the structure of the solution space of a 3SAT instance, but we are not aware of any practical applications), its quantum variant (i.e. $\gscon$) turns out to be physically relevant. {In particular, it corresponds to the question: Given two ground states $\ket{\psi}$ and $\ket{\phi}$ of a local Hamiltonian $H$, are $\ket{\psi}$ and $\ket{\phi}$ separated by an ``energy barrier'' (with respect to $H$ and sequences of local unitaries mapping $\ket{\psi}$ to $\ket{\phi}$)? Along these lines, we now {discuss} connections to \emph{quantum memories} and \emph{stabilizer codes}.}\\

\noindent {\emph{Quantum memories.} A key challenge in building quantum computers is the implementation of long-lived qubit systems. In low-temperature systems, one approach is to encode a qubit in the ground state of a gapped Hamiltonian with a degenerate ground space. Here, the degeneracy ensures {the} {ground space} has at least two basis states, logical $\ket{\widetilde{0}}$ and $\ket{\widetilde{1}}$, and the gap ensures that external noise does not (easily) take a ground state out of the ground space. However, this is not sufficient --- although environmental noise may not take {the state \emph{out} of the ground} space, it can still alter {the} state \emph{within} the {ground} space (e.g. inadvertently map $\ket{\widetilde{0}}$ to $\ket{\widetilde{1}}$). Thus, making the typical assumption that errors act locally, it should ideally not be possible for $\ket{\widetilde{0}}$ to be mapped to $\ket{\widetilde{1}}$ through the ground space via a sequence of local operations.} This is precisely the principle behind Kitaev's toy chain model~\cite{Kit01}, and the motivation behind the toric code~\cite{Kit03} (see also~\cite{KL09}). This notion of how ``robust'' a quantum memory is can thus be phrased as an instance of \gscon: Given a gapped Hamiltonian $H$, a ground state $\ket{\psi}$ {to which the} quantum memory is initialized, and an undesired ground state $\ket{\phi}$, is there a sequence of local errors mapping the state of our quantum memory through the ground space from $\ket{\psi}$ to $\ket{\phi}$?\\

\noindent {\emph{Stabilizer codes.} Roughly, a stabilizer code~\cite{G97} is a quantum error-correcting code defined by a set of commuting Hermitian operators, $S=\set{G_1, \ldots, G_k}$, such that $G_i\neq -I$ and $\snorm{G_i}\leq 1$ for all $G_i\in S$. The \emph{codespace} for $S$ is the set of all $\ket{\psi}$ satisfying $G_i \ket{\psi} = \ket{\psi}$ {for all $i \in [k]$}. In other words, defining $G_i^+$ as the projection onto the $+1$ eigenspace of $G_i$, the codespace is the ground space of the positive semidefinite Hamiltonian $H := \sum_{i=1}^k (I - G_i^+)$.
Typically, errors are assumed to occur on a small number of qubits at a time; with this assumption in place, the following is a {special case} of \gscon: Given $H$ and codewords $\ket{\psi}$ and $\ket{\phi}$, does there exist a sequence of at most $m$ local errors mapping $\ket{\psi}$ to $\ket{\phi}$, such that the entire error process is undetectable, i.e. each intermediate state remains in the codespace? (We leave the issue of how deep the connection between \gscon\ and stabilizer codes runs open. In particular, a nice question is whether GSCON for stabilizer codes can be solved \emph{efficiently}, i.e. in P. For comparison, solving for ground states of stabilizer code Hamiltonians is indeed in P~\cite{YB12}, whereas estimating ground state energies of general local Hamiltonians is QMA-complete~\cite{KSV02}.)


\paragraph{Results.} Having motivated \gscon, we now informally state our results.

\begin{theorem}[See Theorem~\ref{thm:qcmacomplete} for a formal statement]\label{thm:informalqcmacomplete}
    \gscon~for polynomially large $m$ (i.e. for polynomially many local unitaries $U$) and $l=2$ (i.e. $2$-qubit unitaries) is \class{QCMA}-complete.
\end{theorem}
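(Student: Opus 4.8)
The plan is to establish membership in \class{QCMA} and \class{QCMA}-hardness separately. For \textbf{membership}, the natural witness is a classical description of the sequence of $m$ two-qubit unitaries $(U_i)_{i=1}^m$; since $m$ is polynomial and each $U_i$ acts on only $2$ qubits, each $U_i$ can be specified (up to inverse-polynomial precision) by a constant number of bits, so the entire witness is polynomial-length and classical. Arthur's verification circuit then needs to check the two conditions of Definition~\ref{def:GSCONinformal}: (i) that $U_m\cdots U_1\ket{\psi}\approx\ket{\phi}$, and (ii) that every intermediate state $U_i\cdots U_1\ket{\psi}$ lies (approximately) in the ground space of $H$. Condition (i) is a standard swap-test / Hadamard-test style fidelity estimation, using the circuits for $\ket{\psi}$ and $\ket{\phi}$ given in the input together with the $U_i$'s. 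Condition (ii) is checked by preparing $\ket{\psi}$, applying a prefix $U_i\cdots U_1$, and measuring the energy of $H$ by sampling a random local term $H_j$ and estimating $\bra{\cdot}H_j\ket{\cdot}$; repeating over all prefixes $i\in[m]$ and all terms and averaging gives an estimate of $\max_i \bra{\psi}U_1^\dagger\cdots U_i^\dagger H U_i\cdots U_1\ket{\psi}$ to inverse-polynomial accuracy. The usual care is needed to set the promise gap so that completeness/soundness is preserved under the accumulated approximation errors, and to amplify. This places the problem in \class{QCMA}.

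For \textbf{hardness}, I would reduce from a \class{QCMA}-complete problem --- the most convenient being a ``circuit satisfiability'' flavored version, e.g.\ the problem of deciding whether a given poly-size quantum circuit $V$ accepts on \emph{some} classical input string $x\in\{0,1\}^n$ (with the usual completeness/soundness promise on acceptance probability). The idea is to build a local Hamiltonian $H$ on a register that includes a ``proof'' subregister holding $x$, a clock register, and computation/ancilla registers, using a Kitaev-style circuit-to-Hamiltonian construction so that the ground space consists (roughly) of history states of $V$ run on classical inputs that lead to acceptance. One then designs $\ket{\psi}$ and $\ket{\phi}$ to be two fixed, easily-describable states --- e.g.\ both supported on the $x=0^n$ history branch, or a ``blank'' state and a ``flagged accept'' state --- such that traversing the ground space from $\ket{\psi}$ to $\ket{\phi}$ via $2$-qubit unitaries forces the evolution to (a) write \emph{some} classical string $x$ into the proof register, (b) coherently run $V$ on it while staying in the low-energy space (which the propagation terms enforce), and (c) verify that $V$ accepts; the point is that this is possible iff a satisfying classical $x$ exists. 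The restriction to $l=2$ and polynomial $m$ is what makes the ``proof'' genuinely classical: with only $2$-qubit gates and polynomially many steps, the accessible portion of the ground space is limited, and one argues that any successful traversal essentially decomposes into writing down a classical string and then doing a (low-energy, hence essentially honest) computation.

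The \textbf{main obstacle} --- and the place where the \travlemmatwo{} is needed --- is the soundness direction of the hardness reduction: showing that if \emph{no} accepting classical input exists, then \emph{no} sequence of $m$ $2$-qubit unitaries can map $\ket{\psi}$ to $\ket{\phi}$ through the ground space. The danger is that a cheating prover uses the two-qubit unitaries to sneak the state through parts of Hilbert space that, while low-energy, are not ``honest history states,'' effectively simulating a quantum rather than classical witness, or exploiting superpositions over the proof register to interpolate between branches. The Traversal Lemma is exactly the tool to rule this out: it lower-bounds the ``amount'' of Hilbert space a short local unitary evolution must pass through, forcing the evolution that connects $\ket{\psi}$ to $\ket{\phi}$ to traverse a high-energy region unless an honest classical-witness-based path exists. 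So the crux of the proof is to set up the Hamiltonian $H$ and the endpoint states $\ket{\psi},\ket{\phi}$ so that the hypotheses of the Traversal Lemma apply --- identifying the relevant subspace that must be traversed, bounding its overlap with the ground space, and quantifying the resulting energy penalty against the promise gap of the source \class{QCMA} problem. Completeness (exhibiting an honest unitary sequence when an accepting $x$ exists) should be comparatively routine: write $x$ bit-by-bit with $1$-qubit gates, then simulate $V$'s gates one clock-tick at a time, checking that each step stays in the ground space of the circuit-to-Hamiltonian construction.
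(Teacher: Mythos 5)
Your membership argument is essentially the paper's: a classical list of the $m$ two-qubit unitaries (discretized via a net --- the paper uses an explicit $\epsilon$-pseudo-net with check and rounding procedures), followed by phase-estimation-based energy checks on every prefix state and a swap test against $\ket{\phi}$. That half is fine.

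The hardness half, however, has a genuine gap: you take $H$ to be (roughly) the Kitaev/Kempe--Regev Hamiltonian of the verifier $V$ itself, so that its ground space consists of accepting history states, and you pick $\ket{\psi},\ket{\phi}$ inside that space. This fails in several concrete ways. First, in a NO instance the circuit-to-Hamiltonian construction has \emph{no} low-energy states at all (its ground energy is at least $\beta$), so the reduction cannot output circuits preparing states $\ket{\psi},\ket{\phi}$ with $\bra{\psi}H\ket{\psi}\le\eta_1$, i.e.\ it does not produce a syntactically valid \gscon~instance satisfying the input promise. Second, your completeness path (``write $x$ bit-by-bit, then simulate $V$ one gate at a time'') does not stay in the ground space of a propagation Hamiltonian: partial computations are not history states, so every intermediate step would be penalized. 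Third, you never identify the pair of $2$-orthogonal subspaces containing $\ket{\psi}$ and $\ket{\phi}$ that the Traversal Lemma requires; with both endpoints in the same history branch (or a ``blank''/``flagged accept'' pair) no such structure is apparent. The paper's construction resolves all three issues at once with an idea your proposal is missing: it sets $H:=H'\otimes P$ where $H'$ is the (measured-proof) Kempe--Regev Hamiltonian and $P=I-\ketbra{000}{000}-\ketbra{111}{111}$ acts on a separate three-qubit \emph{GO} register, with $\ket{\psi}=\ket{0\cdots0}\ket{000}$ and $\ket{\phi}=\ket{0\cdots0}\ket{111}$. Then $\ket{\psi},\ket{\phi}$ are always exact ground states (so the instance is valid in both YES and NO cases); the honest prover can freely prepare the history state while GO is $000$ (the check is ``switched off''), flip the GO qubits, and uncompute; and soundness follows because the GO-$000$ and GO-$111$ subspaces are $2$-orthogonal, so the Traversal Lemma forces overlap $\ge(1/4m)^2$ with $I\otimes P$, which costs energy $\ge\beta/(16m^2)$ when $H'\succeq\beta I$. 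Without this switch-register mechanism (or an equivalent), your soundness sketch has nothing for the Traversal Lemma to bite on, and your completeness sketch is inconsistent with your own choice of $H$. A further, smaller omission: to rule out quantum witnesses you must modify $V$ to first copy (measure) the proof register into ancillas, which is what makes the Kitaev-type Hamiltonian sound against non-classical proofs in the NO case.
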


\noindent Here, \class{QCMA} is \class{QMA} except with a classical {witness}~\cite{AN02}. See Section~\ref{sect:defns} for a formal definition. Theorem~\ref{thm:informalqcmacomplete} says that determining whether there exists a polynomial-size quantum circuit mapping $\ket{\psi}$ to $\ket{\phi}$ through the ground space of $H$ is \class{QCMA}-complete.

\begin{theorem}[See Theorem~\ref{thm:pspacecomplete} for a formal statement]\label{thm:informalpspacecomplete}
    \gscon~for exponentially large $m$ (i.e. for exponentially many local unitaries $U$) and $l=1$ (i.e. $1$-qubit unitaries) is \class{PSPACE}-complete.
\end{theorem}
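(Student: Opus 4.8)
The plan is to prove containment in \class{PSPACE} and \class{PSPACE}-hardness separately.

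\textbf{Containment.} The observation that makes the problem tractable is that, since $l=1$, every prefix $U_i\cdots U_1$ of a sequence of single-qubit unitaries is again a tensor product of single-qubit unitaries --- single-qubit gates on distinct qubits commute and those on a common qubit compose --- so $U_i\cdots U_1=\bigotimes_{j=1}^{N}W_j^{(i)}$ for $W_j^{(i)}\in\unitary{\complex^2}$ and $N=\poly(n)$ the number of qubits. Thus every intermediate state lies in the orbit $\{\,(\bigotimes_j W_j)\ket{\psi} : W_j\in\unitary{\complex^2}\,\}$, which is parametrized by only $O(N)=\poly(n)$ real numbers and, for each fixed tuple, equals a $\poly(n)$-size circuit applied to $\ket{\psi}$ (itself a $\poly(n)$-size circuit); by contrast, for $l\ge 2$ exponentially many two-qubit gates realize arbitrary unitaries and the intermediate states need not be succinctly describable, so this approach is specific to $l=1$. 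Fix a net $\mathcal N$ of $\unitary{\complex^2}^{N}$ of granularity $2^{-\poly(n)}$, fine enough that replacing a tuple by its rounding perturbs the energy of the associated state and its overlap with $\ket{\phi}$ by less than half of the relevant promise gaps; each $p\in\mathcal N$ then has a $\poly(n)$-bit label and $\abs{\mathcal N}=2^{\poly(n)}$. Build the graph on $\mathcal N$ in which $p=(W_1,\dots,W_N)$ and $q$ are adjacent iff they differ on at most one coordinate and the state associated to $q$ has energy at most $\eta_1$ with respect to $H$; mark $p$ as a target if that state additionally has overlap at least $1-\eta_4$ with $\ket{\phi}$. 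Given the labels, adjacency and the target test are \class{PSPACE}-decidable: the relevant state is a $\poly(n)$-size circuit, the expectation of each $O(1)$-local term of $H$ against it is an expression over $2^{\poly(n)}$ amplitudes each computable in $\poly(n)$ space, and likewise for the overlap with $\ket{\phi}$ --- note that $\ket{\psi}$ and $\ket{\phi}$ are never written out explicitly, and that $\eta_1$ is an input threshold rather than the true ground energy. One checks that \gscon{} holds iff there is a walk of length at most $m$ in this graph from the rounding of $(I,\dots,I)$ to a target (realizing a net step $W_j\mapsto W_j'$ by the single-qubit unitary $W_j'W_j^{-1}$, and conversely rounding a genuine prefix decomposition), and this is in $\class{NPSPACE}=\class{PSPACE}$ by the recursive-midpoint argument, since labels are $\poly(n)$ bits, $m=2^{\poly(n)}$, and adjacency is \class{PSPACE}-checkable.

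\textbf{Hardness.} We reduce from a \class{PSPACE}-complete reconfiguration problem: e.g., given a $3$-CNF formula $\varphi$ and two satisfying assignments, decide whether one can be transformed into the other by single-variable flips that keep $\varphi$ satisfied throughout (equivalently, one may reduce from reachability between configurations of a reversible space-bounded Turing machine realized via a clock register). Encode the valid configurations as computational basis states and let $H=\sum_i\Pi_i$ be the $O(1)$-local Hamiltonian in which each $\Pi_i$ projects onto the basis states locally violating a constraint, so that the ground space is spanned exactly by the valid configurations and a single-qubit $X$-gate implements a legal single-step move between adjacent configurations (in the Turing-machine version one first rewrites each transition as a short sequence of single-bit flips and adds the intermediate configurations to the valid set). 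Taking $\ket{\psi}$ and $\ket{\phi}$ to be the two basis states in question, a ground-space-preserving sequence of single-qubit unitaries from $\ket{\psi}$ to $\ket{\phi}$ exists iff the two configurations are connected in the reconfiguration graph, and since the natural reconfiguration path has intermediate energies exactly $0$, one direction is immediate. For the converse one must rule out that superposition-creating single-qubit unitaries help, which we do by a rounding argument on the computational-basis support of the intermediate states: at each step every newly supported basis string is a single legal bit flip away from a string already supported (both endpoints being valid because both states lie in the ground space, and the reverse inclusion following from unitarity of the $U_i$), so by induction the final supported string --- the target configuration --- is classically reachable.

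\textbf{Main obstacle.} The principal difficulty is the \class{PSPACE} upper bound: one has to notice the tensor-product structure forced by $l=1$ (so that intermediate states remain $\poly(n)$-describable, which already fails for $l\ge 2$) and then manage the interlocking approximations --- net granularity versus the energy threshold and the overlap tolerance --- so that the finite reachability instance faithfully captures the continuous one, and verify the \class{PSPACE}-computability of the local expectations. On the hardness side the only subtle points are making the rounding argument robust to the (inverse-polynomial) promise gap, since intermediate states are near rather than exactly in the ground space, and the bookkeeping needed to make configuration transitions single-qubit-local.
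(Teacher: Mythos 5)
Your containment argument is essentially the paper's (Lemma~\ref{l:inPSPACE}): both exploit that $l=1$ forces every cumulative unitary to remain a tensor product of single-qubit unitaries with a polynomial-size description, both discretize with an inverse-exponentially fine net over single-qubit unitaries, and both finish with nondeterministic reachability over exponentially many poly-size configurations plus Savitch. One bookkeeping slip: your adjacency and target tests must use relaxed thresholds (e.g.\ $\eta_1+\Delta/2$ and $\eta_3+\Delta/2$), not $\eta_1$ and $\eta_4$ themselves, or rounding can destroy completeness; the paper uses $\eta_1+\Delta/3$ and $\eta_3+\Delta/4$ for exactly this reason.

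The genuine gap is in the NO case of your hardness reduction. The reduction itself is the paper's (from \stcon~\cite{GKMP06}, with each clause mapped to the diagonal projector onto its unique violating assignment), but your soundness argument --- induction on the computational-basis support, with every newly supported string one legal flip from an already supported string, ``both satisfying because both states lie in the ground space'' --- is valid only if the intermediate states lie \emph{exactly} in the ground space. The NO condition of \gscon\ only lets you assume each intermediate state has energy below $\eta_2>0$, so its support may contain unsatisfying strings of small amplitude; then neither endpoint of a newly supported string need be satisfying, and your induction says nothing about amplitude that tunnels, over $m=2^n$ steps, through low-amplitude unsatisfying strings from the reachable satisfying assignments into unreachable ones. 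Bounding how small this leakage can be kept as a function of $m$ is precisely the missing quantitative mechanism, and it is what the paper's Traversal Lemma (Lemma~\ref{l:traversal}) supplies: with $S$ the span of reachable satisfying assignments and $T$ the span of the rest, Hamming distance $\geq 2$ makes $S,T$ $1$-orthogonal, so some intermediate state has overlap at least $(1/(4m))^2=2^{-(2n+4)}$ with $P'=I-\Pi_S-\Pi_T$, and $H\succeq P'$ turns this into energy at least $\eta_2$. This also fixes the parameter regime: the gap $\Delta$ must be taken inverse exponential, and the tightness result (Theorem~\ref{thm:badtraversal}) shows the achievable leakage genuinely shrinks with the number of steps, so no $m$-independent ``robustness'' patch of your support induction can work; your closing remark about robustness to an inverse-polynomial promise gap is not something your argument (or the paper's) establishes here. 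You flag the issue as a subtlety, but it is the technical heart of the hardness proof, and the proposal does not resolve it.
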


\noindent Theorem~\ref{thm:informalpspacecomplete} says that determining whether there exists an exponential length sequence of {$1$-qubit unitaries} mapping $\ket{\psi}$ to $\ket{\phi}$ through the ground space of $H$ is \class{PSPACE}-complete.

Finally, in Section~\ref{sect:GSCON_nexp} we define a succinct variant of \gscon, called \succgscon, in which the Hamiltonian $H$ has a succinct circuit description, and the initial and final states $\ket{\psi}$ and $\ket{\phi}$ are product states. We show:

\begin{theorem}[See Theorem~\ref{thm:nexpcomplete} for a formal statement]\label{thm:informalnexpcomplete}
    \succgscon~for exponentially large $m$ (i.e. for exponentially many local unitaries $U$) and $l=1$ (i.e. $1$-qubit unitaries) is \class{NEXP}-complete.
\end{theorem}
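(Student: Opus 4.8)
A natural candidate solution for a \succgscon{} instance is the sequence $(U_i)_{i=1}^{m}$ itself, recorded as $m = 2^{\poly(n)}$ entries, each giving the address of the qubit acted on together with a $2 \times 2$ unitary specified to $\poly(n)$ bits of precision. The key structural observation is that, since $\ket{\psi}$ is a product state and each $U_i$ acts on a single qubit, \emph{every} intermediate state $U_i \cdots U_1 \ket{\psi}$ is again a product state, and so is described exactly by one single-qubit state per qubit, i.e.\ by exponentially many classical numbers. An \NEXP{} verifier can therefore simulate the sequence step by step while maintaining this product description; after each step it certifies approximate ground-space membership by evaluating $\bra{\chi} H_j \ket{\chi}$ for each local term $H_j$ --- each such number depends only on the (product) reduced state on the $O(1)$ qubits in the support of $H_j$, and $H_j$ itself is produced on demand from the succinct circuit for $H$ --- and checking that the total is below the relevant threshold; finally it checks that $U_m \cdots U_1 \ket{\psi}$ is close to $\ket{\phi}$. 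The running time is (exponentially many steps) $\times$ (exponentially many terms) $\times\, \poly(n)$, which is exponential. Rounding the unitaries to $\poly(n)$ bits is without loss of generality: composing $m = 2^{\poly(n)}$ unitaries inflates a per-gate error $\varepsilon$ to at most $m\varepsilon$, so taking $\varepsilon = 2^{-\poly(n)}$ keeps every state, and hence every energy, within the promise gap.

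\textbf{\NEXP-hardness.} I would reduce from \osat{}, the succinct --- and hence \NEXP-complete --- variant of $3$SAT, in which a $\poly(n)$-size circuit $C$ implicitly defines a $3$SAT formula $\phi$ on $2^{n}$ variables (the structure of clause $j$ being read off from $C(j)$), and the question is whether $\phi$ is satisfiable. From $C$ one builds a succinct description of a positive semidefinite Hamiltonian $H$ acting on a ``variable register'' of $2^{n}$ qubits together with a small ``clock'' register: for each clause $j$, $H$ contains a local term penalizing assignments that violate clause $j$, \emph{but only when the clock sits in a designated ``active'' configuration} $c^{*}$; in addition $H$ contains clock terms of the usual Kitaev/domain-wall type, which make illegal clock configurations costly and thereby force the clock to pass \emph{through} $c^{*}$ in order to move from a ``source'' configuration to a ``sink'' configuration. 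Take $\ket{\psi}$ and $\ket{\phi}$ to be the product states with the variable register in a fixed value ($0^{2^{n}}$, say) and the clock in the source, respectively the sink, configuration, with $m = 2^{\poly(n)}$ and $l = 1$. The description of $H$ is $\poly(n)$-size because clause $j$ is computable from $C$, so this is a valid \succgscon{} instance.

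\textbf{Completeness and soundness.} If $\phi$ has a satisfying assignment $s$, a valid path is: (i) flip the variable register from $0^{2^{n}}$ to $s$ one qubit at a time with the clock held at the source (no clause term is active, so the state stays in the ground space); (ii) advance the clock through $c^{*}$ to the sink with the variable register held at $s$ (no clause is violated, so the active clause terms contribute nothing); (iii) flip the variable register back to $0^{2^{n}}$ with the clock held at the sink. This uses $2^{\poly(n)}$ single-qubit unitaries and ends at $\ket{\phi}$. Conversely, suppose $\phi$ is unsatisfiable. Any valid path from $\ket{\psi}$ to $\ket{\phi}$ must move the clock from the source region to the sink region, which forces the state, at some point, to place weight $\Omega(1)$ on clock configuration $c^{*}$. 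But when the clock is near $c^{*}$ the clause terms contribute at least $(\text{weight on } c^{*}) \cdot \min_{x}(\#\text{clauses } x \text{ violates}) \geq (\text{weight on } c^{*}) \cdot 1$, since every assignment violates some clause; once the weight on $c^{*}$ is bounded below by a constant this exceeds the ground-space threshold, contradicting the requirement that every intermediate state stay close to the ground space. Hence $\ket{\phi}$ is unreachable and the instance is a ``no''.

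\textbf{Main obstacle.} The delicate point is the soundness claim that the evolution must put $\Omega(1)$ weight on $c^{*}$: the intermediate states are genuine quantum states, so the clock can be spread across many configurations and entangled with the variable register, and one must rule out any ``tunnelling'' that crosses from the source region to the sink region without ever populating $c^{*}$. For single-qubit unitaries this can be pinned down combinatorially --- a unary/domain-wall clock encoding forces the flips that advance the clock to be performed one at a time, so that immediately after the flip that brings the clock to $c^{*}$ the state is $\Omega(1)$-close to having the clock at $c^{*}$ --- and more robustly by the Traversal Lemma, which lower-bounds how much of the relevant Hilbert space a local unitary evolution must traverse; the real work is engineering the clock gadget and the clause-gating so that one of these arguments yields the required energy lower bound along the whole path. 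Two secondary points also need attention: checking that the clock sub-Hamiltonian is itself ``connected'', so that step (ii) of completeness is genuinely realizable by single-qubit unitaries, and confirming the discretization bookkeeping in the completeness direction (the promised path need not use exactly-representable unitaries, but rounding it as in the containment argument is harmless given the promise gap).
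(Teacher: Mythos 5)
Your containment argument is essentially the paper's: the paper also exploits the fact that $1$-local unitaries keep $\ket{\psi}$ a product state, simulates the sequence while storing one cumulative single-qubit operator per qubit, evaluates each local term $\bra{\chi}H_j\ket{\chi}$ on demand from the oracle circuit $C_H$, and handles precision by discretizing the unitaries (via the net of Lemma~\ref{l:net}) with the same $m\varepsilon$-type accumulation bound; see Lemma~\ref{l:inNEXP}, which simply runs Algorithm~\ref{alg:pspace} in exponential time and space. That direction is fine.

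The hardness direction, however, has a genuine gap, and it is exactly the step you flag as the ``main obstacle.'' You reduce from \osat, as the paper does, but your gadget gates the clause terms on a single designated clock configuration $c^*$, and soundness then rests on the claim that any valid evolution must at some point place $\Omega(1)$ weight on $c^*$. This is not proven, and it does not follow from the Traversal Lemma: Lemma~\ref{l:traversal} lower-bounds the overlap with the \emph{entire} complement $I-\Pi_S-\Pi_T$ of the two endpoint subspaces, and only by $\Theta(1/m^2)$, not by a constant (Theorem~\ref{thm:badtraversal} shows the $m$-dependence is unavoidable in general). Worse, in your construction the penalized region is a strict subset of that complement: a state whose clock sits at a legal intermediate configuration other than $c^*$ has zero energy regardless of the assignment, so even an $\Omega(1/m^2)$ overlap with the complement yields no energy lower bound; forcing localization at $c^*$ would require a separate spectral/combinatorial analysis of the clock gadget that you defer. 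The paper's construction (Lemma~\ref{l:nexphard}) sidesteps all of this: the oracle Hamiltonian returns $H_i\otimes P$ with $P=I-\kb{00}-\kb{11}$ on a two-qubit \emph{GO} register, $\ket{\psi}=\ket{0}^{\otimes 2^n}\ket{00}$, $\ket{\phi}=\ket{0}^{\otimes 2^n}\ket{11}$, and inverse-exponential thresholds $\eta_2=1/(16m^2)$, $\Delta\in O(1/m^2)$ (allowed since $\Delta$ is an input parameter). Then unsatisfiability gives $\sum_i H_i\succeq I$ on the variable register, the \emph{GO}$=00$ and \emph{GO}$=11$ subspaces are $1$-orthogonal, and the Traversal Lemma's $(1/(4m))^2$ bound applies to the whole penalized complement, immediately giving energy at least $\eta_2$. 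To repair your proof, either adopt this endpoint-gating with the $\Theta(1/m^2)$ promise gap, or actually prove the $\Omega(1)$-weight claim for your clock gadget; as written, that step is missing and your constant-threshold soundness argument does not go through.
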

\noindent As Theorem~\ref{thm:informalnexpcomplete} follows from techniques similar to Theorems~\ref{thm:informalqcmacomplete} and~\ref{thm:informalpspacecomplete}, we give only a proof sketch of it in Section~\ref{sect:GSCON_nexp}.

We remark that the choices of $m$ and $l$ above are key to our results. For example, Theorem~\ref{thm:informalqcmacomplete} holds for any constant $l\geq 2$ (see remarks after its proof); however, for $l\in \omega(\log N)$ (for $N$ the input size) the problem is likely no longer in QCMA, as the prover cannot send a classical description of each local unitary. Similarly, attempting to extend Theorem~\ref{thm:informalpspacecomplete} by setting $l=2$ appears problematic, as then any intermediate state in the unitary evolution seems to require exponential space to represent. This {modified} problem (i.e. Theorem~\ref{thm:informalpspacecomplete} with $l=2$) is, however, in NEXP, {and we conjecture} it to be NEXP-complete.


\paragraph{Proof techniques.} Our results rely on a new technical lemma called the Traversal Lemma, as well as the use of $\epsilon$-nets and  $\epsilon$-pseudo-nets (also known as \emph{improper covering sets}). We now outline the proof techniques behind Theorem~\ref{thm:qcmacomplete} (QCMA-completeness) in more detail; using similar ideas, Theorems~\ref{thm:pspacecomplete} (PSPACE-completeness) and~\ref{thm:nexpcomplete} (NEXP-completeness) follow analogously.

{Specifically, we outline both} \class{QCMA}-hardness and containment in \class{QCMA}. Beginning with the former, the central idea behind the construction is as follows. Let $V$ be an arbitrary \class{QCMA} verification circuit, and let $H'$ be the local Hamiltonian obtained from $V$ via Kitaev's circuit-to-Hamiltonian construction~\cite{KSV02} (see Lemma~\ref{l:kemperegev} for Kempe and Regev's $3$-local version~\cite{KR03}). Then, we design the input Hamiltonian $H$ to \gscon~so that ``traversing its ground space'' is equivalent to simulating the following protocol (i.e. an honest prover acts as follows): Suppose $H'$ acts on register $h$. Add three additional ancilla qubits (which we call $GO$ qubits), and prepare initial state $\ket{\psi}=\ket{0\cdots 0}_h\ket{000}_G$. Now, using two-qubit unitaries, prepare the ground state of $H'$ in register $h$ {(which can be done efficiently since $V$ is a QCMA circuit)}. Then, flip the three GO qubits using local Pauli $X$ gates to obtain $\ket{111}$ in $G$, and uncompute the history state in $h$ to obtain target state $\ket{\phi}=\ket{0\cdots 0}_h\otimes \ket{111}_G$. To enforce this honest behavior, we use $5$-local Hamiltonian $H$:
\begin{equation}
H := H'_h \otimes P_G \quad \text{for} \quad P := I - \kb{000} - \kb{111}.
\end{equation}
Note that the initial and final states $\ket{\psi}$ and $\ket{\phi}$ lie in the null space of $H$, and flipping a GO qubit ``activates'' the check Hamiltonian $H'_h$ which checks if $h$ has a valid and accepting history state. The pressing question is whether for a NO input, a cheating prover can somehow deviate from this protocol by flipping all three GO qubits using $2$-qubit unitaries \emph{without} ``activating'' $H'$. To rigorously show this is impossible, we state and prove our main technical tool, the Traversal Lemma {(Lemma~\ref{l:traversal})}, which roughly says that to transition from $\ket{000}$ to $\ket{111}$ in $G$ using $2$-qubit unitaries, an intermediate state in the evolution must have high overlap with $P_G$.

Let us elaborate further on the Traversal Lemma, which analyzes the Hilbert space a local unitary evolution must traverse in certain settings. Specifically, define two states $\ket{\psi}$ and $\ket{\phi}$ as \emph{$k$-orthogonal} if for any $k$-local unitary $U$, we have $\bra{\phi}U\ket{\psi}=0$. In other words, any application of a $k$-local unitary leaves $\ket{\psi}$ and $\ket{\phi}$ orthogonal. Then, the Traversal Lemma {roughly} says that for $k$-orthogonal states $\ket{\psi}$ and $\ket{\phi}$, if we wish to map $\ket{\psi}$ to $\ket{\phi}$ via a sequence of $k$-local unitaries, then at some step in this evolution we must leave the space spanned by $\ket{\psi}$ and $\ket{\phi}$, i.e. we must have {``large'' inner product with $I - \kb{\psi} - \kb{\phi}$}. (Here, ``large'' means the inner product scales at least as $\Omega(1/m^2)$, for $m$ the number of $k$-local unitaries applied.)
To prove the Traversal Lemma, we use a combination of the Gentle Measurement Lemma of Winter~\cite{W99} and an idea inspired by the quantum Zeno effect.

As the Traversal Lemma is a key technical contribution of this paper, we also study its properties further (i.e. independently of its application to our complexity theoretic results). For example, we show the lemma is tight up to a polynomial factor in the number of unitaries, $m$. To do so, we give a pair of ${2}$-orthogonal states $\ket{\psi}$, $\ket{\phi}$ with the following property: For any $0<\Delta<1/2$, we construct a carefully selected sequence of $O(1/\Delta^2)$ $2$-local unitaries mapping $\ket{\psi}$ to $\ket{\phi}$, such that at any point in this mapping, the
{inner product with $I - \kb{\psi} - \kb{\phi}$}
is at most $\Delta$. We also delve further into the study of $k$-orthogonality, including giving an intuitive characterization of the notion.

Finally, containment of \gscon~in \class{QCMA} is shown via a simple and natural verification procedure, wherein the prover sends a classical description of the local unitaries $\set{U_i}$, and the verifier prepares many copies of the starting, final, and all intermediate states and checks that all required properties hold. To make this rigorous\footnote{For clarity, $\epsilon$-pseudo-nets are used to avoid precision issues for unitaries containing irrational numbers. Alternatively, one could consider fixing a universal gate set, which unlike $\epsilon$-pseudo-nets, would make our QCMA containment result (Lemma~\ref{l:inQCMA}) dependent on the choice of gate set. This is perhaps not ideal, as the number of unitaries $m$ in Lemma~\ref{l:inQCMA} is polynomial, like the overhead required to switch from one universal gate set to another.}, we construct an \emph{$\epsilon$-pseudo-net}, which allows us to easily discretize the space of $d$-dimensional unitary operators for any $d\geq 2$. Such pseudo-nets come with a tradeoff: On the negative side, they contain non-unitary operators. On the positive side, they are not only straightforward to construct, but more importantly, they have the following property: Given any element $A$ in the pseudo-net, there are efficient \emph{explicit} protocols for checking if $A$ is close to unitary, and if so, for ``rounding'' it to such a unitary.


\paragraph{Previous work.} To the best of our knowledge, our work is the first to study reconfiguration in the quantum setting. In contrast, in the classical setting, such problems have recently received much attention. In particular, our work was inspired by the paper of Gopalan, Kolaitis, Maneva, and Papadimitriou~\cite{GKMP06}, which shows that determining whether two solutions $x$ and $y$ of a Boolean formula are connected through the solution space is either in P or is \class{PSPACE}-complete, depending on the constraint types allowed in the formula. (Note: A minor error in Reference~\cite{GKMP06} was recently corrected in the work of Schwerdtfeger~\cite{S13}.) More recently, Mouawad, Nishimura, Pathak and Raman~\cite{MNPR14} studied the variant of this problem in which one seeks the \emph{shortest} possible Boolean reconfiguration path; they show this problem is either in P, NP-complete, or \class{PSPACE}-complete. In this sense, our definition of \gscon~can be thought of as a quantum generalization of the problem studied in Reference~\cite{MNPR14}. More generally, since the work of Reference~\cite{GKMP06}, a flurry of papers have appeared studying reconfiguration for problems ranging from Boolean satisfiability to vertex cover to graph coloring~\cite{CHJ08,BC09, BJLPP11,CHJ11,FHHH11,IDHPSUU11,B12,IKD12,IKOZ12,KMM12,S13,BB13,MNRSS13,MNPR14,MNR14}.


\paragraph{Significance to complexity theory.} We now discuss the motivation behind \gscon~from a complexity theoretic perspective. We begin by focusing on $\class{QCMA}$, which is a natural class satisfying $\class{MA}\subseteq \class{QCMA}\subseteq\class{QMA}$. Although $\class{QCMA}$ was introduced over a decade ago by Aharonov and Naveh~\cite{AN02}, we still have an unfortunately small number of complete problems for it. In particular, to the best of our knowledge, the following is an exhaustive list at the time of writing:
\begin{itemize}
	\item Does a given local Hamiltonian have an efficiently preparable ground state~\cite{WJB03}?
	\item Does a given quantum circuit act almost as the identity on computational basis states~\cite{WJB03}?
	\item Given a braid, can it be conjugated by another braid from a given class such that
	the Jones polynomial of its plat closure is nearly maximal~\cite{WY08}?
	\item Given a continuous-time classical random walk on a restricted class of graphs,
	and time $T$, do there exist vertices $i$ and $j$ such that the difference of the
	probabilities of being at $i$ and $j$ is at least $c\cdot\exp(-\mu T)$~\cite{JW06}?
	\item Given a quantum circuit $C$ accepting a non-empty monotone set, what is the
	smallest Hamming weight string accepted by $C$~\cite{GK12}?
\end{itemize}
In this regard, the pursuit of natural complete problems for QCMA has arguably proven rather difficult. Our results add a new, physically-motivated problem to the short list of QCMA-complete problems.

Second, a common focus in quantum complexity theory has been the problem of estimating the ground state energy of a given local Hamiltonian (see, e.g.~\cite{GLSW14} for a survey). However, less attention has been given to the complexity of determining other properties of local Hamiltonians. For example, Brown, Flammia, and Schuch showed~\cite{BFS11} that computing the ground state degeneracy and density of states for a local Hamiltonian is $\class{\#BQP}$-complete. Gharibian and Kempe showed~\cite{GK12} that determining the smallest subset of interaction terms of a given local Hamiltonian which yields a high energy ground space is cq-${\rm\Sigma_2}$-complete. Ambainis has shown~\cite{A14} (among other results) that evaluating local observables against a local Hamiltonian is  ${\rm{P}^{QMA[\log n]}}$-complete, and that determining the spectral gap of a local Hamiltonian is in ${\rm{P}^{QMA[\log n]}}$. Continuing in this vein, our work initiates a new direction of study regarding properties of local Hamiltonians beyond estimating the ground state energy, namely the study of ground state connectivity.

Finally, regarding the use of our proof techniques in the study of quantum algorithms and verification procedures, we hope the Traversal Lemma may prove useful in its own right. For example, in quantum adiabatic algorithms, it is {often} notoriously difficult to understand how a quantum state evolves in time from an easy-to-prepare initial state to some desired final state. The Traversal Lemma gives us a tool for studying the behaviour of such evolutions, playing a crucial role in our analysis here. We remark, however, that in quantum adiabatic evolution, the Hamiltonian itself changes with time, whereas here our Hamiltonian is fixed and we apply local unitary gates to our quantum state.


\paragraph{Organization.} This paper is organized as follows. In Section~\ref{sect:defns}, we state relevant notation, definitions, and useful known results. Section~\ref{sect:nets} constructs $\epsilon$-nets and $\epsilon$-pseudo-nets over unitary operators, {which are} used in Sections~\ref{sect:GSCON_QCMA},~\ref{sect:GSCON_pspace}
and~\ref{sect:GSCON_nexp}
for showing containment of \gscon~in
\class{QCMA},
\class{PSPACE}, and
\class{NEXP}, respectively. Section~\ref{sect:traversal} introduces the notion of $k$-orthogonality and states and proves the Traversal Lemma, which is used in Sections~\ref{sect:GSCON_QCMA},~\ref{sect:GSCON_pspace}, and~\ref{sect:GSCON_nexp} to show \class{QCMA}-hardness, \class{PSPACE}-hardness, and  \class{NEXP}-hardness of \gscon. Section~\ref{sect:tightness} shows our result regarding tightness of the Traversal Lemma and Section~\ref{app:prop} studies the properties of $k$-orthogonality further. We conclude and state open problems in Section~\ref{sect:conclusion}.

\section{Preliminaries} \label{sect:defns}

\paragraph{Notation.} The notation $:=$ is used to indicate a definition. Given $x\in\set{0,1}^n$, $\ket{x}\in(\complex^2)^{\otimes n}$ denotes the computational basis state labeled by $x$. For a vector $\ket{v}$, define its Euclidean norm as $\enorm{\ket{v}}:=(\sum_i \abs{v_i}^2)^{1/2}$ and its infinity norm as $\snorm{\ket{v}}:=\max_{i}\abs{v_i}$. For complex Euclidean space $\spa{X}$, let $\lin{\spa{X}}$, $\herm{\spa{X}}$ and $\unitary{\spa{X}}$ denote the sets of linear, Hermitian and unitary operators acting on $\spa{X}$, respectively. We use the following matrix norms: $\mnorm{A}:=\max_{ij}\abs{A(i,j)}$, the spectral norm $\snorm{A} := \max\{\norm{A\ket{v}}_2 : \norm{\ket{v}}_2 = 1\}$, {the} trace norm $\trnorm{A}:=\trace{\sqrt{A^\dagger A}}$, {and the Frobenius norm
$\norm{A}_{\mathrm{F}} := \sqrt{\trace(A^\dagger A)}$}. The Hilbert-Schmidt or trace inner product between operators $A$ and $B$ is $\langle A,B\rangle := \trace(A^\dagger B)$.  The set of natural numbers is $\natural$, and $[m]:=\set{1,\ldots, m}$. Throughout this paper, we treat the local dimension $d$ of quantum systems as a constant.


\paragraph{Definitions.} We now formally define the problem studied in this paper. (To ease parsing of the definition, the input parameters are highlighted in maroon online.)

\begin{definition}[Ground State Connectivity (\gscon $(\red{H},\red{k},\red{\aaa},\red{\bbb},\red{\ccc},\red{\ddd},\red{\Delta},\red{l},\red{m},\red{U_\psi},\red{U_\phi})$)]\label{def:GSCON}
~
\begin{mylist}{\parindent}
\item Input parameters:
    \begin{enumerate}
        \item \red{$k$}-local Hamiltonian $\red{H}=\sum_i H_i$ acting on $n$ qubits with $H_i \in \herm{(\complex^2)^{\otimes \red{k}}}$ satisfying $\snorm{H_i}\leq 1$.

        \item $\red{\aaa},\red{\bbb},\red{\ccc},\red{\ddd}, \red{\Delta}\in\real$, and integer $\red{m}\geq0$, such that $\red{\bbb}-\red{\aaa}\geq \red{\Delta}$ and $\red{\ddd}-\red{\ccc}\geq \red{\Delta}$.

        \item Polynomial size quantum circuits \red{$U_\psi$} and \red{$U_\phi$} generating ``starting'' and ``target'' states $\ket{\psi}$ and $\ket{\phi}$ (starting from $\ket{0}^{\otimes n}$), respectively, satisfying $\bra{\psi}\red{H}\ket{\psi}\leq \red{\aaa}$ and $\bra{\phi}\red{H}\ket{\phi}\leq \red{\aaa}$.
    \end{enumerate}
\item Output:
\begin{enumerate}
    \item If there exists a sequence of \red{$l$}-local unitaries $(U_{i})_{i=1}^m \in\unitary{\complex^2}^{\times \red{m}}$ such that:
    \begin{enumerate}
        \item (Intermediate states remain in low energy space) For all $i\in [\red{m}]$ and intermediate states ${\ket{\psi_i}:=U_i\cdots U_2U_1\ket{\psi}}$, one has $\bra{\psi_i}\red{H}\ket{\psi_i}\leq \red{\aaa}$, and
        \item (Final state close to target state) $\enorm{ U_{\red{m}} \cdots U_1 \ket{\psi}-\ket{\phi}} \leq \red{\ccc}$,
    \end{enumerate}
    then output YES.
    \item If for all $\red{l}$-local sequences of unitaries $(U_{i})_{i=1}^{\red{m}}\in\unitary{\complex^2}^{\times \red{m}}$, either:
    \begin{enumerate}
        \item (Intermediate state obtains high energy) There exists $i\in [\red{m}]$ and an intermediate state ${\ket{\psi_i}:=U_i\cdots U_2U_1\ket{\psi}}$, such that $\bra{\psi_i}\red{H}\ket{\psi_i}\geq \red{\bbb}$, or
        \item (Final state far from target state) $\enorm{ U_{\red{m}} \cdots U_1 \ket{\psi}-\ket{\phi}} \geq \red{\ddd}$,
    \end{enumerate}
    then output NO.
\end{enumerate}
\end{mylist}
\end{definition}

\noindent A few remarks are in order. First, in the Hamiltonian complexity literature the gap size $\Delta$ for energy levels of local Hamiltonians is often taken to be inverse polynomial. Some of our results require this gap to be exponentially small. Allowing $\Delta$ to be specified as input thus allows us to precisely formulate such results. Second, the circuits $U_\psi$ and $U_\phi$ are assumed to be given in terms of $1$ and $2$-qubit unitary gates. Third, all input parameters are specified with rational entries, each using $O(\poly(n))$ bits of precision. Fourth, as alluded to in the introduction, one can consider the special case of \gscon\ in which all states $\ket{\psi_i}$ are \emph{exactly} in the ground space of $H$; let us briefly define this variant formally, as our proof techniques for QCMA-completeness (Section~\ref{sect:GSCON_QCMA}) also apply in this special case.

\begin{definition}[Frustration-Free \gscon (\ffgscon $(\red{H},\red{k},\red{\bbb},\red{\ccc},\red{\ddd},\red{\Delta},\red{l},\red{m},\red{U_\psi},\red{U_\phi})$)]\label{def:FFGSCON}
Defined as \gscon\ with positive semidefinite $H$ and $\aaa=0$ (i.e. $H$ is frustration-free and the starting state $\ket{\psi}$, final state $\ket{\phi}$, and all intermediate states are \emph{exactly} in the ground space of $H$.)
\end{definition}

For completeness, we next give a formal definition of the complexity class \class{QCMA}~\cite{AN02} (also known as Merlin-Quantum-Arthur (MQA)~\cite{W09_2}.

\begin{definition}[\class{QCMA}]\label{def:QCMA}
    A promise problem $A=(\ayes,\ano)$ is in \class{QCMA} if and only if there exist polynomials $p$, $q$ and a polynomial-time uniform family of quantum circuits $\set{Q_n}$, where $Q_n$ takes as input a string $x\in\Sigma^*$ with $\abs{x}=n$, a classical proof ${y}\in \set{0,1}^{\otimes p(n)}$, and $q(n)$ ancilla qubits in state $\ket{0}^{\otimes q(n)}$, such that:
    \begin{itemize}
    \item (Completeness) If $x\in\ayes$, then there exists a proof $y\in\set{0,1}^{\otimes p(n)}$ such that $Q_n$ accepts $(x,{y})$ with probability at least $2/3$.
    \item (Soundness) If $x\in\ano$, then for all proofs ${y}\in \set{0,1}^{\otimes p(n)}$, $Q_n$ accepts $(x,{y})$ with probability at most $1/3$.
    \end{itemize}
\end{definition}


\paragraph{Useful known results.} We next state known results which prove useful in this paper. The first of these is the Gentle Measurement Lemma of Winter~\cite{W99}; the specific variant we state below is Lemma 9.4.2 from the textbook of Wilde~\cite{W13}.

\begin{lemma}[Gentle Measurement Lemma~\cite{W99}, as stated in Lemma 9.4.2 of~\cite{W13}]\label{l:gentle}
    Let $\rho\in\lin{\complex^d}$ be a density operator and $O\preceq\Lambda\preceq I$ a measurement operator for $\Lambda\in\lin{\complex^d}$, such that $\tr(\Lambda\rho)\geq 1-\epsilon$. Then, $\trnorm{\rho-\sqrt{\Lambda}\rho\sqrt{\Lambda}}\leq 2\sqrt{\epsilon}$.
\end{lemma}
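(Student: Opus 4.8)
The statement is a standard fact (it is quoted verbatim from~\cite{W99,W13}), so the plan is simply to record a self-contained argument. The strategy has two parts: first reduce to the case where $\rho$ is pure, and then dispatch the pure case by a triangle-inequality estimate on the trace norm of a difference of rank-one operators.

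\textbf{Reduction to pure states.} I would write a spectral decomposition $\rho=\sum_x p_x\kb{\psi_x}$ with $p_x\geq 0$ and $\sum_x p_x=1$, and set $\epsilon_x:=1-\bra{\psi_x}\Lambda\ket{\psi_x}$; note $\epsilon_x\geq 0$ since $\Lambda\preceq I$, and $\sum_x p_x\epsilon_x = 1-\tr(\Lambda\rho)\leq\epsilon$. Using that $\rho\mapsto\sqrt{\Lambda}\rho\sqrt{\Lambda}$ is linear, the triangle inequality for $\trnorm{\cdot}$, and finally the concavity of $t\mapsto\sqrt{t}$ (Jensen's inequality), I get
\[
\trnorm{\rho-\sqrt{\Lambda}\rho\sqrt{\Lambda}}\;\leq\;\sum_x p_x\trnorm{\kb{\psi_x}-\sqrt{\Lambda}\kb{\psi_x}\sqrt{\Lambda}}\;\leq\;\sum_x p_x\cdot 2\sqrt{\epsilon_x}\;\leq\;2\sqrt{\textstyle\sum_x p_x\epsilon_x}\;\leq\;2\sqrt{\epsilon},
\]
provided I can prove the pure-state bound $\trnorm{\kb{\psi}-\sqrt{\Lambda}\kb{\psi}\sqrt{\Lambda}}\leq 2\sqrt{1-\bra{\psi}\Lambda\ket{\psi}}$ for every unit vector $\ket{\psi}$. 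So it remains to establish this pure-state bound.

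\textbf{The pure-state case.} Fix a unit vector $\ket{\psi}$ and set $\ket{\phi}:=\sqrt{\Lambda}\ket{\psi}$ (unnormalized), so that $\sqrt{\Lambda}\kb{\psi}\sqrt{\Lambda}=\kb{\phi}$. I would write $\kb{\psi}-\kb{\phi}=\ket{\psi}(\bra{\psi}-\bra{\phi})+(\ket{\psi}-\ket{\phi})\bra{\phi}$, apply the triangle inequality, and use $\trnorm{\ket{a}\bra{b}}=\enorm{\ket{a}}\,\enorm{\ket{b}}$ for rank-one operators; since $\enorm{\ket{\psi}}=1$ and $\enorm{\ket{\phi}}=\enorm{\sqrt{\Lambda}\ket{\psi}}\leq 1$ (as $\sqrt{\Lambda}\preceq I$), this gives $\trnorm{\kb{\psi}-\kb{\phi}}\leq 2\enorm{\ket{\psi}-\ket{\phi}}$. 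Expanding the norm (all inner products here are real, the operators being Hermitian and positive semidefinite),
\[
\enorm{\ket{\psi}-\ket{\phi}}^2 = 1 - 2\bra{\psi}\sqrt{\Lambda}\ket{\psi} + \bra{\psi}\Lambda\ket{\psi} \leq 1 - \bra{\psi}\Lambda\ket{\psi},
\]
where the inequality uses the operator inequality $\sqrt{\Lambda}\succeq\Lambda$, valid because $0\preceq\Lambda\preceq I$ forces the eigenvalues of $\Lambda$ to lie in $[0,1]$, on which $\sqrt{t}\geq t$. Since $1-\bra{\psi}\Lambda\ket{\psi}$ is the relevant $\epsilon_\psi$, combining the two displays yields the pure-state bound and hence the lemma.

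\textbf{Main obstacle.} There is no serious obstacle, as this is textbook material; the only points requiring care are that the mixed-to-pure reduction genuinely needs the concavity of the square root (a plain average of $\sqrt{\epsilon_x}$ would be too weak), and that both $\enorm{\ket{\phi}}\leq 1$ and $\sqrt{\Lambda}\succeq\Lambda$ hinge on the hypothesis $\Lambda\preceq I$ — which is therefore the assumption one must not inadvertently drop.
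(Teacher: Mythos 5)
Your proof is correct. Note, however, that the paper does not prove this statement at all: it is imported verbatim as a known result (Winter's Gentle Measurement Lemma, in the form of Lemma 9.4.2 of Wilde's textbook), so there is no in-paper argument to compare against. Your self-contained derivation is sound on all the delicate points: the rank-one splitting $\kb{\psi}-\kb{\phi}=\ket{\psi}(\bra{\psi}-\bra{\phi})+(\ket{\psi}-\ket{\phi})\bra{\phi}$ together with $\trnorm{\ketbra{a}{b}}=\enorm{\ket{a}}\enorm{\ket{b}}$ (the same device the paper uses for the Frobenius-norm bound in Equation~(\ref{eqn:fnormjj})), the operator inequality $\sqrt{\Lambda}\succeq\Lambda$ from $0\preceq\Lambda\preceq I$, and the concavity step $\sum_x p_x\sqrt{\epsilon_x}\leq\sqrt{\sum_x p_x\epsilon_x}$ needed to pass from the pure to the mixed case. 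One small remark: you could not simply invoke the paper's Equation~(\ref{eqn:enorm}) for the pure-state step, since $\sqrt{\Lambda}\ket{\psi}$ is not normalized, so your direct rank-one estimate is indeed the right move; the textbook proof instead goes through fidelity, purifications, and the Fuchs--van de Graaf inequalities, so your route is the more elementary of the two.
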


We next recall Kempe and Regev's $3$-local circuit-to-Hamiltonian construction~\cite{KR03}, which maps a given quantum circuit $V=V_L\cdots V_1$ (where each $V_i$ is at most $2$-local) acting on a \emph{proof} register (register $A$) and \emph{ancilla} register (register $B$) to a $3$-local Hamiltonian $H$ acting on $A\otimes B\otimes C$, where $C$ is a \emph{clock} register (represented in unary). The precise details of the construction are not necessary for this work; rather, we require only the following key property of $H$. Define the \emph{history state} for arbitrary proof $\ket{\psi}$ in register $A$ as
\begin{equation}\label{eqn:hist}
 \histstate:=\frac{1}{\sqrt{L+1}}\sum_{i=0}^L V_i\cdots V_1 \ket{\psi}_{A}\otimes\ket{0}_B\otimes\ket{i}_C.
\end{equation}
Then, the question of whether $V$ accepts $\ket{\psi}$ is related to the smallest eigenvalue of $H$ as follows.

\begin{lemma}[Kempe and Regev~\cite{KR03}]\label{l:kemperegev}
    Kempe and Regev's construction maps a quantum circuit $V$ to a $3$-local Hamiltonian $H$ with parameters $\alpha$ and $\beta$ satisfying:
    \begin{itemize}
        \item If there exists a proof $\ket{\psi}$ accepted by $V$ with probability at least $1-\epsilon$, then $\histstate$ achieves
            \[ \tr(H\histstateketbra)\leq \alpha:= \epsilon/(L+1).\]
        \item If $V$ rejects all proofs $\ket{\psi}$ with probability at least $1-\epsilon$, then the smallest eigenvalue of $H$ is at least $\beta\in\Omega\left(\frac{1}{L^3}\right)$.
    \end{itemize}
\end{lemma}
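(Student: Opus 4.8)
The plan is to follow Kitaev's circuit-to-Hamiltonian analysis, specialised to Kempe and Regev's $3$-local clock gadget. Write $H = J\hstab + \hin + \hprop + \hout$, where $\hstab$ (with coefficient $J=\poly(L)$ to be fixed) penalises clock states outside the legal subspace $\mathcal{S} := \Span\{\ket{1^i0^{L-i}}\}$ of the clock register $C$; $\hin$ penalises the ancilla register $B$ being nonzero at clock time $0$; $\hprop = \sum_{i=1}^{L}\hprop^{(i)}$ enforces one step of $V=V_L\cdots V_1$ per clock tick; and $\hout$ penalises a rejecting output qubit at clock time $L$. The one construction-specific ingredient is the realisation of $\hprop$ and $\hstab$ by $3$-local terms — this is exactly what distinguishes ``Kempe--Regev'' from Kitaev's original $5$-local version — and I would quote their clock gadget wholesale, since the spectral analysis below needs only that (a) $\ker(\hstab)$ restricted to $C$ equals $\mathcal{S}$ with a unit spectral gap, and (b) on the legal subspace $\hprop$ acts, in the clock basis, as a nearest-neighbour walk (the graph Laplacian $\Delta_{\mathrm{path}}$) on the path of $L+1$ vertices.

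For the $\alpha$ bound (completeness) I would simply expand $\tr(H\histstateketbra)$ term by term. By construction $\histstate$ is supported on legal clock states, carries $\ket{0}_B$ on its time-$0$ branch, and its consecutive branches are related by the correct gate, so $\hstab\histstate = \hin\histstate = \hprop\histstate = 0$; the only surviving term is $\tr(\hout\histstateketbra) = \tfrac{1}{L+1}\bra{V\psi}(I-\ketbra{1}{1}_{\mathrm{out}})\ket{V\psi} \le \epsilon/(L+1)$, which is exactly $\alpha$.

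For the $\beta=\Omega(1/L^3)$ bound (soundness) I would argue in three steps. First, apply Kitaev's projection lemma \cite{KSV02} to the split $H = J\hstab + H_1$ with $H_1 := \hin+\hprop+\hout$: since $\hstab$ has gap $1$ above $\mathcal{S}$ and $\norm{H_1}=O(L)$, taking $J$ a sufficiently large polynomial in $L$ gives $\lambda_{\min}(H) \ge \lambda_{\min}(H_1|_{\mathcal{S}}) - O(1/L^3)$, so it suffices to lower bound $H_1$ on the legal subspace. Second, conjugate by $W := \sum_{i=0}^{L} V_i\cdots V_1 \otimes \ketbra{i}{i}_C$: then $W^\dagger\hprop W \propto I_{AB}\otimes\Delta_{\mathrm{path}}$, whose null space $\mathcal{N}$ consists of the vectors $\ket{\eta}_{AB}\otimes\ket{\omega}$ with $\ket{\omega}=\tfrac{1}{\sqrt{L+1}}\sum_i\ket{i}$ and whose next eigenvalue is $\Omega(1/L^2)$; meanwhile $W^\dagger\hin W$ is unchanged (the projector onto ``$B\neq0$ at time $0$'') and $W^\dagger\hout W = V^\dagger(I-\ketbra{1}{1}_{\mathrm{out}})V\otimes\ketbra{L}{L}_C$. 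Third, on $\mathcal{N}$ the operator $W^\dagger(\hin+\hout)W$ has expectation $\tfrac{1}{L+1}\big(\bra{\eta}(I-\ketbra{0}{0}_B)\ket{\eta} + \bra{\eta}V^\dagger(I-\ketbra{1}{1}_{\mathrm{out}})V\ket{\eta}\big)$; since the soundness hypothesis says every proof with $\ket{0}_B$ ancilla is rejected with probability $\ge 1-\epsilon$, a short geometric argument (one more application of Kitaev's geometric lemma \cite{KSV02} to these two projectors) shows this is $\Omega(1/L)$ for $\epsilon$ a small constant, and in particular $\ker(W^\dagger\hprop W)\cap\ker(W^\dagger(\hin+\hout)W) = \{0\}$. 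Feeding the $\Omega(1/L^2)$ gap of $\hprop$ and this $\Omega(1/L)$ bound into Kitaev's geometric lemma for $W^\dagger\hprop W$ and $W^\dagger(\hin+\hout)W$ yields $H_1|_{\mathcal{S}} \succeq \Omega(1/L^3)$, hence $\beta=\Omega(1/L^3)$.

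The main obstacle is the soundness direction, where two points need care: pinning down the exponent of $L$ (the $\Theta(1/L^2)$ spectral gap of the path Laplacian is what forces the final $1/L^3$, and the projection-lemma penalty $J$ must be polynomially large enough not to swamp it), and verifying that Kempe and Regev's $3$-local clock gadget genuinely reproduces the path-graph structure of $\hprop$ on the legal subspace — a gadget introducing spurious diagonal terms or coupling non-adjacent clock times would break the clean conjugation by $W$. Completeness, by contrast, is a one-line computation once the construction is written down.
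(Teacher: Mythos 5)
You should note at the outset that the paper itself gives no proof of this lemma: it is imported as a known result, cited to Kempe and Regev~\cite{KR03}, so there is no in-paper argument to compare against. Your sketch is essentially the standard analysis underlying the cited theorem, and its accounting is right: completeness is the one-line evaluation of $\tr(H\histstateketbra)$, where only $\hout$ survives and contributes at most $\epsilon/(L+1)$; soundness combines the $\Theta(1/L^2)$ gap of the path Laplacian obtained from $\hprop$ after conjugating by $W$, an $\Omega(1/L)$ angle/energy bound from $\hin+\hout$ on the Laplacian's null space (using that every proof is rejected with probability $\geq 1-\epsilon$), and the geometric lemma, yielding $\Omega(1/L^3)$ --- matching the parameters quoted (Kitaev's bound is of the form $c(1-\sqrt{\epsilon})/L^3$, so the constant depends on $\epsilon$, which you acknowledge).

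The one substantive caveat is that your proposal is least self-contained exactly where the ``$3$-local'' claim lives. Your assumptions (a) and (b) --- unit-gap clock penalty and, on the legal clock subspace, $\hprop$ acting as the nearest-neighbour walk --- are properties of Kitaev's $5$-local terms; in the Kempe--Regev construction the $3$-local propagation terms do \emph{not} individually check the neighbouring clock qubits, hence do not preserve the legal clock subspace, and the identification with the path Laplacian only holds for the \emph{projected} operator after the large-weight clock penalty and a projection-lemma (or equivalent perturbative/variational) step. That step, and verifying no spurious couplings survive projection, is precisely the construction-specific content of~\cite{KR03} that you quote wholesale --- you flag this honestly, and since the paper also just cites the result this is defensible, but as a standalone proof your argument is a reduction to the cited construction rather than a proof of it. A minor attribution point: the projection lemma in the form you invoke comes from the later Kempe--Kitaev--Regev $2$-local paper rather than~\cite{KSV02}, and Kempe and Regev's original soundness argument is organized somewhat differently, though your route reaches the same $\Omega(1/L^3)$ bound.
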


We next discuss the classical reconfiguration problem for Boolean formulae known as (s,t)-Connectivity (denoted $\stcon$, for short).

\begin{definition}[$\stcon$] \label{def:STCON}
    Given a Boolean \textup{$3$-CNF} formula $\phi$ and solutions $x,y\in\set{0,1}^n$ to $\phi$, does there exist a sequence of strings $(x_i)_{i=1}^m$ such that
    \begin{enumerate}
        \item $x_1=x$ and $x_m=y$, and
        \item for all $i\in[m]$, the Hamming distance between $x_i$ and $x_{i+1}$ is at most $1$, and
        \item for all $i\in[m]$, $x_i$ is a solution to $\phi$?
    \end{enumerate}
\end{definition}

\begin{theorem}[\cite{GKMP06}]\label{thm:GKMP06}
    $\stcon$~is \class{PSPACE}-complete.
\end{theorem}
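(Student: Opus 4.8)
The final statement to prove is Theorem~\ref{thm:GKMP06}, namely that $\stcon$ is \class{PSPACE}-complete. Since this is quoted as a known result from~\cite{GKMP06}, I will sketch how one establishes it from scratch.

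\medskip

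\textbf{Proof proposal.} The plan is to prove membership in \class{PSPACE} by a straightforward reachability argument, and to prove \class{PSPACE}-hardness by simulating a polynomial-space Turing machine (or, more conveniently, a polynomial-space computation expressed as an iterated reversible circuit) inside the solution graph of a $3$-CNF formula.

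\medskip

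\textbf{Containment in \class{PSPACE}.} First I would observe that the ``solution graph'' $G_\phi$ — vertices are satisfying assignments of $\phi$, edges join assignments at Hamming distance $1$ — has $2^n$ vertices, so it cannot be written down explicitly, but adjacency is checkable in polynomial time (evaluate $\phi$ on two strings and compare them bitwise). Thus $\stcon$ is precisely an $s$-$t$ reachability question on an exponential-size graph whose edge relation is polynomial-time computable. By Savitch's theorem, reachability in such an implicitly-presented graph lies in $\class{NPSPACE} = \class{PSPACE}$: one recursively asks whether there is a path of length $\le 2^n$ from $x$ to $y$ by guessing a midpoint $z$, checking it satisfies $\phi$, and recursing on the two halves with the length bound halved; the recursion depth is $O(n)$ and each frame stores $O(n)$ bits. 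Equivalently, one can run the standard \class{NL}-style nondeterministic walk (guess neighbours one at a time, maintaining only the current vertex and a step counter up to $2^n$) and invoke Savitch. This gives $\stcon \in \class{PSPACE}$.

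\medskip

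\textbf{\class{PSPACE}-hardness.} This is the substantive direction and the main obstacle. The target is to reduce a canonical \class{PSPACE}-complete problem — say, whether a given deterministic Turing machine $M$ on input $w$ using space $p(|w|)$ accepts — to $\stcon$. The idea is to encode a configuration of $M$ (tape contents, head position, state) as a bit-string, and to design a $3$-CNF $\phi$ whose satisfying assignments are exactly the valid configurations together with a small ``gadget'' register that controls which local moves are permitted, so that the only way to flip bits one at a time while staying satisfying is to march the encoded configuration forward one computation step at a time. Concretely, I would: (i) make the machine reversible (Bennett), so that each step is a local, invertible update, and pad so the computation is a fixed length $T \le 2^{O(p)}$; (ii) introduce auxiliary ``program counter'' / ``phase'' bits encoded in unary so that exactly one micro-step of the update is enabled at a time — the clauses of $\phi$ enforce that a configuration-bit may differ from its neighbour only if the phase bits certify this is the currently-licensed move; (iii) arrange the initial string $x$ to encode $(w, \text{start config})$ and the target string $y$ to encode $(w, \text{accepting config})$, both satisfying $\phi$. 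Then a reconfiguration path from $x$ to $y$ exists if and only if $M$ accepts $w$. The technical care needed is (a) checking that every constraint is expressible as a $3$-clause (introducing fresh variables and the usual Tseitin-style splitting for wider constraints, being careful that the extra variables don't create unwanted extra solutions or unwanted shortcuts through the solution graph), and (b) ruling out ``illegal shortcuts'' — e.g. a path that flips phase bits out of order, or jumps to a disconnected satisfying component — which is handled by making the phase register a unary counter whose only Hamming-$1$-adjacent satisfying neighbours are the next/previous counter values, and by making the configuration bits rigid (not flippable) except under the unique licensed phase.

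\medskip

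\textbf{Where the difficulty lies.} The reachability/Savitch side is routine. The hardness side's real work is the gadget design in step (ii)–(b): one must guarantee that the solution graph of $\phi$ is essentially a single path (or at least that $x$ and $y$ lie in the same component iff $M$ accepts), with no spurious edges created by the $3$-CNF encoding or by auxiliary variables. This is exactly the kind of argument that~\cite{GKMP06} carries out in full (and where the erratum~\cite{S13} patches a gap), so in a self-contained treatment I would budget most of the effort there; given the excerpt cites the result, it suffices to invoke~\cite{GKMP06} directly.
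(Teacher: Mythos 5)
The paper does not prove this statement at all --- it is imported verbatim as a known result from \cite{GKMP06} --- and your proposal is consistent with that: your Savitch-based containment argument is the standard correct one, and your hardness outline (simulating a polynomial-space, reversibly-padded computation inside the solution graph, with the gadget-level care about spurious edges deferred to \cite{GKMP06}, whose fix in \cite{S13} you also implicitly acknowledge) is the same route that reference takes. Since the paper's ``proof'' is just the citation, your treatment matches it, and invoking \cite{GKMP06} directly, as you conclude, is exactly what is done here.
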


Finally, we state a few useful norm inequalities. For arbitrary complex unit vectors $\ket{v}$ and $\ket{w}$ (see, e.g., Equation 1.33 of Reference~\cite{G13}):
\begin{equation}\label{eqn:enorm}
    \trnorm{\ketbra{v}{v}-\ketbra{w}{w}}=2\sqrt{1-\abs{\brakett{v}{w}}^2}\leq 2\enorm{\ket{v}-\ket{w}}.
\end{equation}
For arbitrary (not necessarily normalized) complex vectors, we have:

\begin{equation}\label{eqn:fnormjj}
    \fnorm{\ketbra{v}{v}-\ketbra{w}{w}} \leq \left( \norm{\ket{v}}_2 + \norm{\ket{w}}_2 \right) \; \norm{\ket{v} - \ket{w}}_2.
\end{equation}

\begin{proof}
	We use the triangle inequality and the fact that $\fnorm{\ketbra{a}{b}}=\enorm{\ket{a}}\enorm{\ket{b}}$ (seen by expanding the definition of $\fnorm{\ketbra{a}{b}}$) to obtain:
	\begin{eqnarray*}
		\fnorm{\ketbra{v}{v}-\ketbra{w}{w}}&\leq& \fnorm{\ketbra{v}{v}-\ketbra{v}{w}}+\fnorm{\ketbra{v}{w}-\ketbra{w}{w}}\\
		&=&
		\fnorm{\ket{v}(\bra{v}-\bra{w})}+                    \fnorm{(\ket{v} - \ket{w})\bra{w}}\\
		&=&\left( \norm{\ket{v}}_2 + \norm{\ket{w}}_2 \right) \; \norm{\ket{v} - \ket{w}}_2.
	\end{eqnarray*}
\end{proof}

\section{Nets and pseudo-nets over unitary operators}\label{sect:nets}

In order to show containment of \gscon~in the complexity classes of interest, we require nets with respect to spectral norm over unitary operators. In this section, we give two types of nets: (1) An $\epsilon$-net over single qubit unitaries (Lemma~\ref{l:net}), and (2) an $\epsilon$-pseudo-net over unitaries of any dimension $d\geq 2$
 (Lemma~\ref{l:pseudonet}). The former is used in Lemma~\ref{l:inPSPACE} (containment in \class{PSPACE}) and Lemma~\ref{l:inNEXP} (containment in \class{NEXP}), and consists strictly of unitary operators. The latter is used in Lemma~\ref{l:inQCMA} (containment in \class{QCMA}), and is a relaxation of a net in that it contains \emph{non-unitary} operators; this relaxed definition, however, allows for a straightforward construction in dimensions greater than two. Note that having an exact net helps make the analysis in the proof of Lemma~\ref{l:inPSPACE} easier, explaining why we use both kinds of nets.
We begin with a simple single-qubit $\epsilon$-net construction.

\begin{lemma}\label{l:net}
    For any $0<\epsilon \leq 1$, there exists an $\epsilon$-net with respect to the spectral norm over $\unitary{\complex^2}$ of size $O(\epsilon^{-8})$. Moreover, given the index $i$ of any element $U_i$ in the net, $U_i$ can be computed in time $O(\log^2(1/\epsilon))$.
\end{lemma}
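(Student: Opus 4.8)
The plan is to build the net in two stages: first a net over single-qubit unitaries up to global phase (i.e.\ over $PU(2) \cong SO(3)$), then lift it to $U(2)$ by adjoining a net over global phases. Any $U \in \unitary{\complex^2}$ can be written as $U = e^{i\theta} V$ where $V$ is parametrized by, say, Euler angles or equivalently by a unit quaternion; the key observation is that the entries of $U$ are smooth (indeed $1$-Lipschitz in a suitable sense) functions of a bounded number of real parameters lying in a bounded box. So the most elementary approach is: parametrize $U(2)$ by a constant number $c$ of real parameters ranging over a box of side $O(1)$ (e.g.\ $U = e^{i\theta}\begin{pmatrix} e^{i\alpha}\cos\gamma & e^{i\beta}\sin\gamma \\ -e^{-i\beta}\sin\gamma & e^{-i\alpha}\cos\gamma\end{pmatrix}$ with $\theta,\alpha,\beta \in [0,2\pi)$, $\gamma \in [0,\pi/2]$), show that the map from parameters to matrix is $O(1)$-Lipschitz with respect to (say) $\ell_\infty$ on parameters and $\mnorm{\cdot}$ on matrices, and then take the image of a cubic grid of spacing $\delta$ in parameter space.

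**Key steps, in order.** First, fix the explicit $4$-parameter form above and verify by direct estimation (bounding partial derivatives, each of which is a product of sines/cosines and hence bounded by $1$ in absolute value) that perturbing each parameter by at most $\delta$ changes every matrix entry by at most $O(\delta)$, hence changes $U$ by at most $O(\delta)$ in max-norm. Second, convert max-norm to spectral norm: for $2\times 2$ matrices $\snorm{A} \le 2\mnorm{A}$ (or use $\snorm{A} \le \fnorm{A} \le 2\mnorm{A}$), so a $\delta$-grid gives an $O(\delta)$-net in spectral norm. Third, choose $\delta = \Theta(\epsilon)$ so that the net is $\epsilon$-fine; the number of grid points is $(O(1)/\delta)^4 = O(\epsilon^{-4})$. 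To reconcile with the claimed $O(\epsilon^{-8})$, I note we have slack — either the bound $O(\epsilon^{-4})$ already suffices and $O(\epsilon^{-8})$ is a (looser) stated bound, or a cruder Lipschitz constant (e.g.\ going through $\snorm{A-B}$ bounds that lose more factors, or discretizing a redundant $6$- or more-parameter description) naturally yields the $\epsilon^{-8}$ figure; in any case establishing the $O(\epsilon^{-8})$ upper bound is immediate once we have any polynomial-in-$1/\epsilon$ bound with the right exponent, and I would simply pick the parametrization/estimates that cleanly give $\epsilon^{-8}$ rather than optimize. Fourth, for the computational claim: given index $i \in [O(\epsilon^{-8})]$, decode $i$ into the tuple of grid coordinates (a base-$(1/\delta)$ expansion — $O(1)$ divisions on numbers of $O(\log(1/\epsilon))$ bits, i.e.\ $O(\log^2(1/\epsilon))$ time with schoolbook arithmetic), then evaluate the four trig functions to $O(\log(1/\epsilon))$ bits of precision (each doable in $O(\log^2(1/\epsilon))$ or better), and assemble the $2\times2$ matrix; total $O(\log^2(1/\epsilon))$.

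**Main obstacle.** There is no deep obstacle here — this is a routine covering-number argument. The only things requiring care are (a) getting the Lipschitz constant and the bookkeeping of norm conversions right so that the final count is genuinely $O(\epsilon^{-8})$ and the net is genuinely an $\epsilon$-net (not $2\epsilon$ or $\epsilon/2$, though constants are harmless), and (b) being precise about the model of arithmetic for the $O(\log^2(1/\epsilon))$ time bound, in particular that evaluating $\sin,\cos$ to $O(\log(1/\epsilon))$ bits fits within that budget (standard, via e.g.\ truncated Taylor series or CORDIC). I would present step one (the Lipschitz estimate) as the technical heart, state the norm conversion $\snorm{A} \le 2\mnorm{A}$ for $2\times2$ matrices as a one-line fact, and keep the arithmetic-complexity discussion brief.
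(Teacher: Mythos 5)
Your proposal is correct and follows essentially the same route as the paper's proof: parametrize $\unitary{\complex^2}$ by a constant number of bounded real parameters, discretize the parameter box with spacing $\delta$, bound the resulting entrywise perturbation, convert to spectral norm via $\snorm{A}\leq 2\mnorm{A}$ for $2\times 2$ matrices, and decode an index into grid coordinates in $O(\log^2(1/\epsilon))$ time. The only substantive difference is the choice of parametrization: the paper writes the entries as $\sqrt{x}\,e^{i\phi_j}$ and, since $|\sqrt{x}-\sqrt{y}|\leq\sqrt{|x-y|}$ is only H\"{o}lder-$1/2$ in $x$, it is forced to take $\delta=\Theta(\epsilon^2)$, which is exactly where the $O(\epsilon^{-8})$ count comes from; your phase/Euler-angle form with entries $e^{i\theta}e^{i\alpha}\cos\gamma$, $e^{i\theta}e^{i\beta}\sin\gamma$ is genuinely Lipschitz in its four bounded parameters, so $\delta=\Theta(\epsilon)$ suffices and you obtain a net of size $O(\epsilon^{-4})$ --- a mild quantitative improvement that still trivially satisfies the stated $O(\epsilon^{-8})$ bound and changes nothing downstream, where only the $O(\log^2(1/\epsilon))$ indexing time is used (in the \class{PSPACE} and \class{NEXP} containment proofs). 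Your treatment of the indexing and arithmetic (base-$(1/\delta)$ decoding, evaluating $\sin$, $\cos$ and complex exponentials to $O(\log(1/\epsilon))$ bits via CORDIC or truncated series) is at the same level of rigor as the paper's, which likewise asserts the retrieval time without detailing the evaluation of $\sqrt{x}$ and $e^{i\phi}$ to the required precision.
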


The proof is given in Appendix~\ref{app:netproofs}, and relies on a simple characterization of single qubit unitaries. For larger dimensions $d>2$, however, we are unaware of a similar characterization. Thus, for $d>2$ we construct\footnote{It was pointed out to us by an anonymous referee that there is an alternative way to construct an $\epsilon$-net over unitary operators with $d>2$, which can be used in place of our pseudo-net here. Namely, one casts a net over the set of Hermitian operators $H$ satisfying $\snorm{H}\leq \pi$, and subsequently exponentiates the items in the net.} an \emph{$\epsilon$-pseudo-net}.} Intuitively, a pseudo-net over unitary operators contains matrices which are close to, but not necessarily, unitary. However, to aid in its use, it has two important properties: First, we give an efficient ``check'' procedure $C$ such that, for any unitary $U$, there exists a net element $M$ satisfying $\snorm{U-M}\leq \epsilon$ and such that $M$ is accepted by $C$. Second, we give an efficient ``rounding'' procedure $R$ such that if net element $M$ is accepted by $C$, then $R$ rounds $M$ to a unitary $U$ satisfying $\snorm{U-M}\leq \epsilon$.

\begin{definition}[$\epsilon$-pseudo-net]\label{def:pseudonet}
    Let $S\subseteq \lin{\complex^d}$. Then, we call $S'\subseteq \lin{\complex^d}$ an $\epsilon$-pseudo-net over $S$ if there exist $O(\poly(d))$-time algorithms $C$ (for \emph{checking}) and $R$ (for \emph{rounding}) taking as input $M\in\lin{\complex^d}$ such that:
    \begin{enumerate}
        \item (Checking) $\forall\; M\in S$, there exists $M'\in S'$ such that $C$ accepts $M'$ and $\snorm{M-M'}\leq \epsilon$.
        \item (Rounding) $\forall\; M'\in S'$, if $C$ accepts $M'$, then algorithm $R$ maps $M'$ to $M\in S$ such that $\snorm{M-M'}\leq \epsilon$.
    \end{enumerate}
\end{definition}
We {show in Appendix~\ref{app:netproofs}} that there is a straightforward way to construct an $\epsilon$-pseudo-net over $S=\unitary{\complex^d}$ for any $d\geq 2$. The {ideas} are based on a standard construction for nets over unitary operators, as used in Reference~\cite{PGACHW11} and detailed further in Lemma 7.13 of Reference~\cite{G13}; this standard construction is, however, inherently non-explicit. Thus, we {adapt it}  as necessary to obtain an \emph{explicit} $\epsilon$-pseudo-net.

\begin{lemma}\label{l:pseudonet}
    For any $0<\epsilon<1$, there exists a set $N\subseteq \lin{\complex^d}$ of size $O(d^{7}/\epsilon^2)$ such that:
    \begin{enumerate}
        \item $N$ is an $\epsilon$-pseudo-net with respect to spectral norm over unitaries $\unitary{\complex^d}$.
        \item Given index $i\in\set{1,\ldots,\abs{N}}$, the $i$'th operator $\widetilde{U}_i$ in the net can be computed in time $O(d^2\log^2(d^{5/2}/\epsilon))$. Here, by $i$'th operator, we mean with respect to a fixed canonical ordering set by the construction of $N$.
    \end{enumerate}
\end{lemma}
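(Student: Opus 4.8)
The plan is to build $N$ by casting a coordinate-wise grid over the entries of a $d\times d$ complex matrix, restricting to a ball of the appropriate radius, and then defining the checking and rounding procedures $C$ and $R$ in terms of the singular value decomposition (SVD). Concretely, fix a spacing parameter $\delta$ (to be chosen as a small polynomial in $\epsilon/d$, say $\delta = \Theta(\epsilon/d^{5/2})$) and let $N$ consist of all matrices $M\in\lin{\complex^d}$ each of whose $2d^2$ real and imaginary parts of entries is an integer multiple of $\delta$ lying in $[-1-\delta,1+\delta]$. A standard volume/counting argument shows $\abs{N} = O((1/\delta)^{2d^2})$; but this is the wrong bound, so in fact the construction must be done more carefully — following the ``standard construction'' referenced (Lemma 7.13 of~\cite{G13} and~\cite{PGACHW11}), one instead nets the columns of the unitary one at a time, each column being a unit vector in $\complex^d$, which requires only $O((1/\delta)^{2d})$ points per column and hence $O((1/\delta)^{2d^2})$ overall — still exponential in $d^2$. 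The resolution is that the \emph{size claim} $O(d^7/\epsilon^2)$ cannot refer to the cardinality of a net in the usual sense; rather, I expect $N$ to be parametrized so that each \emph{individual generator} (e.g. a single Givens-type rotation or a single netted unit vector) is indexed compactly, and the bound $O(d^7/\epsilon^2)$ counts the number of such indexed primitives, with $\widetilde U_i$ reconstructed from index $i$ by the stated $O(d^2\log^2(d^{5/2}/\epsilon))$-time computation (each of the $O(d^2)$ entries computed to $O(\log(d^{5/2}/\epsilon))$ bits via repeated squaring). I would state the counting lemma in exactly this form and verify the arithmetic.

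For the checking algorithm $C$: on input $M$, compute the SVD $M = W\Sigma V^\dagger$ (doable in $O(\poly(d))$ time to sufficient precision), and accept iff every singular value satisfies $\abs{\sigma_i - 1}\leq \epsilon$ (equivalently, iff $\snorm{M^\dagger M - I}$ is small). For the rounding algorithm $R$: on an accepted $M$, output $R(M) := WV^\dagger$, the unitary factor in the polar decomposition. The key estimate to verify is that for any $M$ with all singular values within $\epsilon$ of $1$, we have $\snorm{M - WV^\dagger} = \snorm{W(\Sigma - I)V^\dagger} = \snorm{\Sigma - I} = \max_i\abs{\sigma_i - 1}\leq \epsilon$, which is immediate. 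This gives the Rounding property. For the Checking property, given any unitary $U$, I round each entry of $U$ to the nearest grid point to obtain $M\in N$ with $\mnorm{U - M}\leq \delta/2$, hence $\enorm{U-M}\leq \fnorm{U-M}\leq d\,\delta/2$ (since $\fnorm\cdot$ dominates $\snorm\cdot$), and then check that $M$ passes $C$: since $\snorm{U - M}$ is small, $U^\dagger U = I$ forces every singular value of $M$ to lie in $[1-\snorm{U-M},\,1+\snorm{U-M}]$ by Weyl's inequality, so choosing $\delta$ so that $d\delta/2 \leq \epsilon$ ensures $C$ accepts $M$. (One must double-check that the rounded $M$ still lies in the prescribed entrywise box; since $U$ is unitary each entry has modulus $\leq 1$, so rounding keeps us in $[-1-\delta,1+\delta]$.)

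The main obstacle I anticipate is reconciling the claimed cardinality $O(d^7/\epsilon^2)$ with the honest covering number of $\unitary{\complex^d}$, which is exponential in $d^2$ — so either the ``net'' is really a net over the $O(d^2)$-parameter family of \emph{elementary building blocks} (single-qubit-style $2\times 2$ rotations embedded in $\complex^d$, of which there are $O(d^2)$ positions, each netted to precision $\epsilon/\poly(d)$ giving $O(\poly(d)/\epsilon^2)$ choices, and arbitrary $U$ approximated by a product of $O(\poly(d))$ such blocks via QR/Givens decomposition), or the statement is using ``size'' loosely to mean the description length / time to index an element. I would resolve this by adopting the Givens-factorization viewpoint: write $U = \prod_{j} G_j$ with $O(d^2)$ Givens rotations $G_j\in \unitary{\complex^2}$ (acting on coordinate pairs), net each $G_j$ using Lemma~\ref{l:net} at precision $\epsilon/\Theta(d^2)$, take $N$ to be this collection of netted Givens rotations together with their coordinate labels, and bound the total product error by subadditivity of $\snorm\cdot$ under composition of unitaries (and near-unitaries, controlling the drift). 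This keeps $\abs{N} = O(d^2\cdot (\epsilon/d^2)^{-8}) $ which is still not $O(d^7/\epsilon^2)$, so the exact exponents in the lemma will require matching the paper's specific construction constants; I would reverse-engineer the $d^7/\epsilon^2$ bound from whatever netting granularity makes the final composition error exactly $\epsilon$, and the time bound $O(d^2\log^2(d^{5/2}/\epsilon))$ from computing $d^2$ entries each to $\log(d^{5/2}/\epsilon)$ bits of precision with $\log$-depth arithmetic.
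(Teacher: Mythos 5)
Your checking and rounding procedures are, up to phrasing, exactly the paper's: the paper's $C$ accepts $\widetilde{U}$ iff $B:=\sum_i\ketbra{u_i}{u_i}=\widetilde{U}\widetilde{U}^\dagger$ is close to $I$ (i.e.\ all singular values are close to $1$), and its $R$ maps the columns by $B^{-1/2}$, which is precisely the unitary polar factor $WV^\dagger$ you propose via the SVD; your estimate $\snorm{M-WV^\dagger}=\max_i\abs{\sigma_i-1}$ and your use of Weyl's inequality for the ``checking'' direction are a clean equivalent of the paper's triangle-inequality bounds. So the $C$/$R$ half of your proposal is correct and is essentially the paper's argument.

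The genuine gap is in the construction and size claim, where you never commit to anything. The paper's $N$ is exactly the entrywise construction you wrote down first and then abandoned: a $\delta$-net over the unit disk with $\delta\in\Theta(\epsilon/d^{5/2})$, cast independently over each of the $d^2$ entries; the stated ``size'' $O(d^{7}/\epsilon^2)$ is obtained by counting $O(d^{5}/\epsilon^2)$ disk-net points per entry times $d^2$ entries, and the indexing claim is proved by letting the index encode, for each of the $d^2$ positions, which of the $O(d^{5/2}/\epsilon)$-granularity grid points sits there (hence the $O(d^2\log^2(d^{5/2}/\epsilon))$ time). Your observation that the cardinality of the full product grid is exponential in $d^2$ --- and hence that no set of cardinality $O(d^7/\epsilon^2)$ can $\epsilon$-cover $\unitary{\complex^d}$ for growing $d$ --- is a fair criticism of the lemma as a statement in $d$, but the paper declares in the preliminaries that the local dimension $d$ is treated as a constant (in the application, $d=4$), so the loose accounting is immaterial there; the resolution is emphatically not a Givens/QR factorization of $U$ into $2\times 2$ rotations, which changes the object being netted, does not match the stated index-to-matrix decoding, and (as you yourself compute) does not reproduce the exponents. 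As written, your proposal ends by promising to ``reverse-engineer'' the bounds rather than proving them, so the counting and indexing parts of the lemma are left unestablished; fixing this requires only committing to the entrywise grid, stating how an index decodes into $d^2$ grid choices, and doing the (per-entry net size) $\times$ ($d^2$ entries) count with $\delta'=\epsilon/[6d(d+\epsilon)]$ chosen so that both the acceptance threshold of $C$ and the $\epsilon$-closeness requirement are met.
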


\section{$k$-Orthogonality and the Traversal Lemma} \label{sect:traversal}

{The key technical tool for proving our hardness results is the Traversal Lemma (Lemma~\ref{l:traversal}), which we state and prove in this section. In Sections~\ref{sect:tightness} and~\ref{app:prop}, we then show that this lemma is tight up to a polynomial factor and give a further study into the notion of $k$-orthogonality, respectively. We begin by introducing the notions of \emph{$k$-orthogonal states} and \emph{$k$-orthogonal subspaces}.}

\begin{definition}[$k$-orthogonal states and subspaces]\label{def:korth}
For $k \geq 1$, a pair of states ${\ket{v},\ket{w}\in(\complex^d)^{\otimes n}}$ is \emph{$k$-orthogonal} if for all $k$-qudit unitaries $U$, we have $\bra{w}U\ket{v}=0$. We call subspaces $S,T \subseteq(\complex^d)^{\otimes n}$ \emph{$k$-orthogonal} if any pair of vectors $\ket{v}\in S$ and $\ket{w}\in T$ are $k$-orthogonal.
\end{definition}

\noindent Let us comment on the structure of $k$-orthogonal states. First, $k$-orthogonality implies orthogonality, but not vice versa. For example, $\ket{000}$ and $\ket{111}$ are $2$-orthogonal and hence orthogonal. In contrast, $\ket{000}$ and $\ket{100}$ are orthogonal but not $k$-orthogonal for any $k\geq 1$ (i.e. simply apply Pauli $X$ to qubit $1$ to map $\ket{000}$ to $\ket{100}$). Similarly, letting $S$ and $T$ denote the $+1$ eigenspaces of $I \otimes \kb{000}$ and $I \otimes \kb{111}$, respectively, we have that $S$ and $T$ are $2$-orthogonal subspaces.

We now prove the Traversal Lemma, which says the following: For any two $k$-orthogonal subspaces $S$ and $T$ with $\ket{v} \in S$ and $\ket{w} \in T$, any sequence of $m$ $k$-qudit unitaries mapping $\ket{v}$ to $\ket{w}$ must induce an evolution which has ``large'' overlap with the orthogonal complement of both $S$ and $T$ at some time step $i\in[m]$.

\begin{lemma}[Traversal Lemma]\label{l:traversal}
    Let $S,T \subseteq(\complex^d)^{\otimes n}$ be $k$-orthogonal subspaces. Fix arbitrary states $\ket{v}\in S$ and $\ket{w}\in T$, and consider a sequence of $k$-qudit unitaries $(U_i)_{i=1}^m$ such that
    \[
        \norm{\ket{w}- U_m\cdots U_1 \ket{v}}_2 \leq \epsilon
    \]
for some $0\leq \epsilon < 1/2$. Define $\ket{v_i}:=U_i \cdots U_1 \ket{v}$ and $P:=I-\Pi_S-\Pi_T$. Then, there exists an $i\in[m]$ such that
    \[
        \bra{v_i} P \ket{v_i}\geq \left(\frac{1- 2\epsilon}{2m}\right)^2.
    \]
\end{lemma}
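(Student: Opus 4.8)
The plan is to argue by contradiction: suppose $\bra{v_i}P\ket{v_i} < \delta^2$ for all $i\in[m]$, where $\delta := \frac{1-2\epsilon}{2m}$, and derive that $\ket{v_m}$ cannot be $\epsilon$-close to $\ket{w}$. The intuition is the quantum Zeno effect: if at every step the state has only a tiny component outside $S\cup T$ (more precisely, outside $S \oplus T$), then since $\ket{v_0}=\ket{v}\in S$ and $S,T$ are $k$-orthogonal (so a single $k$-qudit unitary cannot move amplitude from $S$ directly to $T$), the state is ``trapped'' near $S$ and can only leak into $T$ at a rate controlled by $\delta$ per step; after $m$ steps the overlap with $T$ is at most $O(m\delta)$, which is too small to reach $\ket{w}$.

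The key steps, in order: (1) Write $\ket{v_i} = \ket{s_i} + \ket{t_i} + \ket{p_i}$ where $\ket{s_i}=\Pi_S\ket{v_i}$, $\ket{t_i}=\Pi_T\ket{v_i}$, $\ket{p_i}=P\ket{v_i}$ are mutually orthogonal, and note $\enorm{\ket{p_i}} < \delta$ by assumption (here using $\Pi_S\Pi_T = 0$, which follows from $k$-orthogonality since $S\perp T$). Let $a_i := \enorm{\ket{s_i}}$. We have $a_0 = 1$ and we want to lower-bound $a_m$. (2) The crucial step: show $a_{i+1} \geq a_i - \delta$. To see this, consider $\ket{v_{i+1}} = U_{i+1}\ket{v_i}$. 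Since $U_{i+1}$ is $k$-local and $S,T$ are $k$-orthogonal, $\bra{t'}U_{i+1}\ket{s_i} = 0$ for every $\ket{t'}\in T$; hence $\Pi_T U_{i+1}\ket{s_i} = 0$, i.e. $\Pi_T U_{i+1}(\ket{v_i} - \ket{t_i} - \ket{p_i}) = 0$, so $\enorm{\Pi_T\ket{v_{i+1}}} = \enorm{\Pi_T U_{i+1}(\ket{t_i}+\ket{p_i})} \leq \enorm{\ket{t_i}} + \enorm{\ket{p_i}}$. Combined with $\enorm{\ket{v_{i+1}}}=1$ and orthogonality of the three pieces, $a_{i+1}^2 = 1 - \enorm{\ket{t_{i+1}}}^2 - \enorm{\ket{p_{i+1}}}^2 \geq 1 - (\enorm{\ket{t_i}}+\delta)^2 - \delta^2$. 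Using $\enorm{\ket{t_i}}^2 \leq 1 - a_i^2$ and massaging this inequality (this is the routine-calculation part), one extracts $a_{i+1} \geq a_i - \delta$, or perhaps more cleanly a bound like $\enorm{\ket{t_{i+1}}} \leq \enorm{\ket{t_i}} + \delta$ which gives $\enorm{\ket{t_m}} \leq m\delta$ directly (since $\enorm{\ket{t_0}}=0$). (3) Iterate: $\enorm{\ket{t_m}} \leq m\delta = \frac{1-2\epsilon}{2}$, hence $\abs{\braket{w}{v_m}} = \abs{\braket{w}{t_m}} \leq \enorm{\ket{t_m}} \leq \frac{1-2\epsilon}{2}$. (4) Conclude: $\enorm{\ket{w} - \ket{v_m}}^2 = 2 - 2\,\mathrm{Re}\braket{w}{v_m} \geq 2 - 2\abs{\braket{w}{v_m}} \geq 2 - (1-2\epsilon) = 1 + 2\epsilon > \epsilon^2 + 1 \geq (\ldots)$ — so in particular $\enorm{\ket{w}-\ket{v_m}} > \epsilon$ (since $1+2\epsilon > \epsilon^2$ for $\epsilon < 1/2$), contradicting the hypothesis.

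I expect the main obstacle to be step (2), pinning down the per-step leakage bound with the right constant so that iterating $m$ times produces exactly the stated $\left(\frac{1-2\epsilon}{2m}\right)^2$ rather than something off by a constant factor. In particular one has to be careful that the ``leak into $T$'' at each step is bounded purely in terms of the current $P$-component and the current $T$-component, using that $U_{i+1}$ maps $S$-vectors to vectors orthogonal to $T$ — this is exactly where $k$-orthogonality of the \emph{subspaces} (not just the endpoint states) is essential, and where I would be tempted to also invoke the Gentle Measurement Lemma (Lemma~\ref{l:gentle}) if a cleaner derivation is needed: viewing $\Pi_S + \Pi_T \preceq I$ as a measurement operator with $\tr((\Pi_S+\Pi_T)\kb{v_i}) \geq 1-\delta^2$ lets one control how far $\ket{v_i}$ is in trace norm from its projection into $S\oplus T$, which may streamline the bookkeeping across all $m$ steps. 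A secondary subtlety is handling the possibility that $\ket{v}$ and $\ket{w}$ are not unit vectors after projection or that $\ket{w}$ has a nonzero component outside $T$ — but the hypothesis $\ket{w}\in T$ rules the latter out, so $\braket{w}{v_m} = \braket{w}{t_m}$ exactly.
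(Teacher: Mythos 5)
Your proof is correct, and it takes a genuinely different (and more elementary) route than the paper's. The paper argues by contradiction via a quantum-Zeno-style thought experiment: with $\Pi := \Pi_S + \Pi_T$, it introduces the repeatedly projected (unnormalized) states $\ket{v_i''} := \Pi \, U_i \ket{v_{i-1}''}$, proves by induction --- using the Gentle Measurement Lemma together with the triangle inequality and unitary invariance of the trace norm --- that $\trnorm{\kb{v_i} - \kb{v_i''}} < 2i\sqrt{\delta}$, and then derives a contradiction because $k$-orthogonality forces every $\ket{v_i''}$ to remain in $S$, so $\brakett{v_m''}{w}=0$ makes the trace distance to $\kb{w}$ at least $1$, while the induction plus $\enorm{\ket{v_m}-\ket{w}}\leq\epsilon$ bounds it strictly below $1$. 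You instead work entirely at the level of vectors: decompose $\ket{v_i}$ into its $S$-, $T$- and $P$-components and track $\enorm{\Pi_T\ket{v_i}}$, using $\Pi_T U_{i+1}\Pi_S = 0$ (the same place where $k$-orthogonality of the subspaces enters in both proofs) to obtain the per-step recursion $\enorm{\Pi_T\ket{v_{i+1}}} \leq \enorm{\Pi_T\ket{v_i}} + \enorm{P\ket{v_i}}$, hence $\abs{\brakett{w}{v_m}} \leq m\delta = (1-2\epsilon)/2$, contradicting $\enorm{\ket{w}-\ket{v_m}}\leq\epsilon<1/2$. Your route avoids density operators and the Gentle Measurement Lemma altogether, and in fact yields a marginally stronger constant (your contradiction only requires $m\delta < 1 - \epsilon^2/2$), whereas the paper's trace-norm formulation is heavier but is phrased in a form that transfers naturally to mixed states and measurement language. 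One small note: the detour in your step (2) toward $a_{i+1} \geq a_i - \delta$ is unnecessary and does not follow by routine massaging for all values of $a_i$ (and, even if established, would lose a square root when converted back into a bound on the $T$-component); the direct recursion on $\enorm{\Pi_T\ket{v_i}}$ that you actually invoke in step (3) is the right one, and the optional appeal to the Gentle Measurement Lemma is not needed in your argument.
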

\begin{proof}

    We give a proof by contradiction. Suppose
    that for all $i\in[m]$, {the inner products satisfy} $\bra{v_i}P\ket{v_i}<\delta:=[(1-  2\epsilon)/(2m)]^2$.
     Consider the following thought experiment inspired by the quantum Zeno effect. Imagine that after each $U_i$ is applied, we measure $\ket{v_i}$ using the projective measurement $(\Pi, I-\Pi)$ for $\Pi:=I-P$, and postselect on obtaining outcome $\Pi$. {Define the following two sequences:}
	
	\begin{itemize}
	\item $\ket{v_i'}:=\Pi\ket{v_i}$ for $i \in [m]$,
	\item $\ket{v_1''} := \ket{v_1'}$ and $\ket{v_i''}:=\Pi \, U_i \ket{v_{i-1}''}$ for $i \in \set{2, \ldots, m}$.
	\end{itemize}
	Note that $\ket{v_i'}$ and $\ket{v_i''}$ are not necessarily normalized.

	To set up our contradiction, we first prove by induction on $i$ that
    \begin{equation}\label{eqn:ind}
        \trnorm{\ketbra{v_i}{v_i}-\ketbra{v_i''}{v_i''}}< 2i\sqrt{\delta}.
    \end{equation}
    For the base case $i=1$, we have $\ket{v_1''}=\ket{v_1'}$. Then, since $\bra{v_1}P\ket{v_1}<\delta$, we know that ${\tr(\Pi\ketbra{v_1}{v_1})>1-\delta}$, and so the Gentle Measurement Lemma~\cite{W99} (Lemma~\ref{l:gentle}) yields
    \begin{equation}\label{eqn:gentle}
        \trnorm{\ketbra{v_1}{v_1}-\ketbra{v_1''}{v_1''}}={\trnorm{\ketbra{v_1}{v_1}-\ketbra{v_1'}{v_1'}}<2\sqrt{\delta}},
    \end{equation}
    as required. For the inductive case, assume Equation~(\ref{eqn:ind}) holds for $1\leq i\leq j-1$. We prove it holds for $i=j$. Specifically,
    \begin{eqnarray}
        \trnorm{\ketbra{v_j}{v_j}-\ketbra{v_j''}{v_j''}}&\leq&
        \trnorm{\ketbra{v_j}{v_j}-\ketbra{v'_j}{v'_j}}+\trnorm{\ketbra{v'_j}{v'_j}-\ketbra{v_j''}{v_j''}}\nonumber\\
        &<&
        2\sqrt{\delta}+\trnorm{\ketbra{v'_j}{v'_j}-\ketbra{v_j''}{v_j''}}\nonumber\\
        &=&2\sqrt{\delta}+\trnorm{\Pi U_{j}\left(\ketbra{v_{j-1}}{v_{j-1}}-\ketbra{v_{j-1}''}{v_{j-1}''}\right)U_{j}^\dagger\Pi}\nonumber\\ &\leq&2\sqrt{\delta}+\trnorm{\ketbra{v_{j-1}}{v_{j-1}}-\ketbra{v_{j-1}''}{v_{j-1}''}}\nonumber\\        &<&2\sqrt{\delta}+2(j-1)\sqrt{\delta}\nonumber\\
        &=&2j\sqrt{\delta},\label{eqn:small1}
    \end{eqnarray}
    where the first statement follows from the triangle inequality, the second from the Gentle Measurement Lemma, the fourth from the facts that the Schatten $p$-norms are invariant under isometries and that ${\norm{ABC}_p\leq\snorm{A}\norm{B}_p\snorm{C}}$~\cite{W08_2}, and the fifth from the induction hypothesis. This establishes Equality~(\ref{eqn:ind}).
	
	We thus have
	\begin{eqnarray}
        \trnorm{\ketbra{v_m''}{v_m''}-\ketbra{w}{w}} &\leq&   \trnorm{\ketbra{v_m''}{v_m''}-\ketbra{v_m}{v_m}}+\trnorm{\ketbra{v_m}{v_m}-\ketbra{w}{w}}\nonumber\\
        &<& 2m \sqrt{\delta} + 2\epsilon\nonumber\\
        &=& 1, \label{eqn:contrabound1}
    \end{eqnarray}
    where we have used Equation~(\ref{eqn:enorm}) to bound
	\[
		\trnorm{\ketbra{v_m}{v_m}-\ketbra{w}{w}} \leq 2 \norm{ \ket{v_m} - \ket{w}}_2 \leq 2 \epsilon.
	\]
We are now ready to obtain the desired contradiction.

To do so, observe that since $\ket{v} \in S$, and since $S$ and $T$ are $k$-orthogonal subspaces, we have that for all $i \in [m]$, $\ket{v_i''}\in S$ (i.e., if $S$ is $1$-dimensional, this is the Zeno effect). Thus, we have $\brakett{v_m''}{w}=0$, implying that
\[
\trnorm{ \kb{v_m''} - \kb{w} } = 1 + \norm{ \ket{v_m''} }_2 \geq 1.
\]
This contradicts Equation~(\ref{eqn:contrabound1}), as desired.
\end{proof}

\section{\class{QCMA}-completeness} \label{sect:GSCON_QCMA}

In this section, we prove the following theorem.

\begin{theorem}\label{thm:qcmacomplete}
    There exists a polynomial $p$ such that \gscon~is \class{QCMA}-complete for ${m\in O(p(n))}$, $\Delta\in{\Theta} (1/m^5)$, $l=2$, and $k\geq 5$, where $n$ denotes the number of qubits $H$ acts on.
\end{theorem}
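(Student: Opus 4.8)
The plan is to prove the two directions separately: containment of \gscon~in \class{QCMA} (with the stated parameters), and \class{QCMA}-hardness. For containment, I would have the prover send a classical description of the sequence of $2$-local unitaries $(U_i)_{i=1}^m$, discretized using the $\epsilon$-pseudo-net of Lemma~\ref{l:pseudonet} (taking $d=4$, and $\epsilon$ inverse-polynomially small, chosen small enough that the accumulated error over $m$ steps stays inside the promise gap $\Delta$). The verifier runs the checking algorithm $C$ on each received matrix, applies the rounding algorithm $R$ to obtain genuine unitaries, and then must estimate, for each $i$, the energy $\bra{\psi_i}H\ket{\psi_i}$ and also the final overlap $\enorm{U_m\cdots U_1\ket{\psi}-\ket{\phi}}$. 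Each of these is estimated by preparing polynomially many copies of $\ket{\psi_i}$ (using $U_\psi$ followed by $U_i\cdots U_1$) and of $\ket{\phi}$ (using $U_\phi$), measuring the local terms $H_j$ and performing a swap test (or Hadamard test) for the overlap, then applying a Chernoff bound and a union bound over the $O(m)$ quantities being estimated. The completeness/soundness gaps of \gscon~($\aaa$ vs.\ $\bbb$, $\ccc$ vs.\ $\ddd$, each at least $\Delta$) are what make this estimation feasible; one has to be slightly careful that rounding moves the unitaries by at most $\epsilon$ in spectral norm so energies shift by $O(m\epsilon)$, which is set to be $\ll \Delta$.

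For \class{QCMA}-hardness, I would reduce from an arbitrary \class{QCMA} language with verification circuit $V=V_L\cdots V_1$ acting on proof register $A$ and ancilla register $B$. The idea is the one sketched in the introduction: build the input Hamiltonian $H$ for \gscon~on registers $A\otimes B\otimes C\otimes G$, where $C$ is Kempe--Regev's clock register and $G$ is a new register of ``GO'' qubits, as follows. When the GO register is all-zeroes, $H$ should not penalize anything (so that one can freely run $V$ forward on $\ket{0}$ in registers $A,B,C$, preparing the history state $\histstate$); when the GO register is all-ones, $H$ should act as (a suitably clock-controlled version of) the Kempe--Regev Hamiltonian $H_{KR}$ plus a term forcing the clock to $\ket{L}_C$ and the circuit output to accept. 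The starting state $\ket{\psi}$ is $\ket{0}^{\otimes n}$ and the target state $\ket{\phi}$ is $\ket{0}$ on $A,B,C$ and all-ones on $G$. If $x\in\ayes$, there is a witness $y$; the prover's sequence of $2$-local unitaries first runs $V$ forward to build $\histstate$ (with $y$ loaded into $A$), then flips the GO qubits one at a time — each such flip keeps the state in the ground space because $\histstate$ has energy $\leq \alpha$ against $H_{KR}$ — and then runs $V$ backwards to disentangle $A,B,C$, leaving the target $\ket{\phi}$. If $x\in\ano$, I need to show every low-energy traversal fails; this is where the \travlemmatwo~enters.

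The main obstacle, and the technical heart of the hardness direction, is arguing soundness: when $x\in\ano$, no sequence of $m$ $2$-local unitaries can map $\ket{0}^{\otimes n}$ to $\ket{\phi}$ while staying below energy $\aaa$. The point is that $\ket{\psi}$ (GO register all-$0$) and $\ket{\phi}$ (GO register all-$1$, with $n\geq 5$ GO qubits so the registers are $5$-far) are $k$-orthogonal for $k\geq 5$ via Lemma~\ref{l:prop2} (the reduced states on the GO register are orthogonal projectors). Hence by the \travlemmatwo~(Lemma~\ref{l:traversal}), with $S=\Span\ket{\psi}$ and $T=\Span\ket{\phi}$, some intermediate state $\ket{v_i}$ must have overlap $\geq ((1-2\ccc)/(2m))^2=\Omega(1/m^2)$ with the orthogonal complement $P$ of $\Span\{\ket{\psi},\ket{\phi}\}$. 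I then want to show that such a state necessarily has energy $\geq\bbb$ against $H$. Intuitively $P$ consists of states with nontrivial GO-register content that is neither all-$0$ nor all-$1$, or states that ``are'' (partly) the history state with GO all-$1$; the first kind is penalized by terms in $H$ enforcing GO-register consistency with the clock, and the second kind, when GO is all-$1$ and $x\in\ano$, picks up energy $\Omega(\beta/L^3)\cdot\Omega(1/m^2)$ from the (activated) Kempe--Regev Hamiltonian, since by Lemma~\ref{l:kemperegev} every state has energy $\geq\beta\in\Omega(1/L^3)$ there. Balancing the weights in $H$ so that the penalty incurred is at least $\bbb$ while $\histstate$ itself (and the forward/backward computations) still sit below $\aaa$ forces $\Delta=\aaa-\bbb$ to scale like $\Theta(1/m^5)$ — roughly $1/m^2$ from the Traversal Lemma times $1/L^3$ from Kempe--Regev, with $L=\poly(n)$ and $m=\poly(n)$ — which is exactly the parameter regime in the theorem statement. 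The delicate parts are (i) designing $H$'s clock/GO-consistency terms so that the only low-energy states with GO all-$1$ really are (close to) valid history states, so that the Kempe--Regev lower bound can be invoked; and (ii) propagating the ``$\Omega(1/m^2)$ overlap with $P$'' through these structural facts into a genuine $\Omega(1/m^5)$ energy lower bound, which requires a perturbation-theory / geometric-lemma style argument relating overlap with a ``bad'' subspace to energy against a Hamiltonian that is gapped on that subspace.
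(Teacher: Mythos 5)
Your containment direction is essentially the paper's argument (pseudo-net, check/round, phase estimation and swap tests with error $O(m\epsilon)\ll\Delta$), so no issues there. The hardness direction, however, has a genuine gap, and it is located exactly where you flag the ``delicate parts.'' First, a construction in which the Kempe--Regev check is \emph{activated when the GO register is all-ones} (plus terms forcing clock $L$ and acceptance) makes the target state itself high-energy: $\ket{\phi}$ has GO all-ones but $\ket{0\cdots 0}$ on $A\otimes B\otimes C$, which is not a history state and picks up constant energy from the propagation (and your clock-$L$/accept) terms, violating the input promise $\bra{\phi}H\ket{\phi}\leq\aaa$ in both YES and NO cases. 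Second, your soundness argument applies the Traversal Lemma with $S=\Span\{\ket{\psi}\}$ and $T=\Span\{\ket{\phi}\}$; the resulting complement $P=I-\kb{\psi}-\kb{\phi}$ contains perfectly benign states --- e.g.\ every intermediate state of the honest forward computation (GO all-zeros, $h$-register partially computed) lies in it with overlap $1$ --- so ``overlap $\Omega(1/m^2)$ with $P$'' cannot imply energy $\geq\bbb$ for any Hamiltonian that also lets the YES-case prover compute freely. Closing this would require exactly the structural claim you defer to (ii), namely that all low-energy states with GO all-ones are close to history states, and that is not something the Kempe--Regev construction hands you.

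The paper's construction sidesteps both problems with a different design: $H:=H'_h\otimes P_G$ where $P:=I-\kb{000}-\kb{111}$ on a three-qubit GO register (three qubits suffice for $l=2$; the value $k\geq 5$ is the locality of $H$, namely $3$-local $H'$ times $2$-local $P$, not an orthogonality parameter as in your ``$n\geq 5$ GO qubits''). The check is thus active only when GO is \emph{partially} flipped, and vanishes on both all-zeros and all-ones, so $\ket{\psi}$ and $\ket{\phi}$ are exactly in the null space, and the honest path (prepare $\histstate$, flip two GO qubits --- paying only $\alpha$ since the history state has energy $\leq\alpha$ against $H'$ --- flip the third, uncompute) stays below $\aaa$. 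For soundness one takes $S$ and $T$ to be the \emph{full} subspaces with GO all-zeros and all-ones (these are $2$-orthogonal subspaces), so the Traversal Lemma's complement $P'=I_h\otimes P$ is precisely the region where the check is on; since in the NO case $H'\succeq\beta I$ \emph{globally}, one gets $\bra{\psi_i}H\ket{\psi_i}\geq\beta\bra{\psi_i}I\otimes P\ket{\psi_i}\geq\beta/(16m^2)=\bbb$ with no perturbation-theoretic or ``low-energy states are history states'' argument at all. Your arithmetic for $\Delta\in\Theta(1/m^5)$ ($1/m^2$ from the Traversal Lemma times $\beta\in\Omega(1/L^3)$ with $L\leq m$) does match the paper's, but the route you propose to the energy lower bound does not go through as stated.
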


\noindent{Remarks: Intuitively, this says that \gscon~is \class{QCMA}-complete when the unitaries $U_i$ are at most $2$-local, the number of unitaries scales polynomially, and the gap $\Delta$ scales inverse polynomially. Note that our proof in fact shows a stronger result than stated above: Recalling that \ffgscon\ (Definition~\ref{def:FFGSCON}) is the special case of \gscon\ in which $H$ is frustration-free and the starting state $\ket{\psi}$, final state $\ket{\phi}$, and all intermediate states are \emph{exactly} in the ground space of $H$ (as opposed to being low-energy states in the style of the original definition of the local Hamiltonian problem~\cite{KSV02}), our proof shows that \ffgscon\ (with the same parameter range as in Theorem~\ref{thm:qcmacomplete} except now $k\geq 7$) is also QCMA-complete. This is because, without loss of generality, one may assume in our QCMA-hardness reduction that the QCMA verifier we start with has perfect completeness\footnote{{The perfect-completeness QCMA construction of~\cite{JKNN12} assumes the verifier uses gates from a specific universal gate set including the Hadamard gate, which has irrational entries. Thus, in the definition of \ffgscon, we would instead allow the input to be specified using (e.g.) a quadratic field extension $\mathbb{F}:\mathbb{Q}$~\cite{cohen93}, as opposed to just rational entries as for \gscon.}}~\cite{JKNN12} (further details given in the proof of Lemma~\ref{l:qcmahard}).}

To prove Theorem~\ref{thm:qcmacomplete}, we prove \class{QCMA}-hardness and containment in \class{QCMA} separately. We begin with \class{QCMA}-hardness.

\subsection{\class{QCMA}-hardness}

We now show that \gscon~is \class{QCMA}-hard in the {regime described below.}

\begin{lemma}\label{l:qcmahard}
    There exists a polynomial $p$ such that \gscon~is \class{QCMA}-hard for $m \in O(p(n))$, $\Delta \in {O}(1/m^5)$, $l=2$, and $k \geq 5$, where $n$ denotes the number of qubits $H$ acts on.
\end{lemma}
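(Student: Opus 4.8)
The plan is to reduce an arbitrary \class{QCMA} verification circuit $V$ to an instance of \gscon, following the high-level construction sketched in the introduction. First I would take an arbitrary promise problem $A\in\class{QCMA}$ with verifier $V=V_L\cdots V_1$ acting on a proof register $A$ and ancilla register $B$, amplified (by standard sequential repetition) so that completeness error and soundness error are exponentially small; this will be needed so that the history-state energy $\alpha$ and the soundness bound $\beta$ from Lemma~\ref{l:kemperegev} are separated by an inverse-polynomial gap that we can match to $\Delta\in O(1/m^5)$. I would then add a register $G$ of ``\emph{GO}'' qubits and define the input Hamiltonian $H$ on $A\otimes B\otimes C\otimes G$ as (roughly) $H := \kb{1\cdots 1}_G \otimes H'_{ABC} + (\text{penalty terms coupling } G \text{ to the clock/ancilla})$, where $H'$ is the Kempe--Regev $3$-local Hamiltonian for $V$. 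The intuition, to be made precise, is: when the \emph{GO} register is all-zeroes the check Hamiltonian $H'$ is ``switched off'' so any state is (nearly) a ground state, and when \emph{GO} is all-ones $H'$ is ``switched on'' and only history states of accepting proofs have low energy. The starting state is $\ket{\psi}=\ket{0\cdots 0}$ and the target is $\ket{\phi}=\ket{0\cdots 0}_{ABC}\otimes\ket{1\cdots 1}_G$, both of which have zero energy against $H$ (since \emph{GO} is homogeneous), so the input promise $\bra{\psi}H\ket{\psi}\le\aaa$, $\bra{\phi}H\ket{\phi}\le\aaa$ holds with $\aaa$ tiny.

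The completeness (YES) direction is the easier one: if $x\in\ayes$, a classical witness $y$ exists, and since $V$ is a \class{QCMA} circuit the prover can hand $y$ to the sequence of unitaries, which then (i) from $\ket{0\cdots 0}$ coherently prepares the Kempe--Regev history state $\histstate$ for proof $\ket{y}$ using $O(L)$ two-qubit gates, (ii) flips the \emph{GO} qubits one at a time, and (iii) uncomputes the history state, ending at $\ket{\phi}$. Throughout phase (ii), because the underlying state is (close to) $\histstate$ which has energy $\le\alpha=\epsilon/(L+1)$ against $H'$, turning on $H'$ gradually costs energy at most $\approx\alpha$, which we arrange to be below $\aaa$; phases (i) and (iii) keep \emph{GO} all-zeroes so energy stays near zero. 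Counting gates gives $m\in O(\poly(n))$, and one sets $\aaa,\bbb$ so that $\bbb-\aaa\ge\Delta$ with $\Delta\in\Theta(1/m^5)$, using the $\Omega(1/L^3)$ promise gap of Lemma~\ref{l:kemperegev}.

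The soundness (NO) direction is where the \travlemmatwo{} does the real work, and I expect this to be the main obstacle — specifically, arguing that \emph{any} low-energy traversal must ``essentially'' prepare a valid history state. The key observation is that $\ket{\psi}=\ket{0\cdots0}$ and $\ket{\phi}$ differ only by flipping all of $G$, so if $G$ has $\ge k+1$ qubits (this is why $k\ge 5$ suffices once the locality budget is accounted for, since $l=2$-qubit gates can each touch few \emph{GO} qubits) the two are, say, $k$-orthogonal to the relevant low-energy subspaces; more carefully, I would identify $S$ as (a subspace of) the low-energy space of $H$ restricted to \emph{GO}-all-zeroes and $T$ similarly for \emph{GO}-all-ones, check these are $k$-orthogonal using Lemma~\ref{l:prop1} / the structure of $H$, and then invoke the \travlemmatwo{} (Lemma~\ref{l:traversal}): any successful $m$-step traversal must at some step $i$ have $\bra{v_i}P\ket{v_i}\ge((1-2\ccc)/2m)^2$, i.e. ``large'' overlap with states having \emph{GO} neither all-zeroes nor all-ones. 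At such an intermediate state the check Hamiltonian term $\kb{1\cdots1}_G\otimes H'$ together with the \emph{GO}/clock penalty terms must be activated on a nonnegligible component; combined with soundness of $V$ (no proof accepted, so $H'\succeq\beta$ on the \emph{GO}-on subspace by Lemma~\ref{l:kemperegev}), this forces $\bra{v_i}H\ket{v_i}\ge\bbb$, contradicting the assumption that we stayed below $\aaa$. Making the ``intermediate \emph{GO} configuration forces high energy'' step quantitative — tracking how the $1/m^2$ Traversal-Lemma overlap propagates through the penalty terms and the $\Omega(1/L^3)$ soundness gap to yield a clean $\bbb-\aaa\ge\Delta$ with $\Delta\in O(1/m^5)$ — is the delicate calculation, and is presumably exactly where the $m^5$ (rather than $m^2$ or $m^3$) comes from. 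I would finish by recording that the same construction works for any constant $l\ge 2$ and noting the role of the exponentially-amplified $V$ in keeping $\aaa$ negligible relative to $\Delta$.
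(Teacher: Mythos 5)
Your high-level skeleton (GO register, Kempe--Regev Hamiltonian $H'$, Traversal Lemma in the NO case) is the paper's approach, but the concrete Hamiltonian you propose, $H \approx \kb{1\cdots1}_G\otimes H' + (\text{penalties})$, does not work, and the difference is not cosmetic. With the check activated on GO $=1\cdots1$, the target state $\ket{\phi}=\ket{0\cdots0}_{ABC}\otimes\ket{1\cdots1}_G$ has energy $\bra{0\cdots0}H'\ket{0\cdots0}$, which is a constant (the all-zeroes state badly violates the propagation terms), so $\ket{\phi}$ fails the input promise $\bra{\phi}H\ket{\phi}\le\aaa$; moreover, your step (iii) uncomputes the history state \emph{while GO is all-ones}, so the intermediate states of the uncomputation are non-history states with the check switched on, i.e.\ high energy, breaking completeness. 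The paper instead takes $H := H'_h\otimes P_G$ with $P := I-\kb{000}-\kb{111}$ on a $3$-qubit GO register (written $2$-locally, whence $k=5$): the check is active precisely on the \emph{intermediate} GO configurations, so both endpoints have energy exactly $0$, and in the YES case the only moment the check is on is while the GO qubits are being flipped, at which point the underlying register holds the history state and the energy is $\le\alpha$. This choice also makes soundness immediate: with $S,T$ the full $+1$ eigenspaces of $I_h\otimes\kb{000}_G$ and $I_h\otimes\kb{111}_G$ (not ``low-energy subspaces,'' which would make $I-\Pi_S-\Pi_T$ too large to control energetically), the Traversal Lemma gives some $\ket{\psi_i}$ with $\bra{\psi_i}(I_h\otimes P)\ket{\psi_i}\ge (1/4m)^2$, and then $\bra{\psi_i}H'\otimes P\ket{\psi_i}\ge\beta\,\bra{\psi_i}I\otimes P\ket{\psi_i}\ge\beta/(16m^2)$; with $\beta=\Omega(1/L^3)=\Omega(1/m^3)$ this is the entire origin of the $1/m^5$, not a delicate propagation of penalty terms. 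Under your Hamiltonian, by contrast, overlap with the complement of $S\oplus T$ does not force high energy at all (e.g.\ high-$H'$-energy states with GO all-zeroes have zero energy), so the NO argument would not close.

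A second gap: you cannot apply the NO case of Lemma~\ref{l:kemperegev} directly to a \class{QCMA} verifier $V'$, because that lemma requires $V$ to reject all \emph{quantum} proofs, while \class{QCMA} soundness only speaks of classical strings. The paper fixes this by defining $V$ to first copy (measure) the proof register into fresh ancillas via CNOTs and then run $V'$; by deferred measurement, $V$ is then sound against arbitrary quantum proof states, which is what licenses the bound $H'\succeq\beta I$ in the NO case. Your sketch invokes ``soundness of $V$'' without this step. Finally, the GO register needs $l+1=3$ qubits so that $\ket{000}$ and $\ket{111}$ are $2$-orthogonal (a requirement on $l$, not on $k$); your ``$\ge k+1$ qubits, hence $k\ge5$'' conflates the unitary locality $l$ with the Hamiltonian locality $k$, which in the paper comes from $3$-local $H'$ tensored with the $2$-local $P$.
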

\begin{proof}
    At a high level, our approach is as follows. Given a \class{QCMA} verification circuit, let $H^{\rm KR}$ be the $3$-local Hamiltonian output by Kempe and Regev's circuit-to-Hamiltonian construction. Then, our aim is to construct another Hamiltonian $H$ such that ``traversing the ground space of $H$'' forces one to simulate the following protocol --- starting with an initial state of all zeroes:
    \begin{enumerate}
         \item Apply a sequence of $2$-qubit gates to prepare a ground state $\ket{\psi_{H^{\rm KR}}}$ of $H^{\rm KR}$.
         \item Flip a first ``\emph{GO}'' qubit to initiate a ``check'' that $\ket{\psi_{H^{\rm KR}}}$ is indeed a ground state of $H^{\rm KR}$.
         \item Flip a second {and third} ``\emph{GO}'' qubit to end the ``check''.
         \item Uncompute $\ket{\psi_{H^{\rm KR}}}$ to obtain a target state which is all zeroes, except for the ``\emph{GO}'' qubits, which are set to all ones.
    \end{enumerate}

    Formally, let $\Pi'$ be an instance of a \class{QCMA} problem with verification circuit $V'$ acting on a classical proof register $p$ and ancilla register $a$ consisting of $n_p$ and $n_a$ qubits, respectively. Using standard error reduction via parallel repetition, we may assume without loss of generality that $V'$ accepts (rejects) in the YES (NO) case with probability at least $\pacc\geq 1-2^{\enc{\Pi'}}$ ($\prej\geq 1-2^{\enc{\Pi'}}$), where ${\enc{\Pi'}}$ denotes the encoding length of $\Pi'$.

    Let $V$ denote a new circuit which first measures the proof register in the computational basis, and then runs $V'$. (A similar trick is used in~\cite{WJB03}; it directly ensures that the Hamiltonian $H$ we construct shortly has no low energy states of low complexity in the NO case by forcing \emph{all} eigenvalues of $H$ to be large in the NO case.) Formally, $V$ has the following properties: (1) $V$ has $n_a+n_p$ ancilla qubits initialized to all zeroes, (2) in time step $i\in[n_p]$, $V$ applies a CNOT gate with the $i$'th proof qubit as control and ancilla qubit $n_a+i$ as target, and (3) starting at time step $n_p+1$, $V$ simulates $V'$ while acting on register $p$ and the first $n_a$ qubits of $a$. A straightforward argument shows that $V$ accepts a proof if and only if $V'$ does. Moreover, unlike $V'$, the principle of deferred measurement~\cite{NC00} yields that $V$ is sound against a cheating prover which does not send a classical string $x$ as a proof.

    Next, we define our Hamiltonian $H$ based on $V$. Let $H^{\rm KR}$ denote the $3$-local Hamiltonian obtained from $V$ using Kempe and Regev's circuit-to-Hamiltonian construction~\cite{KR03}. Then, we define $H$ to act on a \emph{Hamiltonian} register denoted $h$ and \emph{GO} register denoted $G$. Specifically,
    \[
        H\in\herm{(\complex^2)^{\otimes{(2n_p+n_a+n_c)}}\otimes(\complex^2)^{\otimes 3}},
    \]
    where $n_c$ denotes the polynomial number of qubits used for the clock register of $H^{\rm KR}$,  and
    \begin{equation}\label{eqn:H}
        H:=H^{\rm KR}_h\otimes P_G \quad\quad\text{for}\quad\quad P:=I-\ketbra{000}{000}-\ketbra{111}{111}.
    \end{equation}
    Noting that $P$ can be written $2$-locally as
    \begin{eqnarray*}
        P &=& \frac{1}{2} ( \ketbra{01}{01}\otimes I+  \ketbra{10}{10}\otimes I + I\otimes \ketbra{01}{01}+I\otimes\ketbra{10}{10}
+ \\&&\mbox{\hspace{4mm}}\kb{1} \otimes I \otimes \kb{0} + \kb{0} \otimes I \otimes \kb{1}
        ),
    \end{eqnarray*}
    we have that $H$ is $5$-local. We define our initial and final states as
    \begin{equation}\label{eqn:states}
        \ket{\psi}:=\ket{0}^{\otimes (2n_p+n_a+n_c)} \ket{0}^{\otimes 3}\quad\quad\text{and}\quad\quad\ket{\phi}:=\ket{0}^{\otimes (2n_p+n_a+n_c)} \ket{1}^{\otimes 3}.
    \end{equation}
    Finally, letting $W$ denote a unitary circuit of size $\abs{W}$ which prepares the history state of $H$ given classical proof $x$, define $m:=2(n_p+\abs{W}+1)$. Note that $m$ is polynomial in the input size, since for any YES instance $\Pi$, $V'$ accepts a \emph{classical} proof, and hence the history state for $H^{\rm KR}$ can be prepared in polynomial time. (This observation was also made in~\cite{WJB03}.)
  Set $\ccc=0$, ${\ddd=1/4}$, $\aaa=\alpha$, and ${\bbb=\beta/(16m^2)}$, where $\alpha$ and $\beta$ come from Lemma~\ref{l:kemperegev}. Thus, $\aaa\in O(2^{-{\enc{\Pi'}}})$ and $\bbb\in \Omega(1/m^5)\in \Omega (1/\poly(\enc{\Pi'}))$ (where we have used the facts that $m \geq L$ for $L$ the number of gates in circuit $V$ and $m\in\poly({\enc{\Pi'}})$). Choose $\Delta\in {O}(1/m^5)$ and set $l=2$. Observe that $\Pi=(H,\aaa,\bbb,\ccc,\ddd,\Delta,l,m,\ket{\psi},\ket{\phi})$ is a valid instance of \gscon~which can be computed in polynomial time given $\Pi'=(V')$, as desired.

    We now show correctness. Suppose there exists a proof $x\in\set{0,1}^{n_p}$ accepted by $V$. We demonstrate a sequence $(U_i)_{i=1}^m$ of $2$-qubit unitaries mapping $\ket{\psi}$ to $\ket{\phi}$ through the ground space of $H$. First, note that $\ket{\psi}$ and $\ket{\phi}$ are in the null space of $H$, and hence $\bra{\psi}H\ket{\psi}\leq \aaa$ and $\bra{\phi}H\ket{\phi}\leq \aaa$, as required. Next, recall in Kempe and Regev's construction that the Hamiltonian register $h$ is itself composed of three sub-registers $h_1$, $h_2$, and $h_3$, corresponding to the \emph{proof}, \emph{ancilla}, and \emph{clock} registers for $H$, respectively. The desired sequence $(U_i)_{i=1}^m$ is then given as follows:
    \begin{enumerate}
        \item Apply Pauli $X$ gates to $h_1$ to prepare classical proof $x$, i.e., map $\ket{0}^{\otimes n_p}$ to $\ket{x}$.
        \item Apply $W$ to $h$ to prepare the history state $\ket{\text{hist}_x}$ of $H^{\rm KR}$.
        \item Apply $(X\otimes X\otimes I)_G$ to ``initiate'' checking of $\ket{\text{hist}_x}$.
        \item Apply $(I\otimes I\otimes X)_{G}$ to ``complete'' checking of $\ket{\text{hist}_x}$.
        \item Apply $W^\dagger$ to $h$ to uncompute $\ket{\text{hist}_x}$.
        \item Apply $X$ gates to $h_1$ to map the initial proof $\ket{x}$ back to $\ket{0}^{\otimes n_p}$.
    \end{enumerate}
    Note first that the length of the sequence above is at most $2(n_p+\abs{W}+1)$ gates, as desired. Second, the final state is equal to $\ket{\phi}$, and every intermediate state is in the null space of $H$ except for possibly after Step 3. As for after Step 3, let $\ket{a_3}$ denote our state at this point. Then, since a valid history state $\ket{\text{hist}_x}$ obtains energy $\bra{\text{hist}_x} H^{\rm KR} \ket{\text{hist}_x}\leq \alpha$, we have $\bra{a_3}H\ket{a_3}\leq\alpha=\aaa$, as desired. Thus, if $\Pi'$ is a YES instance, then $\Pi$ is a YES instance of \gscon. For clarity, note that the $G$ register consists of $3$ qubits (instead of $2$), since otherwise a two-qubit unitary can map $\ket{\psi}$ to $\ket{\phi}$ in a single step, bypassing the initiation of the checking of $\ket{\text{hist}_x}$ as in Step 3 above.

{
    Conversely, suppose $\Pi'$ is a NO instance, i.e., for all $x\in\set{0,1}^{n_p}$, $V$ rejects with high probability. Then, by Lemma~\ref{l:kemperegev}, the smallest eigenvalue of $H^{\rm KR}$ is at least $\beta$. Now, let $S$ and $T$ denote the $+1$ eigenspaces of projections $I_h \otimes \kb{000}_G$ and $I_h \otimes \kb{111}_G$, respectively. Observe that $S$ and $T$ are $2$-orthogonal subspaces, and that $\ket{\psi} \in S$ and $\ket{\phi} \in T$. Thus, for any sequence of two-qubit unitaries $(U_i)_{i=1}^m$, either $\norm{\ket{\psi_m} - \ket{\phi}}_2 \geq  1/4=\ddd $ (in which case we have a NO instance of \gscon~and we are done), or we can apply the Traversal Lemma (Lemma~\ref{l:traversal}) with $\epsilon=1/4$ to conclude that there exists an $i\in[m]$ such that
    \[
    	\bra{\psi_i} P' \ket{\psi_i} \geq \left( \dfrac{1}{4m} \right)^2 = \dfrac{\eta_2}{\beta},
    \]
where we define ${\ket{\psi_i}:=U_i\cdots U_1\ket{\psi}}$ and $P' = I - \Pi_S - \Pi_T$. Note that we can write $P'$ as $I_h \otimes P$.
We conclude that
\[
        \bra{\psi_i} H \ket{\psi_i}
=         \bra{\psi_i} H^{\rm KR} \otimes P \ket{\psi_i}
\geq     \beta     \bra{\psi_i} I_h \otimes P \ket{\psi_i}
=     \beta     \bra{\psi_i} P' \ket{\psi_i}
 \geq \eta_2,
    \]
where the first inequality follows since $H^{\rm KR}\succeq \beta I$.
}

{Finally, as alluded to in the remarks below Theorem~\ref{thm:qcmacomplete}, without loss of generality, the original QCMA verifier $V$ we started with can be assumed to have perfect completeness~\cite{JKNN12}. In this case, if we instead use\footnote{{The $3$-local Kempe and Regev~\cite{KR03} construction has non-positive terms in its propagation Hamiltonian which are not minimized by history states; thus, unlike Kitaev's $5$-local construction, it does not give rise to a frustration-free $H$ for a YES instance.}} Kitaev's original $5$-local circuit-to-Hamiltonian construction to define $H^{\rm KR}$, then for a YES instance $H$ here is $7$-local and frustration-free, and $\ket{\psi}$, $\ket{\phi}$, and all intermediate states $\ket{\psi_i}$ lie exactly in the ground space of $H$. Thus, we obtain QCMA-hardness of \ffgscon, i.e. \gscon\ is QCMA-hard even if in the YES case, we require that all $\ket{\psi_i}$ lie \emph{exactly} in the ground space of a frustration-free Hamiltonian.}
\end{proof}

\paragraph{Remark.} There is no loss of generality in restricting ourselves to $2$-qubit unitaries in the proof above. Specifically, the same proof applies almost identically if we instead allow $p$-qubit unitaries for any constant $p\geq2$ by changing Equation~(\ref{eqn:states}) to
\[
    \ket{\psi}:=\ket{0}^{\otimes (2n_p+n_a+n_c)} \ket{0}^{\otimes (p+1)}\quad\quad\text{and}\quad\quad\ket{\phi}:=\ket{0}^{\otimes (2n_p+n_a+n_c)} \ket{1}^{\otimes (p+1)},
\]
i.e., the \emph{GO} register consists more generally of $p+1$ qubits. Note that the \travlemmatwo~still applies in this more general setting, and second, the projector $P$ onto the \emph{GO} register can be represented as a $2$-local Hamiltonian regardless of the value of $p$, implying we still have ${k=5}$.

\paragraph{Remark.} In the proof of Theorem~\ref{l:qcmahard}, we used Kempe and Regev's $3$-local circuit-to-Hamiltonian construction. One might ask whether one of the known $2$-local constructions based on perturbation theory gadgets may instead be applied to reduce the locality of $H$ further. The main issue in doing so is that here we require the ability to construct the ground state efficiently. In other words, the perturbation theory reduction should ideally produce a ground state whose structure is similar to the history state. Now, Oliveira and Terhal~\cite{OT05} have in fact proven such a perturbation theory result in which the resulting $2$-local Hamiltonian's ground space approximates the starting Hamiltonian's ground space. However, we require a stronger statement than this. To explain, let $H$ denote a $k$-local Hamiltonian and $H'$ the $2$-local Hamiltonian resulting from the construction in~\cite{OT05}. Then, our proof requires a statement of the form\footnote{Note that in~\cite{OT05}, $H$ and $H'$ live in different spaces, so our statement here should not be read literally. Rather, it is intended to give a flavor of the ideal behavior we would like the perturbation theory reduction to obey, without getting into finer details in our discussion here.}: If $\bra{v}H\ket{v}\leq a$, then $\bra{v}H'\ket{v}\leq a$, and if $\bra{v}H\ket{v}\geq b$, then $\bra{v}H'\ket{v}\geq b$. Unfortunately, as far as we are aware, it seems the first of these conditions can be violated for the gadgets presented in~\cite{OT05}. Intuitively, what is happening here is that {although $\bra{v}H\ket{v}\leq a$ (i.e. the expectation is ``small''), it may be that $\ket{v}$ does not fully lie in the ground space of $H$}, but rather has some small overlap with a higher energy subspace $S$. If this higher energy space $S$ is then penalized strongly in $H'$, {then} $\bra{v}H'\ket{v}$ can be large.

\subsection{Containment in \class{QCMA}}

We now show that \gscon~with $2$-local unitaries $U_i$ is in \class{QCMA} so long as the gap $\Delta$ scales inverse polynomially and the number of unitaries $m$ scales polynomially with the input size.

\begin{lemma}\label{l:inQCMA}
    For any nonnegative constants $c_1$ and $c_2$, \gscon~is in \class{QCMA} for $\Delta\geq 1/n^{c_1}$, $m\leq n^{c_2}$, $l=2$, and $k\in O(\log n)$, where $n$ denotes the number of qubits $H$ acts on.
\end{lemma}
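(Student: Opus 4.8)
The plan is to give the natural ``guess-and-check'' \class{QCMA} protocol: the prover sends a classical description of a sequence of (approximate) $2$-local unitaries, and the verifier checks both that the intermediate energies stay low and that the final state is close to $\ket{\phi}$. First I would invoke the pseudo-net of Lemma~\ref{l:pseudonet} with $d = 4$ (since $l = 2$) and an accuracy $\epsilon$ to be chosen as a small inverse polynomial in $n$: the prover sends, for each $i \in [m]$, an index into the pseudo-net together with the $\le k = O(\log n)$ qubits on which $U_i$ acts. Since $\abs{N} = O(d^7/\epsilon^2)$ is $2^{O(\log(1/\epsilon))}$ and there are $m \le n^{c_2}$ unitaries, the whole message has $\poly(n)$ bits. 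The verifier runs the checking algorithm $C$ on each pseudo-net element and rejects if any fails; otherwise it applies the rounding algorithm $R$ to obtain genuine $2$-local unitaries $\widetilde U_i$ with $\snorm{\widetilde U_i - M_i} \le \epsilon$, so that the honestly-roundable sequence is within $\epsilon$ (in spectral norm) of an arbitrary intended unitary sequence.

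The verification of the two conditions is done by standard repetition. For the energy condition, the verifier prepares $\ket{\psi} = U_\psi\ket{0}^{\otimes n}$, applies $\widetilde U_1, \dots, \widetilde U_i$ for a uniformly random $i \in [m]$, and measures the local term $H_j$ for a random $j$ (rescaled to $[0,1]$ via $(I + H_j)/2$), repeating $\poly(n)$ times and estimating the average energy; accept if the estimate is below roughly the midpoint $(\aaa + \bbb)/2$. For the final-state condition, the verifier prepares $\widetilde U_m \cdots \widetilde U_1 \ket{\psi}$ and $\ket{\phi} = U_\phi \ket{0}^{\otimes n}$ and runs a SWAP test (or measures $\ketbra{\phi}{\phi}$ via $U_\phi^\dagger$) $\poly(n)$ times to estimate $\abs{\brakett{\phi}{\widetilde U_m \cdots \widetilde U_1 \psi}}^2$; accept if this is large enough to certify Euclidean distance between $\ccc$ and $\ddd$. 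Completeness: in a YES instance, the prover sends pseudo-net elements $\epsilon$-close to the true optimal $U_i$'s that are accepted by $C$ (guaranteed by the checking property); by a telescoping/hybrid argument $\snorm{\widetilde U_m \cdots \widetilde U_1 - U_m \cdots U_1} \le m\epsilon$, so each intermediate state $\ket{\widetilde\psi_i}$ satisfies $\enorm{\ket{\widetilde\psi_i} - \ket{\psi_i}} \le i\epsilon \le m\epsilon$, and since $\snorm{H} \le \norm{H}_1 \le \text{(number of terms)}$ is at most polynomial, this perturbs each measured energy and the final overlap by only $\poly(n)\cdot\epsilon$; choosing $\epsilon$ a sufficiently small inverse polynomial makes this negligible compared to $\Delta \ge 1/n^{c_1}$, so the honest strategy passes with probability $\ge 2/3$. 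Soundness: in a NO instance, for \emph{every} genuine $2$-local sequence either some intermediate energy is $\ge \bbb$ or the final distance is $\ge \ddd$; since any sequence the verifier can actually produce via $R$ is $m\epsilon$-close to a genuine sequence, the same holds (up to the negligible $\poly(n)\cdot\epsilon$ slack) for the rounded sequence, so one of the two tests catches it and the verifier rejects with probability $\ge 2/3$. Amplify via parallel repetition if desired.

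The main obstacle I anticipate is bookkeeping the error propagation precisely: one must verify that the accumulated spectral-norm error $m\epsilon$ from the pseudo-net, combined with the operator norm $\snorm{H} \le \poly(n)$ and the statistical error from $\poly(n)$ repetitions of the energy/SWAP-test estimates, all stay safely below the promise gap $\Delta \ge 1/n^{c_1}$ and the gap $\ddd - \ccc \ge \Delta$. This is routine but needs care — in particular one should note that $H = \sum_i H_i$ has at most $\binom{n}{k} = \poly(n)$ terms when $k = O(\log n)$ (using $\snorm{H_i} \le 1$), so $\snorm{H} \le \poly(n)$ and the rescaled single-term measurements have expectation $\bra{\psi_i}(I + H_j)/2\ket{\psi_i}$ whose average over $j$ recovers $\bra{\psi_i} H \ket{\psi_i}$ up to a known affine rescaling. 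A second minor point is that the verifier only ever needs to \emph{apply} $k = O(\log n)$-qubit unitaries and measure $k$-local observables, both of which are efficient, so the circuit $Q_n$ is genuinely polynomial-time; the restriction $k \in O(\log n)$ (rather than arbitrary $k$) is exactly what makes the pseudo-net indices and the local measurements polynomial-size.
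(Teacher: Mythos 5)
Your overall architecture — prover sends pseudo-net indices, verifier runs the checking algorithm $C$ and rounding algorithm $R$ from Lemma~\ref{l:pseudonet}, a telescoping bound gives $\snorm{V_m\cdots V_1-U_m\cdots U_1}\le O(m\epsilon)$, and $\epsilon$ is chosen inverse-polynomially small relative to $\Delta$ and the number $L$ of local terms — is exactly the paper's proof, and your observations that the rounded sequence is itself a genuine $2$-local unitary sequence (so the NO promise applies to it directly) and that $L\le\poly(n)$ for $k\in O(\log n)$ are the right ingredients. (Minor slip: the unitaries are $l=2$-local, not $k$-local; only the Hamiltonian terms are $O(\log n)$-local. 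This does not affect anything.)

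However, your energy test as stated is not sound. You propose to pick a uniformly random $i\in[m]$, prepare $\ket{\psi_i}$, measure a random rescaled term, repeat $\poly(n)$ times, and accept if the resulting \emph{average} is below roughly $(\aaa+\bbb)/2$. The NO promise only guarantees that \emph{some single} index $i^\ast$ has $\bra{\psi_{i^\ast}}H\ket{\psi_{i^\ast}}\ge\bbb$ (unless the final state is far); the remaining $m-1$ intermediate energies may be arbitrarily small. Averaging over a random $i$ therefore dilutes the violation by a factor $1/m$: the NO-case average can be as small as $\bbb/m$, while the YES-case average can be as large as $\aaa$. Whenever $\aaa>\bbb/m$ (e.g.\ $\aaa=1/2$, $\bbb=1/2+\Delta$, $m=n^{c_2}$), a NO instance produces a \emph{smaller} average than a legitimate YES instance, so no threshold on the average — midpoint or otherwise — distinguishes the two cases. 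The fix is what the paper does: since $m\le n^{c_2}$ is polynomial, estimate $\bra{\psi_t}H\ket{\psi_t}$ \emph{separately for every} $t\in[m]$ (preparing polynomially many copies of each $\ket{\psi_t}$ and using phase estimation, or your term-sampling estimator applied per fixed $t$) to additive accuracy well below $\Delta$, rejecting if any single estimate exceeds $\aaa$ plus the slack; alternatively, if you insist on sampling $i$, you must repeat over $\Omega(m)$ independent choices of $i$ and test each sampled index against the per-state threshold, which amounts to the same thing. With that correction, the completeness/soundness bookkeeping you outline (net error $O(m\epsilon L)$ versus the gap $\Delta\ge 1/n^{c_1}$) goes through as in the paper.
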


\begin{proof}
    Let $\Pi=(H,\aaa,\bbb,\ccc,\ddd,\Delta,l,m,\ket{\psi},\ket{\phi})$ be an instance of $\gscon$. The proof system is given below. Steps 4 and 5 follow standard ideas; thus, we simply sketch them here. Let $L$ denote the number of local terms in $H$.
    \begin{algorithm}
    \begin{enumerate}
        \item The prover sends a sequence $(\widetilde{U}_i)_{i=1}^m \subseteq\lin{\complex^2\otimes \complex^2}$ of matrices from the $\epsilon$-pseudo-net of Lemma~\ref{l:pseudonet}, for $\epsilon :=\Delta/16mL$.
        \item (Unitary check) The verifier runs algorithm $C$ from Lemma~\ref{l:pseudonet} on each $\widetilde{U}_i$, and rejects if $C$  rejects.
        \item (Rounding step) The verifier uses algorithm $R$ from Lemma~\ref{l:pseudonet} to construct a sequence ${(V_i)_{i=1}^m \subseteq\unitary{\complex^2\otimes \complex^2}}$ such that for all $i\in [m]$, $\snorm{\widetilde{U}_i-V_i}\leq \epsilon$.
        \item (Low energy check) Define ${\ket{\psi_t}:=V_t\cdots V_1\ket{\psi}}$. For all $t\in[m]$, the verifier prepares state $\ket{\psi_t}$ a polynomial number of times, and runs Kitaev's phase estimation procedure~\textup{\cite{KSV02}} to estimate $\bra{\psi_t}H\ket{\psi_t}$ within inverse polynomial accuracy. The verifier rejects if this estimate is larger than $\eta_1+\epsilon$.
        \item (Close to target state check) The verifier performs the SWAP test~\textup{\cite{BCWW01}} between $\ket{\psi_m}$ and $\ket{\phi}$ polynomially many times to estimate $\enorm{\ket{\psi_m}-\ket{\phi}}$ within inverse polynomial accuracy. The verifier rejects if this estimate is larger than $\eta_3+\epsilon$.
        \item The verifier accepts.
    \end{enumerate}
    \caption{\class{QCMA} proof system for \gscon}
    \end{algorithm}
    \noindent The verifier's action is clearly implementable by a polynomial size quantum circuit.

    We now show correctness. Let $N$ denote the $\epsilon$-pseudo-net over $2$-qubit unitaries from Lemma~\ref{l:pseudonet} (i.e., $d=4$ in Lemma~\ref{l:pseudonet}), for $\epsilon$ as chosen above. Suppose now that $\Pi$ is a YES instance, i.e., there exists a sequence of $2$-qubit unitaries $(U_i)_{i=1}^m$ mapping $\ket{\psi}$ to $\ket{\phi}$ through the ground space of $H$. Then, in Step 1, the prover sends sequence $(\widetilde{U}_i)_{i=1}^m\in N^{\times m}$ such that $\snorm{U_i-\widetilde{U}_i}\leq \epsilon$ for $i\in[m]$. By   Definition~\ref{def:pseudonet} and Lemma~\ref{l:pseudonet}, Step 2 will pass and the conditions of Step 3 will be met with certainty.

Next, we claim that for all $t\in[m]$, $\snorm{{U}_t\cdots{U}_1-{V}_t\cdots{V}_1}\leq 2\epsilon t$. To see this, we first bound
\[
        \snorm{{U}_t\cdots{U}_1-{V}_t\cdots{V}_1}\leq \snorm{{U}_t\cdots{U}_1-\widetilde{U}_t\cdots \widetilde{U}_1}+\snorm{\widetilde{U}_t\cdots\widetilde{U}_1-V_t\cdots V_1}
\]
and use the fact~\cite{NC00} that for any two quantum circuits $U=U_{t}\cdots U_1$ and $V=V_{t}\cdots V_1$ satisfying $\snorm{U_j-V_j}\leq \epsilon$, we have $\snorm{U-V}\leq \sum_{i=1}^t\snorm{U_i-V_i}$. Defining $\ket{u_t}:=U_t\cdots U_1\ket{\psi}$ and recalling that $\ket{\psi_t} := V_t\cdots V_1\ket{\psi}$, it follows that for all $t\in[m]$, $\enorm{\ket{u_t}-\ket{\psi_t}}\leq 2\epsilon m$. Thus,
\[
        \abs{\tr(H\ketbra{u_t}{u_t})-\trace(H\ketbra{\psi_t}{\psi_t})}\leq\snorm{H}\trnorm{\ketbra{u_t}{u_t}-\ketbra{\psi_t}{\psi_t}}\leq 4\epsilon m L,
\]
where recall $L$ denotes the number of local terms in $H$, the first inequality follows from H\"{o}lder's inequality, and the second by Equation~(\ref{eqn:enorm}).

Since we chose $\epsilon=\Delta/16mL$, we have $(\eta_2-4\epsilon m L)-(\eta_1+4\epsilon m L)\geq\Delta/2$ and we also have ${(\eta_4-2\epsilon m)-(\eta_3+2\epsilon m)\geq\Delta/2}$, i.e., the error incurred by using our net $N$ shifts the thresholds which Steps $4$ and $5$ must distinguish between by at most $\Delta/4$ each, leaving gaps of size $\Delta/2$. But $\Delta/2$ is inverse polynomially large; thus, with high probability (i.e., inverse exponentially close to $1$), Steps $4$ and $5$ do not reject. We conclude that with high probability, the verifier accepts, as desired.

Conversely, suppose we have a NO instance. Then, either the verifier rejects in Step 2, or it runs Step 3 to ``round'' the prover's provided matrices into a sequence of unitaries $(V_i)_{i=1}^m$. But by the NO conditions of \gscon, we know that for our choice of $\epsilon$, either Step 4 or Step 5 must now reject with high probability (i.e., inverse exponentially close to $1$).
\end{proof}

\section{\class{PSPACE}-completeness} \label{sect:GSCON_pspace}

In this section, we show the following theorem.

\begin{theorem}\label{thm:pspacecomplete}
    \gscon~is \class{PSPACE}-complete for $m=2^n$, $\Delta=2^{-(2n+4)}$, $l=1$, $k=3$, where $n$ denotes the number of qubits $H$ acts on.
\end{theorem}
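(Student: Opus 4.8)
The plan is to establish \class{PSPACE}-hardness and containment in \class{PSPACE} separately, mirroring the structure of the \class{QCMA}-completeness proof but now reducing from the classical reconfiguration problem $\stcon$ (which is \class{PSPACE}-complete by Theorem~\ref{thm:GKMP06}) for the hardness direction. For hardness, given a $3$-CNF formula $\phi$ on $n$ variables and two satisfying assignments $x,y$, I would encode $\phi$ directly into a diagonal $3$-local Hamiltonian $H = \sum_j H_j$ where each $H_j$ is the projector onto the (up to $3$) falsifying assignments of clause $j$; the computational basis states in the null space of $H$ are then exactly the satisfying assignments of $\phi$. Set $\ket{\psi} := \ket{x}$ and $\ket{\phi} := \ket{y}$, which are product states preparable by trivial $X$-gate circuits $U_\psi, U_\phi$. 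A sequence of $1$-local unitaries taking $\ket{\psi}$ through the ground space of $H$ should correspond to a Hamming-path of satisfying assignments from $x$ to $y$. The number of assignments is $2^n$, so we take $m = 2^n$; the crucial subtlety is setting the gap parameters so that a genuine $1$-qubit unitary path is forced to behave ``classically.''

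\textbf{Forcing classical behavior — the main obstacle.} The heart of the hardness direction is arguing that in the NO case (i.e. $x$ and $y$ are \emph{not} connected in the solution graph of $\phi$), no sequence of $2^n$ single-qubit unitaries can map $\ket{x}$ close to $\ket{y}$ through the low-energy space. This is where the Traversal Lemma (Lemma~\ref{l:traversal}) enters: I would partition the $n$ qubits' computational basis states into the connected component $C_x$ of $x$ and its complement (which contains $y$), and observe that a $1$-local unitary can only move between basis states of Hamming distance $1$. Since crossing from $C_x$ to $y$'s component requires passing through a \emph{non-}solution (an assignment falsifying some clause, which has energy $\geq 1$), I want to set up $k$-orthogonal subspaces $S$ and $T$ — roughly, $S = \mathrm{span}\{\ket{z} : z \in C_x\}$ and $T = \mathrm{span}\{\ket{z} : z \not\in C_x, z \text{ a solution}\}$ — and check these are $1$-orthogonal (any two basis strings reachable from each other by a single $1$-qubit unitary differ in only one coordinate and both lie in the solution set, so they're in the same component; more carefully, one uses that $S$ and $T$ as defined are spanned by computational basis states no two of which are Hamming-distance-$1$ neighbors both being solutions). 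Then Lemma~\ref{l:traversal} gives an intermediate state $\ket{v_i}$ with $\bra{v_i}(I - \Pi_S - \Pi_T)\ket{v_i} \geq (1/(4m))^2 = 2^{-(2n+4)}$, and since the support of $I - \Pi_S - \Pi_T$ consists of non-solution basis states (plus irrelevant superpositions), this forces $\bra{v_i} H \ket{v_i} \geq 2^{-(2n+4)} = \Delta$ — hence $\bbb$ can be taken as $\Delta$ with $\aaa = 0$. One must double-check that $I - \Pi_S - \Pi_T$ is supported on states of energy at least $1$ against $H$, so the factor-$\Delta$ overlap really does cost $\Delta$ in energy; this may require slightly enlarging $T$ to include \emph{all} non-$C_x$ states or arguing component-by-component, and getting this decomposition exactly right is the part I expect to be fiddliest.

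\textbf{Containment in \class{PSPACE}.} For the upper bound, I would give a polynomial-space (nondeterministic, then apply Savitch / the fact that $\class{NPSPACE} = \class{PSPACE}$, or argue deterministically) search over reconfiguration paths. The key point is that $l = 1$ means every intermediate state $\ket{\psi_i} = U_i \cdots U_1 \ket{\psi}$ is obtained from the previous by a single-qubit unitary; using the exact $\epsilon$-net over $\unitary{\complex^2}$ from Lemma~\ref{l:net} with $\epsilon$ inverse-exponentially small (which still has only exponentially many elements, each indexable and computable in polynomial space), one can restrict attention to net unitaries. Since $\ket{\psi}$ and $\ket{\phi}$ are generated by polynomial-size circuits and each $U_i$ acts on one qubit, each intermediate amplitude vector has entries computable to exponential precision in \class{PSPACE}; one iterates over the (exponentially many) net unitaries, maintaining only the current state description in polynomial space, checks the energy condition $\bra{\psi_i} H \ket{\psi_i} \leq \aaa$ at each step, and after at most $m = 2^n$ steps checks closeness to $\ket{\phi}$. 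The error analysis — showing the net granularity $\epsilon$ can be chosen so the accumulated error over $2^n$ steps stays within the $\Delta$-gap (this needs $\epsilon \approx \Delta \cdot 2^{-n}/\poly$, still giving an exponential-size net of polynomially-bounded index length) — is routine given Lemma~\ref{l:net}, and since \class{PSPACE} is closed under the resulting exponential-time-polynomial-space search, containment follows. Combining the two directions yields \class{PSPACE}-completeness for the stated parameters.
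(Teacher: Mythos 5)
Your hardness direction is essentially the paper's proof: reduce from \stcon, encode each clause as a diagonal projector onto its falsifying assignment, simulate bit flips by Pauli $X$ gates in the YES case, and in the NO case take $S$ to be the span of the solutions reachable from $x$ and $T$ the span of the remaining solutions, apply the Traversal Lemma with $\epsilon=1/4$ to get overlap $(1/4m)^2=2^{-(2n+4)}$ with $P'=I-\Pi_S-\Pi_T$, and convert overlap into energy. One clarification on the part you flag as fiddly: no enlargement of $T$ is needed, and enlarging $T$ to all non-$C_x$ basis states would actually destroy $1$-orthogonality (a solution in $C_x$ can have a non-solution Hamming neighbor). With $T$ exactly the remaining solutions, $S\cup T$ spans precisely the satisfying assignments, so $P'$ is diagonal and supported exactly on non-solutions, each of which has $H$-energy at least $1$; hence $H\succeq P'$ and $\bra{v_i}H\ket{v_i}\geq\bra{v_i}P'\ket{v_i}\geq 2^{-(2n+4)}=\bbb$, which is how the paper closes the argument.

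The containment direction has a genuine gap. You assert that one can iterate for $m=2^n$ steps ``maintaining only the current state description in polynomial space,'' but you never say what that description is, and the natural candidates fail: the amplitude vector of $\ket{\psi_i}$ has $2^n$ entries, the list of unitaries guessed so far has length up to $2^n$, and the \emph{exact} cumulative product of up to $2^n$ net unitaries on a single qubit generally needs exponentially many bits of precision. This is exactly where the paper's proof does real work: because $l=1$, it stores only the $n$ per-qubit cumulative unitaries $V_i=V_{i,1}\otimes\cdots\otimes V_{i,n}$ (so the intermediate state is succinctly described by $U_\psi$ together with a tensor-product unitary, and energy/proximity tests are recomputed on the fly), and, crucially, it \emph{rounds the cumulative per-qubit product back into the $\epsilon$-net at every iteration} to keep each $V_{i,q}$ representable with polynomially many bits. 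The rounding introduces an extra error per step, which is why the accumulated deviation is $(2(i-1)+1)\epsilon$ rather than $i\epsilon$, and why $\epsilon$ is chosen as $\Delta/(8L(2(m-1)+1))$ and the energy/proximity thresholds are tested with slack ($\aaa+\Delta/3$, $\ccc+\Delta/4$) rather than exactly. Your error analysis accounts only for discretizing each guessed unitary, not for this repeated re-rounding of the cumulative operators; without it (or an equivalent bounded-precision truncation argument), the claimed polynomial-space bound does not follow. The rest of your plan for containment (nondeterministic search over net unitaries plus Savitch's theorem) matches the paper once this representation-and-rounding step is added.
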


\noindent Intuitively, this says that \gscon~is \class{PSPACE}-complete when the unitaries  are  $1$-local, the number of unitaries scales exponentially, and the gap $\Delta$  scales inverse exponentially. To show this, we prove \class{PSPACE}-hardness and containment in \class{PSPACE} separately. We begin with \class{PSPACE}-hardness.

\subsection{\class{PSPACE}-hardness}

We now show \class{PSPACE}-hardness of \gscon~for the case of exponentially many $1$-local unitaries and exponentially small gap $\Delta$.

\begin{lemma}\label{l:PSPACEhard}
    \gscon~is \class{PSPACE}-hard for $k=3$, $\aaa=\ccc=0$, $\bbb=2^{-(2n+4)}$, $\ddd=1/4$, $\Delta=2^{-(2n+4)}$, $l=1$, and $m=2^{n}$, where $n$ denotes the number of qubits $H$ acts on.
\end{lemma}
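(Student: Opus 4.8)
The plan is to reduce from $\stcon$ (Definition~\ref{def:STCON}), which is \class{PSPACE}-complete by Theorem~\ref{thm:GKMP06}. Given a $3$-CNF formula $\phi$ on $n$ variables together with solutions $x,y\in\set{0,1}^n$, I would build a local Hamiltonian whose ground space consists (essentially) of the computational basis states $\ket{z}$ with $z$ a satisfying assignment of $\phi$, plus a \emph{GO}-style gadget analogous to the one used in Lemma~\ref{l:qcmahard}. Concretely, for each clause $C_j$ of $\phi$ let $H_j$ be the $3$-local projector onto the (single) assignment of its three variables violating $C_j$; then $H_{\phi}:=\sum_j H_j$ is $3$-local, positive semidefinite, has null space exactly $\Span\set{\ket{z} : z \models \phi}$, and has smallest nonzero eigenvalue at least $1$. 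Since we are restricted to $l=1$ (single-qubit unitaries) and want $k=3$, I would work with $H:=H_\phi$ directly, with no extra \emph{GO} register at all: the initial and final states are $\ket{\psi}:=\ket{x}$ and $\ket{\phi}:=\ket{y}$, which lie in the null space of $H$ since $x,y\models\phi$. Set $\aaa=\ccc=0$, $\ddd=1/4$, and $\bbb=\Delta=2^{-(2n+4)}$ as in the statement, $l=1$, $m=2^n$.

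For completeness (YES case), suppose $x$ and $y$ are connected in $\stcon$ via a path $x=x_1,x_2,\dots,x_T=y$ of satisfying assignments with consecutive Hamming distance $\le 1$ and $T\le 2^n$ (one may always take a path of length at most $2^n$ since there are at most $2^n$ solutions; pad with the identity to reach exactly $m=2^n$). Each single-bit flip from $x_i$ to $x_{i+1}$ is implemented by a single Pauli $X$ on the flipped qubit — a $1$-local unitary — and every intermediate state $\ket{x_i}$ is a satisfying assignment, hence in the null space of $H$, so $\bra{\psi_i}H\ket{\psi_i}=0\le\aaa$. The final state is exactly $\ket{y}=\ket{\phi}$, so $\enorm{\psi_m-\phi}=0\le\ccc$. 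Thus we output YES.

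For soundness (NO case), suppose $x$ and $y$ are \emph{not} connected in $\stcon$. I want to show every sequence of $1$-local unitaries either sends some intermediate state to energy $\ge\bbb$ or ends far from $\ket{\phi}$. The key point is that the connected component of $x$ in the solution-graph of $\phi$ spans a subspace $S$, and the remaining solutions (in particular those reachable from $y$) span a subspace $T$, and $S$ and $T$ are $1$-orthogonal: by Lemma~\ref{l:prop2} (or directly), two computational basis states $\ket{a},\ket{b}$ are $1$-orthogonal iff they differ in at least two coordinates, and $S$, $T$ are spanned by basis states $\ket{z}$ that are pairwise Hamming-distance $\ge 2$ across the two components (since a single bit flip staying within solutions would connect the components). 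Hence I can apply the Traversal Lemma (Lemma~\ref{l:traversal}) with $k=1$, $\epsilon=1/4$: either $\enorm{\psi_m-\phi}\ge 1/4=\ddd$ (NO instance, done, since $\ket\phi=\ket y\in T$), or there is an $i$ with $\bra{\psi_i}P\ket{\psi_i}\ge\left(\tfrac{1-2\epsilon}{2m}\right)^2 = (1/(4m))^2 = 2^{-(2n+4)}$, where $P=I-\Pi_S-\Pi_T$. Since $H=H_\phi\succeq I\cdot(I-\Pi_{\mathrm{null}})$ and $P\le I-\Pi_{\mathrm{null}}$ (the orthogonal complement of the full solution space is penalized by at least $1$, and $S\oplus T$ is a subspace of the null space), we get $\bra{\psi_i}H\ket{\psi_i}\ge\bra{\psi_i}P\ket{\psi_i}\ge 2^{-(2n+4)}=\bbb$, giving a NO instance.

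The main obstacle I anticipate is the careful bookkeeping around the subspace decomposition and the $1$-orthogonality claim: one must verify that the null space of $H_\phi$ decomposes cleanly as $S\oplus T$ (it does, since solutions not reachable from $x$ — whether or not reachable from $y$ — can all be grouped into $T$, or handled by noting the relevant statement only needs $S$ and $\mathrm{Span}\set{\ket{y\text{'s component}}}$ to be $1$-orthogonal and $\ket{x}\in S$, $\ket{y}\in T$), and that a single-qubit unitary applied to a basis state in $S$ cannot leak into $T$ without incurring overlap with $P$ — which is exactly the content of $1$-orthogonality and the Traversal Lemma. A secondary point to handle carefully is the $m=2^n$ padding and confirming $m\ge$ (length of any solution path), which holds since the solution graph has at most $2^n$ vertices. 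Everything else (the spectral gap of $H_\phi$ being $\ge 1\gg\bbb$, locality $k=3$, the parameter inequalities $\bbb-\aaa\ge\Delta$ and $\ddd-\ccc\ge\Delta$) is immediate from the construction.
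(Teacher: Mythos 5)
Your proposal is correct and follows essentially the same route as the paper: the trivial clause-by-clause embedding of $\phi$ into a diagonal $3$-local $H$, Pauli-$X$ simulation of the bit-flip path in the YES case, and in the NO case the split of the solution space into $S$ (the component of $x$) and $T$ (all remaining solutions), $1$-orthogonality via Hamming distance $\geq 2$, the Traversal Lemma with $\epsilon=1/4$, and $H\succeq I-\Pi_S-\Pi_T$ to convert overlap into energy. One caution on your parenthetical alternative: taking $T$ to be only $y$'s component would break the step $P\preceq H$, since other solution components would then be penalized by $P$ but not by $H$; your main choice of $T$ as all remaining solutions (as in the paper) is the right one.
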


\begin{proof}
We show a polynomial-time many-one or Karp reduction from~$\stcon$~(which by Theorem~\ref{thm:GKMP06} is \class{PSPACE}-complete) to $\gscon$. Specifically, let $\Pi=(\phi,x,y)$ be an instance of $\stcon$~for \textup{$3$-CNF} $\phi$. The main idea is to embed $\phi$ trivially into a $3$-local Hamiltonian $H$ as follows. For each clause $c_i$ of $\phi$, we define a local Hamiltonian constraint $H_i$ to penalize the unique $3$-bit ``bad'' assignment to $c_i$, i.e., $H_i := \ketbra{z_i}{z_i}$ for $c_i(z_i)=0$. {Setting our parameters as in the theorem statement, we
thus obtain an instance} $\Pi'=(H := \sum_i H_i,\aaa,\bbb,\ccc,\ddd, \Delta, l, m,U_x,U_y)$ of $\gscon$, where $U_x\ket{0\cdots 0}=\ket{x}$ and $U_y\ket{0\cdots 0}=\ket{y}$ for
the strings $x$ and $y$, respectively, from the $\stcon$~instance. Now, given strings $x,y\in\set{0,1}$, it is trivial that if $\Pi$ is a YES instance of $\stcon$, then $\Pi'$ is a YES instance of $\gscon$: Namely, simulate local bit flips on strings by Pauli $X$ gates to map $\ket{x}$ to $\ket{y}$ while staying in the null space of $H$. Note that since there are at most $2^n$ distinct strings on $n$ bits, at most $m=2^n$ Pauli $X$ gates suffice to map $\ket{x}$ to $\ket{y}$.

Conversely, suppose $\Pi$ is a NO instance of $\stcon$.
Let $S$ denote the subspace corresponding to the span of all states $\ket{z}$ such that $z$ can be obtained via a sequence of bit flips from $x$, where each string in the sequence is a satisfying assignment to $\phi$. Let $T$ denote the span of all remaining satisfying assignments. Note that $\ket{x} \in S$, $\ket{y} \in T$. Also, the Hamming distance from any computational basis state in $S$ to computational basis state in $T$ is at least $2$; thus, $S$ and $T$ are $1$-orthogonal subspaces. From the Traversal Lemma (Lemma~\ref{l:traversal}), we know for any sequence of one-qubit unitaries $(U_i)_{i=1}^m$ that either $\norm{\ket{\psi_m} - \ket{\phi}}_2 \geq \ddd = 1/4$, or there exists an $i\in[m]$ such that $\bra{\psi_i} P' \ket{\psi_i} \geq ( {1}/(4m) )^2 = 2^{-(2n+4)}$,
where we again define ${\ket{\psi_i}:=U_i\cdots U_1\ket{\psi}}$ and $P' = I - \Pi_S - \Pi_T$. Thus, if it were the case that $H \succeq P'$, then
\[
	\bra{\psi_i} H \ket{\psi_i} \geq \bra{\psi_i} P' \ket{\psi_i} \geq \dfrac{1}{2^{2n+4}} = \eta_2,
\]
as desired. To see that indeed $H \succeq P'$, note that $H$ and $P'$ are diagonal matrices with non-negative integer entries satisfying for $z \in \{ 0, 1\}^n$:
\[
(\bra{z} H \ket{z} = 0 \iff \bra{z} P' \ket{z} = 0) \quad \text{ and } \quad
(\bra{z} H \ket{z} \geq 1 \iff \bra{z} P' \ket{z} = 1).
\]
This concludes the proof.

\end{proof}

\noindent Remark: Note that in \stcon, one may require exponentially many bit flips (i.e. exponential $m$) in general to map $x$ to $y$, as each bit flip must preserve the property that the current assignment is a satisfying assignment to the 3-CNF $\phi$. Thus, the Hamming distance between $x$ and $y$ is in general a loose lower bound on the number of bit flips required.

\subsection{Containment in \class{PSPACE}}

We now show that \gscon~is in \class{PSPACE} for exponentially many $1$-local unitaries $U_i$ and inverse exponential gap $\Delta$.

\begin{lemma}\label{l:inPSPACE}
    For all nonnegative constants $c_1$ and $c_2$, \gscon~with $l=1$ is in \class{PSPACE} for $m\leq2^{n^{c_1}}$ and $\Delta\geq 1/2^{n^{c_2}}$, where $n$ denotes the number of qubits $H$ acts on.
\end{lemma}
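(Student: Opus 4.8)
\textbf{Proof proposal for Lemma~\ref{l:inPSPACE}.}

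The plan is to give a \class{PSPACE} algorithm by reduction to the \class{PSPACE}-complete problem \stcon, or more directly, by exhibiting an alternating polynomial-space (equivalently, exponential-time but polynomial-space) search procedure that decides \gscon~in this parameter regime. Since $m \le 2^{n^{c_1}}$ and each unitary is $1$-local, the natural approach is a reachability-style search over a suitably discretized configuration space, mirroring the structure of \stcon. First I would invoke the $\epsilon$-net over single-qubit unitaries from Lemma~\ref{l:net}, choosing $\epsilon$ inverse-exponentially small (something like $\epsilon := \Delta / (16 m L)$, where $L$ is the number of local terms of $H$), so that the net has size $O(\epsilon^{-8})$, i.e. singly-exponential, and each net element is computable in space $\poly(n)$. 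The point of using the \emph{exact} net of Lemma~\ref{l:net} (rather than the pseudo-net) is that every element is genuinely unitary, so composing net elements keeps us on honestly normalized states and no rounding step is needed — exactly the remark made just before Lemma~\ref{l:net}.

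Next I would set up the search. A ``configuration'' is a pair $(j, \ket{\chi})$ where $j \in \{0,1,\dots,m\}$ is a step counter and $\ket{\chi}$ is the current state, \emph{represented implicitly} as the word $U_{i_j}\cdots U_{i_1} U_\psi \ket{0}^{\otimes n}$ — i.e. we store only the indices $(i_1,\dots,i_j)$ into the net together with the locality data (which qubit(s) each acts on) plus a description of $U_\psi$. Crucially, given such a description, the amplitude $\bra{z}\ket{\chi}$ of any computational basis state $z$, and hence quantities like $\bra{\chi}H\ket{\chi}$ and $\enorm{\ket{\chi}-\ket{\phi}}$, can each be computed to inverse-exponential precision in \emph{polynomial space}: one evaluates the matrix product entrywise, and matrix multiplication of exponentially-large but explicitly-described matrices is in \class{PSPACE} (equivalently, one exploits that these are $\poly(n)$-qubit quantum circuits of length $\le 2^{\poly(n)}$, and simulating such a circuit's amplitudes is a standard \class{PSPACE} computation). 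The algorithm then does a nondeterministic/alternating walk: starting from step $0$ with state $U_\psi\ket{0}^{\otimes n}$, at each step guess the next net element $\widetilde U_{i+1}$ and its location, append it, check that the new intermediate state has energy at most (a slightly inflated threshold) $\eta_1 + \Delta/4$, and halt-accept once the step counter reaches some $j \le m$ with $\enorm{\ket{\chi} - \ket{\phi}} \le \eta_3 + \Delta/4$. Because the recursion depth is bounded by $m \le 2^{n^{c_1}}$ and each level stores only $\poly(n)$ bits (current step counter in binary, current net-index being tried, and a pointer back up the recursion), the whole thing runs in \class{PSPACE} by the usual Savitch-style / alternating-logspace-scaled-up argument; alternatively one formulates it as reachability in an exponentially large graph whose edge relation is \class{PSPACE}-decidable, which is in \class{PSPACE}.

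For correctness I would argue both directions using the net approximation, exactly as in the proof of Lemma~\ref{l:inQCMA}. In the YES case, there is a true sequence $(U_i)_{i=1}^m$ of $1$-local unitaries staying in the ground space; replacing each $U_i$ by the nearest net element $\widetilde U_i$ incurs total error $\snorm{U_m\cdots U_1 - \widetilde U_m \cdots \widetilde U_1} \le m\epsilon$ (telescoping, via $\snorm{U_m\cdots U_1 - V_m\cdots V_1}\le\sum_i\snorm{U_i-V_i}$), hence every intermediate state moves by at most $m\epsilon$ in Euclidean norm, so its energy moves by at most $2mL\epsilon \le \Delta/8$ (Hölder plus Eqn.~(\ref{eqn:enorm})) and its distance to $\ket{\phi}$ moves by at most $m\epsilon$; with the inflated thresholds the guessed net-sequence is accepted. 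In the NO case, \emph{any} sequence of net elements is in particular a genuine sequence of $1$-local unitaries, so by the NO promise of \gscon~it either overshoots energy $\eta_2 \ge \eta_1 + \Delta$ at some step or ends at distance $\ge \eta_4 \ge \eta_3+\Delta$ from $\ket{\phi}$; the $\Delta/4$ slack is not enough to rescue it, so the search finds no accepting path. The main obstacle — and the step deserving the most care — is the precision/space bookkeeping: one must verify that iterated multiplication of the stored (exponentially-long) product of net unitaries, and evaluation of $\bra{\chi}H\ket{\chi}$ and $\enorm{\ket{\chi}-\ket{\phi}}$, can genuinely be carried out within polynomial space and with enough bits of precision that the accumulated round-off stays well below $\Delta$; this is where invoking the exact net of Lemma~\ref{l:net} (rather than the pseudo-net) pays off, since it removes the need for a rounding subroutine and keeps all intermediate states exactly unit vectors.
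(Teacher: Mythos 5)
There is a genuine gap in your space analysis, and it is exactly at the point you flag as "deserving the most care." Your configuration is $(j,\ket{\chi})$ with $\ket{\chi}$ represented by the \emph{entire} list of guessed net indices $(i_1,\dots,i_j)$. Since $j$ can be as large as $m\leq 2^{n^{c_1}}$, this description is exponentially long, so the "graph'' you propose does not have polynomial-size nodes and reachability in it is not obviously in \class{PSPACE}. The alternative accounting you give --- recursion depth $m$ with only $\poly(n)$ bits per level --- is both internally inconsistent and insufficient: a depth-$2^{n^{c_1}}$ walk storing $\poly(n)$ bits per level is exponential space, and Savitch's theorem only applies once each configuration of the nondeterministic machine is itself $\poly(n)$-size. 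Moreover, if at step $j$ you retain only the step counter and the current guess, you cannot evaluate $\bra{\chi}H\ket{\chi}$ or $\enorm{\ket{\chi}-\ket{\phi}}$ at all: the \class{PSPACE} amplitude-simulation fact you invoke requires the exponentially long gate sequence to be \emph{succinctly specified} (computable from $t$ in polynomial space), whereas here the sequence is the nondeterministic witness itself and is not available unless stored.

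The missing idea --- and the heart of the paper's proof --- is to exploit $l=1$ structurally: because every gate is a single-qubit unitary, the cumulative operator after step $i$ is a tensor product $V_i=V_{i,1}\otimes\cdots\otimes V_{i,n}$ of $n$ single-qubit operators, so the configuration $(i,V_{i,1},\dots,V_{i,n})$ \emph{is} polynomial-size and the energy/proximity tests can be evaluated on-the-fly from it. One further wrinkle you dismiss incorrectly: even with the exact net of Lemma~\ref{l:net}, multiplying exponentially many net unitaries acting on the same qubit can blow up the bit-precision of the entries, so the paper re-rounds the product $B_iV_{i-1,q}$ back to a net element at every step (its Step 4(c)) and then runs an induction showing $\snorm{U_i-V_i}\leq(2(i-1)+1)\epsilon$, with $\epsilon$ chosen so the accumulated error stays below $\Delta/4$. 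With poly-size configurations and a poly-space-checkable transition relation, the nondeterministic walk is an \class{NPSPACE} computation and Savitch's theorem finishes the argument. Your YES/NO error analysis (telescoping $\snorm{U_m\cdots U_1-\widetilde U_m\cdots\widetilde U_1}\leq m\epsilon$, H\"older plus Equation~(\ref{eqn:enorm}), and the observation that net elements are genuine unitaries so the NO promise applies directly) is essentially the right correctness argument, but it does not rescue the space bound without the cumulative-tensor-product representation and per-step rounding.
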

\begin{proof}
    We give a non-deterministic polynomial space algorithm for \gscon, and subsequently apply Savitch's theorem~\cite{S70} to obtain a \class{PSPACE} algorithm. Specifically, given a ${\gscon}$ instance ${\Pi=(H,\aaa,\bbb,\ccc,\ddd, \Delta, l, m, \ket{\psi},\ket{\phi})}$, our non-deterministic algorithm proceeds as follows. Let $L$ denote the number of local terms in $H$, and let $N$ denote the $\epsilon$-net for single qubit unitaries from Lemma~\ref{l:net} for $\epsilon:=\Delta/8L(2(m-1)+1)$. Then our algorithm is given by (explanation to follow):
\begin{algorithm}
 \begin{enumerate}
    \item If $\enorm{{\ket{\psi}}-{\ket{\phi}}}\leq \ccc$, accept.
    \item For $i\in\set{0,\ldots, m}$, define
$V_i := V_{i,1}\otimes\cdots \otimes V_{i,n}$ (for operators $V_{i,j}$ to be defined in iteration $i$).
    \item Set $V_{0,j} := I$ for all $j\in[n]$, i.e., $V_0 := I$.
    \item For $i = 1$ to $m$, do:
    \begin{enumerate}
        \item Non-deterministically guess a unitary $B_i\in\unitary{\complex^2}$ from $N$, where $B_i$ acts on some qubit $q\in[n]$ chosen non-deterministically.
        \item For $j\neq q$, set $V_{ij} := V_{i-1,j}$. Set $V_{iq}':=B_iV_{i-1,q}$.
        \item Set $V_{iq} := \round(V_{iq}')$, where $\round(A)$ straightforwardly maps unitary $A$ to a net element $A'\in N$ such that $\snorm{A-A'}\leq \epsilon$.
        \item (Energy Test) If $\bra{\psi}V_i^\dagger HV_i\ket{\psi}\geq \aaa + \Delta/3$, exit loop.
        \item (Proximity Test) If $\enorm{{V_i\ket{\psi}}-{\ket{\phi}}}\leq \ccc+ \Delta/4$, accept.
    \end{enumerate}
    \item Reject.
\end{enumerate}
\caption{Polynomial space algorithm for \gscon}
\label{alg:pspace}
\end{algorithm}

The intuition behind the algorithm is as follows. Ideally, we would like to run the following algorithm: At each step, non-deterministically guess a unitary $U\in\unitary{\complex^d}$, apply $U$ to the state computed in the previous step, and check whether the new state has high energy (Step 4(d)), or is close to the target state (Step 4(e)). Note that at a high level, this is possible in \class{PSPACE} because each unitary acts on a single qubit; thus, it suffices to keep track of the \emph{cumulative} single-qubit unitary applied to each qubit after each step (Step 4(b)), as opposed to keeping a history of all $m$ (i.e. exponentially many) unitaries guessed in Step 4. In particular, this implies the overall unitary $V_i$ in each iteration has a \emph{succinct} description (i.e., of tensor product form). There are, however, two subtle issues with this approach. The first is that the space of unitaries is continuous; thus, in iteration $i$, our algorithm non-deterministically chooses a unitary $B_i$ from $N$ instead (Step 4(a)). The second issue is that $m$ is exponentially large --- thus, multiplying all $B_i$ which act on a qubit $j$ can result in an operator whose entries require an exponential number of bits of precision. To prevent this, in each iteration, Step 4(c) ``rounds'' the product $B_iV_{i-1,q}$ back to an operator in our net. For completeness, note that Step 4(d) can be implemented using Kitaev's phase estimation procedure for placing the $k$-local Hamiltonian problem in QMA~\cite{K99}, and Step 4(e) can be implemented using the SWAP test~\cite{BCWW01}.

We now justify why the algorithm runs in polynomial space. Since each $V_i$ can be described using a polynomial number of bits, Step 4(a) can be carried out by a Turing machine whose configurations each require at most polynomially many bits to specify. For Step 4(c), since $\epsilon$ is inverse exponential in our setting, Lemma~\ref{l:net} implies $\abs{N}$ scales exponentially; thus, Step 4(c) can be achieved in polynomial space via a brute force search over all indices $i$ of operators in the net via Lemma~\ref{l:net}. Steps 4(d) and 4(e) can be completed in polynomial space using the standard approach of recomputing any values needed on-the-fly when determining (say) an inner product of exponentially large vectors specified by polynomial-size quantum circuits. We conclude that the algorithm runs in polynomial space.

We now justify correctness. Suppose first that there exists a sequence of $1$-local unitaries $(\hat{U}_{i})_{i=1}^{m}$ satisfying the conditions of a YES instance of \gscon. For convenience, define the global unitary after step $i$ as ${U_i:=U_{i,1}\otimes\cdots\otimes U_{i,n}}$. We prove by induction on $i$ that for all $i\in[m]$,
\begin{equation} \label{eqn:goal}
    \snorm{U_i-V_i} \leq (2(i-1)+1)\epsilon.
\end{equation}
For the base case $i=1$, we have $\snorm{U_1-V_1} \leq \epsilon$ since $\snorm{A\otimes B}=\snorm{A}\snorm{B}$ (recall $U_1$ and $V_1$ act non-trivially only on a single qubit) and by Lemma~\ref{l:net}. Thus, the base case holds. For the inductive step, assume the claim is true for iterations $1$ through $i-1$. We prove it for iteration $i$. Specifically,
\begin{eqnarray*}
    \snorm{U_i-V_i}&=&\snorm{\hat{U}_iU_{i-1}-\round(B_i V_{i-1})}\\
    &\leq&\snorm{\hat{U}_iU_{i-1}-B_i V_{i-1}}+ \snorm{B_i V_{i-1}-\round(B_i V_{i-1})}\\
    &\leq&\snorm{\hat{U}_i-B_i}+\snorm{U_{i-1}-V_{i-1}}+\epsilon\\
    &\leq&\epsilon + (2(i-2)+1)\epsilon + \epsilon\\
    &=&(2(i-1)+1)\epsilon,
\end{eqnarray*}
where the first inequality follows from the triangle inequality, the second inequality from the fact that $\snorm{AB-CD}\leq \snorm{A-C}+\snorm{B-D}$ for unitaries $A,B,C,D$ and by Lemma~\ref{l:net}, and the third inequality from Lemma~\ref{l:net} and the induction hypothesis. This completes our proof of Equation~(\ref{eqn:goal}).

We conclude that in any iteration $i\in[m]$, we have $\snorm{U_i-V_i}\leq (2(i-1)+1)\epsilon$, and hence $\enorm{U_i\ket{\psi}-V_i\ket{\psi}}\leq (2(i-1)+1)\epsilon$. Recalling that $L$ is the number of local terms in $H$, this yields
\begin{eqnarray}\label{eqn:est1}
    \abs{\bra{\psi}U_i^\dagger H U_i\ket{\psi}-\bra{\psi} V_i^\dagger H V_i\ket{\psi}}&\leq& \snorm{H}\trnorm{U_i\ketbra{\psi}{\psi}U_i^\dagger-V_i\ketbra{\psi}{\psi}V_i^\dagger}\nonumber\\&\leq& 2L\enorm{U_i\ket{\psi}-V_i\ket{\psi}}\nonumber\\&\leq& 2L(2(i-1)+1)\epsilon\nonumber\\
    &\leq&\frac{\Delta}{4},
\end{eqnarray}
where the first inequality follows from H\"{o}lder's inequality, and the second from Equation~(\ref{eqn:enorm}). In addition, since in a YES instance $\enorm{U_m\ket{\psi}-\ket{\phi}}\leq \ccc$, by the triangle inequality we have
\begin{equation}\label{eqn:est2}
    \enorm{V_{{m}} \ket{\psi}-\ket{\phi}}\leq \enorm{V_{{m}}\ket{\psi}-U_{{m}}\ket{\psi}}+\enorm{U_{{m}}\ket{\psi}-\ket{\phi}}\leq (2({{m}}-1)+1)\epsilon+\ccc\leq\frac{\Delta}{4}+\ccc.
\end{equation}
By Equations~(\ref{eqn:est1}) and~(\ref{eqn:est2}), we conclude that for a YES instance of \gscon, Step 4(d) of our algorithm will never cause an exit from the loop, and Step 4(e) will accept in some iteration. An analogous argument shows that for any NO instance, either the algorithm exits the loop in Step 4(d) or never passes the check in Step 4(e), implying the algorithm rejects, as desired.
\end{proof}

\section{\class{NEXP}-completeness} \label{sect:GSCON_nexp}

In this section, we define a succinct version of \gscon, and show that it is \class{NEXP}-complete. As the proof techniques used here are essentially the same as in Sections~\ref{sect:GSCON_QCMA} (\class{QCMA}-completeness) and~\ref{sect:GSCON_pspace} (\class{PSPACE}-completeness), for brevity we give only proof sketches.

We begin by defining succinct or \emph{oracle} notions of a local Hamiltonian and quantum product states, in analogy with an oracle \textup{$3$-CNF} formula and oracle truth assignment~\cite{BR04}.

\begin{definition}[Oracle $k$-local Hamiltonian]
Let $H=\sum_{i=1}^{2^r} H_i$ be a $k$-local Hamiltonian acting on $2^n$ qubits with $2^r$ clauses. An \emph{oracle} local Hamiltonian is a classical circuit $C_H$ which, given index $i\in\set{0,1}^r$ as input, outputs a classical description of constraint $H_i$ (i.e. outputs a $2^k\times 2^k$ matrix), along with the indices of the $k$ qubits on which $H_i$ acts.
\end{definition}

\begin{definition}[Oracle quantum product state]
Let $\ket{\psi}$ be a tensor product state on $2^n$ qubits such that $\ket{\psi}= \ket{\psi_1}\otimes\ket{\psi_2}\otimes\cdots\otimes\ket{\psi_{2^n}}$. An \emph{oracle} quantum product state is a classical circuit $C_\psi$ which, given index $i\in\set{0,1}^n$ as input, outputs a classical description of $\ket{\psi_i}$.
\end{definition}

Using these two definitions, we can now define the succinct version of \gscon.

\begin{definition}[\succgscon $({H},{k},{\aaa},{\bbb},{\ccc},{\ddd},{\Delta},{l},{m},{U_\psi},{U_\phi})$]
\emph{\succgscon} is defined identically to $\gscon$, except the Hamiltonian $H$ is an oracle Hamiltonian and the initial states $\ket{\psi}$ and $\ket{\phi}$ are oracle quantum product states.
\end{definition}

In this section, we show the following theorem.
\begin{theorem}\label{thm:nexpcomplete}
    \succgscon~is \class{NEXP}-complete for $m \in O(2^n)$, $\Delta\in{\Theta} (1/m^2)$, $l=1$, and $k\geq 5$, where $2^n$ is the number of qubits $H$ acts on.
\end{theorem}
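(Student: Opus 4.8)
The plan is to prove \succgscon{} is \class{NEXP}-complete by mirroring the two halves of Theorem~\ref{thm:qcmacomplete}, replacing \class{QCMA} with \class{NEXP} and the polynomial-size objects with their exponential, succinctly-described analogues. Recall that \class{NEXP} admits a complete problem of ``succinct circuit satisfiability'' type: one is given a classical circuit $C_V$ that outputs (the gates of) an exponential-size verification circuit $V$ acting on an exponential-size classical proof $x$ and exponentially many ancillas, and one asks whether there is an $x$ making $V$ accept. Equivalently, via the oracle $3$-CNF formalism of~\cite{BR04}, one works with an oracle $3$-CNF instance and an oracle truth assignment. So the first step for \class{NEXP}-hardness is to fix such an \class{NEXP}-complete verification problem, apply Kempe--Regev's circuit-to-Hamiltonian construction (Lemma~\ref{l:kemperegev}) to the exponential-size circuit $V$, and observe that because $C_V$ is an efficient (polynomial-size) classical circuit, the resulting $3$-local Hamiltonian $H'$ on exponentially many qubits has an efficiently computable oracle description $C_{H'}$: given a clause index, one simulates $C_V$ to recover the relevant gate of $V$ and reads off the corresponding local term of the Kempe--Regev Hamiltonian. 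This yields an oracle $3$-local Hamiltonian, exactly the input type of \succgscon.

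Next I would carry over the \emph{GO}-register gadget from the proof of Lemma~\ref{l:qcmahard} verbatim: set $H := H'_h \otimes P_G$ with $P = I - \kb{000} - \kb{111}$ (so $H$ is $5$-local, and $P$ is $2$-locally expressible as written there), take $\ket{\psi} = \ket{0}^{\otimes N}\ket{000}$ and $\ket{\phi} = \ket{0}^{\otimes N}\ket{111}$ (which are oracle product states, trivially), and set $\eta_3 = 0$, $\eta_4 = 1/4$, $\eta_1 = \alpha$, $\eta_2 = \beta/(16m^2)$, $\Delta \in \Theta(1/m^2)$. The number of unitaries is now $m \in O(2^n)$: preparing the history state of an exponential-size circuit takes exponentially many gates, and since $V$ accepts a \emph{classical} proof $x$ (the ancilla-measuring trick from the \class{QCMA} proof still applies — prepend CNOTs copying the proof register into fresh ancillas, so soundness holds even against a prover who does not commit to a classical $x$), the history state is still preparable by an explicit sequence of $1$- and $2$-local gates, though here I would note a subtlety: the statement wants $l=1$, so the preparation sequence and the \emph{GO}-flip steps must be decomposed into single-qubit gates. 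The YES direction is then the same six-step protocol (prepare $x$, build the history state, flip the \emph{GO} qubits, uncompute), and the NO direction applies the Traversal Lemma (Lemma~\ref{l:traversal}) to the $2$-orthogonal subspaces $S,T$ ($+1$ eigenspaces of $I\otimes\kb{000}$ and $I\otimes\kb{111}$) exactly as before, using $H' \succeq \beta I$ in the NO case from Lemma~\ref{l:kemperegev} and the fact that $m \geq L$ so $\beta \in \Omega(1/L^3) \supseteq \Omega(1/m^3)$, giving $\eta_2 \in \Omega(1/m^5)$ — adjusting constants so the promise gap is $\Theta(1/m^2)$ as claimed.

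For containment in \class{NEXP}, the verification procedure is the exponential-time analogue of Lemma~\ref{l:inQCMA}: the (\class{NEXP}) prover sends, via a classical circuit with exponential output, a sequence $(\widetilde U_i)_{i=1}^m$ of single-qubit operators drawn from the $\epsilon$-net of Lemma~\ref{l:net} (single-qubit suffices since $l=1$) with $\epsilon$ inverse-exponentially small; the verifier, running in exponential time, checks each is (close to) unitary, rounds, and then — because each $V_i$ is $1$-local — tracks the cumulative tensor-product unitary $V_i = V_{i,1}\otimes\cdots\otimes V_{i,N}$ as in the \class{PSPACE} algorithm (Algorithm~\ref{alg:pspace}), running Kitaev phase estimation to check $\bra{\psi_i}H\ket{\psi_i} \leq \eta_1 + \epsilon$ for all $i$ and a SWAP test to check proximity of $\ket{\psi_m}$ to $\ket{\phi}$. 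The error-accumulation bound $\snorm{U_t\cdots U_1 - V_t\cdots V_1} \leq \sum_j \snorm{U_j - V_j}$ and the choice $\epsilon = \Delta/\mathrm{poly}(m,L)$ make the gaps survive, exactly as in Lemmas~\ref{l:inQCMA} and~\ref{l:inPSPACE}; the oracle descriptions of $H$, $\ket{\psi}$, $\ket{\phi}$ are queried only polynomially-often-per-operation but exponentially-often overall, which is fine for \class{NEXP}. The main obstacle I anticipate is bookkeeping rather than conceptual: one must carefully confirm that \emph{every} object the \class{NEXP} verifier manipulates has either a succinct (polynomial) description or is explicitly expandable in exponential time — in particular that the history-state-preparation circuit $W$ for the exponential-size $V$, decomposed into $1$-local gates, still has length $O(2^n)$ and that the Traversal Lemma's $m^2$ dependence composes correctly with Kempe--Regev's $1/L^3$ gap to land in the advertised $\Delta \in \Theta(1/m^2)$ regime after reparametrization; this is why the authors relegate the argument to a proof sketch.
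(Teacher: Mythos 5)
There is a genuine gap in your hardness argument, and it is exactly at the point you flag as a ``subtlety'': with $l=1$, the total evolution $U_m\cdots U_1$ is a tensor product of single-qubit unitaries, so starting from the product state $\ket{0}^{\otimes N}\ket{0\cdots 0}$ every intermediate state is a product state. The Kempe--Regev history state of a nontrivial circuit is entangled, so your Step ``apply $W$ to prepare the history state, decomposed into single-qubit gates'' is impossible --- a $2$-local entangling gate cannot be decomposed into $1$-local gates, and no sequence of $1$-local unitaries reaches $\histstate$ at all. This is precisely why the paper does \emph{not} use the circuit-to-Hamiltonian construction here: it reduces from \osat~and embeds the oracle $3$-CNF \emph{directly} as a diagonal oracle Hamiltonian ($H_i := \kb{z_i}$ for the bad assignment $z_i$), whose low-energy states are computational basis states; the YES-case witness path is then just Pauli $X$ gates preparing a satisfying assignment, flipping a $2$-qubit \emph{GO} register (with $P = I - \kb{00} - \kb{11}$, which suffices against $1$-local unitaries), and undoing the $X$ gates. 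Relatedly, your parameters do not land where the theorem claims: with Kempe--Regev you only get $\beta \in \Omega(1/L^3)$, hence $\eta_2 \in \Omega(1/m^5)$, and no ``adjustment of constants'' turns $1/m^5$ into the advertised $\Theta(1/m^2)$; the paper gets $\eta_2 = 1/(16m^2)$ because in the NO case the diagonal Hamiltonian satisfies $H \succeq I$ (no satisfying assignment exists), so $\beta$ is replaced by $1$.

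Your containment argument also drifts in the wrong direction: an \class{NEXP} verifier is a classical exponential-time machine, so it cannot run Kitaev phase estimation or the SWAP test as in Lemma~\ref{l:inQCMA}. The paper's containment instead runs Algorithm~\ref{alg:pspace} in exponential time and space, exploiting that with $l=1$ and oracle product-state inputs every intermediate state $V_i\ket{\psi}$ is a product state, so the energy test (term by term, querying $C_H$ and $C_\psi$) and the proximity test can be evaluated classically and exactly (to polynomial precision) without any quantum subroutines. Your tensor-product bookkeeping of the cumulative $V_i$ is the right ingredient, but the quantum estimation steps should be replaced by direct classical computation.
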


\noindent Intuitively, this says that the succinct version of \gscon~in which (1) the number of unitaries scales linearly in the number of qubits (but exponentially in the input size) and (2) each unitary is $1$-local is \class{NEXP}-complete.

\subsection{\class{NEXP}-hardness}

We now show \class{NEXP}-hardness of \succgscon.

\begin{lemma}\label{l:nexphard}
    \succgscon~is \class{NEXP}-hard for $m \in O(2^n)$, $\Delta \in {O} (1/m^2)$, $l=1$, and $k \geq 5$, where $2^n$ is the number of qubits $H$ acts on.
\end{lemma}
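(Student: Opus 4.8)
\textbf{Overall strategy.} The plan is to mimic the \class{QCMA}-hardness proof (Lemma~\ref{l:qcmahard}) essentially verbatim, but replacing the Kempe–Regev circuit-to-Hamiltonian construction applied to a \class{QCMA} verifier by the analogous construction applied to a \class{NEXP} verifier acting exponentially many qubits, and encoding everything succinctly. Recall that \class{NEXP} has the same characterization as \class{MA} (and hence \class{QCMA}) but with an exponential-time verifier reading an exponential-length proof; equivalently, by \cite{BR04}, \class{NEXP} is captured by \prob{ORACLE 3SAT} — given an oracle \textup{3-CNF} formula, does there exist an oracle truth assignment satisfying it. So I would start from an arbitrary \class{NEXP} language with its exponential-time verification circuit $V'$ acting on a classical proof register of $2^{n_p}$ qubits and an ancilla register of $2^{n_a}$ qubits; by standard error reduction and by first measuring the proof register (exactly as in the construction of $V$ from $V'$ in Lemma~\ref{l:qcmahard}), we may assume the verifier is sound even against a prover who does not commit to a classical string.

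\textbf{The construction.} Let $H'$ be the (now exponentially large, but uniformly/succinctly generated) $3$-local Hamiltonian obtained from $V$ by Kempe and Regev's construction (Lemma~\ref{l:kemperegev}); since $V$ is a polynomial-size-per-gate, exponential-length circuit, $H'$ has exponentially many $3$-local terms, and there is a classical poly-time circuit $C_{H'}$ outputting the $i$-th term given $i$ — i.e. $H'$ is an oracle $k$-local Hamiltonian. Exactly as in Equation~(\ref{eqn:H}), set $H := H'_h \otimes P_G$ with $P := I - \ketbra{000}{000} - \ketbra{111}{111}$ on a three-qubit \emph{GO} register, so $H$ is $5$-local and succinctly specified. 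Take $\ket{\psi}, \ket{\phi}$ to be the oracle product states $\ket{0}^{\otimes(\cdots)}\ket{0}^{\otimes 3}$ and $\ket{0}^{\otimes(\cdots)}\ket{1}^{\otimes 3}$ as in Equation~(\ref{eqn:states}). Set $\aaa = \ccc = 0$ (or $\aaa = \alpha$ as in the earlier proof), $\ddd = 1/4$, $\bbb \in \Omega(1/m^2)$, and $\Delta \in O(1/m^2)$, and let $m \in O(2^n)$ — note that, crucially, in the \class{NEXP} setting we want $m$ linear in the number of qubits $2^n$, not polynomial in the input; the point is that a circuit-to-Hamiltonian history state on exponentially many qubits can in general only be prepared by an exponential-length (in the input), but linear-length (in the qubit count), sequence of $1$-local and $2$-local gates — but since the proof/clock registers are exponentially long this is still $O(2^n)$ gates. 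Actually, to keep $l = 1$ I would decompose the $2$-qubit gates of Kempe–Regev's $V$ into $1$-qubit gates plus CNOTs, absorbing the CNOTs; alternatively invoke the remark after Lemma~\ref{l:qcmahard} allowing $p$-qubit \emph{GO} registers — but the cleaner route is just to observe $1$-local gates suffice to traverse (the relevant statement is that we need a $1$-local gate sequence; since a general $2$-local gate is not $1$-local, here I would instead keep the \emph{GO} flips as single-qubit $X$ gates, which they already are, and note that the history-state preparation circuit $W$ may itself be taken over a gate set that, after the construction, only requires flipping one \emph{GO} qubit at a time — the honest traversal of Steps 1–6 from the proof of Lemma~\ref{l:qcmahard} uses only single-qubit $X$ gates on the \emph{GO} register and the circuit $W$; for $l=1$ we simply need $W$ expressed in $1$-local gates, which forces us into the $k=3$ Kempe–Regev terms being fine but the gates of $V'$ must be $1$-local — this is achievable since we may design the \class{NEXP} verifier's circuit over a $1$-local-plus-classical-reversible gate set). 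I will present the honest YES-direction traversal (prepare $x$ on the proof register via single-qubit $X$'s, run $W$, flip the \emph{GO} qubits, uncompute $W$, restore the proof register) exactly as the six-step sequence in Lemma~\ref{l:qcmahard}, giving $m \in O(|W| + n_p) \subseteq O(2^n)$ gates, with every intermediate state in the null space of $H$ except possibly after the \emph{GO}-activation step, where the energy is at most $\alpha = \aaa$ by Lemma~\ref{l:kemperegev}.

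\textbf{Soundness and the main obstacle.} For the NO direction: if the \class{NEXP} instance is a NO instance, then $V$ rejects every proof with high probability, so by Lemma~\ref{l:kemperegev} the smallest eigenvalue of $H'$ is at least $\beta \in \Omega(1/L^3)$, and since $L \in O(2^n) = O(m)$ we get $\beta \in \Omega(1/m^3)$; choosing $\bbb$ appropriately (e.g. $\bbb = \beta/(16m^2)$, giving $\bbb \in \Omega(1/m^5)$, or more carefully to land at $\Delta \in \Theta(1/m^2)$ one reworks the constants) we let $S, T$ be the $+1$ eigenspaces of $I_h \otimes \kb{000}_G$ and $I_h \otimes \kb{111}_G$, which are $2$-orthogonal with $\ket{\psi} \in S$, $\ket{\phi} \in T$, and invoke the Traversal Lemma (Lemma~\ref{l:traversal}) with $\epsilon = 1/4$: for any $1$-local (hence certainly $k$-local for the relevant $k$) sequence of $m$ unitaries, either $\norm{\ket{\psi_m} - \ket{\phi}}_2 \geq 1/4 = \ddd$ (a NO instance), or some intermediate state has $\bra{\psi_i} I_h \otimes P \ket{\psi_i} \geq (1/4m)^2$, whence $\bra{\psi_i} H \ket{\psi_i} \geq \beta (1/4m)^2 \geq \bbb$ using $H' \succeq \beta I$ — word for word the computation at the end of Lemma~\ref{l:qcmahard}. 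Finally, the reduction is a Karp reduction computable in time polynomial in the input (the circuits $C_H, C_\psi, C_\phi$ and the parameters all have poly-size descriptions), so \class{NEXP}-hardness follows. The main obstacle I anticipate is purely bookkeeping rather than conceptual: (i) confirming the \emph{uniformity/succinctness} — that Kempe–Regev's construction applied to a succinctly-described exponential circuit yields a succinctly-describable ($k$-local) Hamiltonian, and that the honest gate sequence $W$ of length $O(2^n)$ likewise has a succinct description so the whole \gscon~instance is a legitimate \succgscon~instance computable in poly time; and (ii) reconciling $l=1$ with the Kempe–Regev construction's $2$-local gates — the honest traversal must be re-expressed over single-qubit gates (decomposing two-qubit gates, or defining the underlying \class{NEXP} verifier over a $1$-local-plus-permutation gate set), while checking the Traversal-Lemma lower bound is unaffected since $k$-orthogonality of $S$ and $T$ is a property of the subspaces, not of the gate set. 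Getting the parameter $\Delta \in \Theta(1/m^2)$ exactly (rather than $\Theta(1/m^5)$) requires either boosting the Kempe–Regev promise gap or absorbing the $L^3$ factor into the definition of $m$; I would handle this by the same rescaling trick used implicitly in the \class{QCMA} proof.
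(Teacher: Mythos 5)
There is a genuine gap, and it is fatal to your chosen route: with $l=1$ the initial state is a product state and every intermediate state $U_i\cdots U_1\ket{\psi}$ remains a product state, since single-qubit unitaries cannot create entanglement. But the honest YES-case traversal in your construction must pass through (a state close to) the Kempe--Regev history state of the exponential-size verifier, which is entangled between the clock and computation registers for any nontrivial circuit. So Steps 1--6 of the Lemma~\ref{l:qcmahard} traversal simply cannot be executed with $1$-local gates, and none of your proposed fixes address this: decomposing the verifier's $2$-qubit gates into $1$-qubit gates plus CNOTs, or choosing a ``$1$-local-plus-classical-reversible'' gate set, still leaves you with entangling gates (CNOT, Toffoli are $2$- and $3$-local), and in any case the obstruction is a property of the target state itself, not of how $W$ is compiled. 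This is precisely why the paper does \emph{not} apply a circuit-to-Hamiltonian construction here.

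The paper's proof instead reduces directly from \osat: each clause of the oracle \textup{$3$-CNF} is mapped (by a poly-size circuit, given the clause index) to the diagonal projector onto its unique violating assignment, tensored with $P=I-\kb{00}-\kb{11}$ on a \emph{two}-qubit \emph{GO} register (two qubits suffice since $l=1$ only requires $1$-orthogonality of the \emph{GO} subspaces), giving a $5$-local oracle Hamiltonian. In the YES case the witness is a classical satisfying assignment, so the traversal is: flip Pauli $X$'s to write $x$, flip the two \emph{GO} qubits one at a time, and flip the $X$'s back --- all $1$-local, $m\in O(2^n)$ gates, and every intermediate state has zero energy. In the NO case one argues exactly as in Lemma~\ref{l:qcmahard} via the Traversal Lemma, but with $\beta$ replaced by $1$, because the diagonal part satisfies $\sum_i H_i \succeq I$ when no satisfying assignment exists; this also yields the stated $\Delta\in O(1/m^2)$ directly, with no need for the $\Omega(1/L^3)$ Kempe--Regev soundness bound (which, with $L\in\Theta(2^n)$, would in your version only give an $\Omega(1/m^5)$-type gap and, more importantly, is moot once the YES direction fails). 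If you want to rescue a circuit-to-Hamiltonian-style argument, you would have to increase $l$ to $2$ or change the promise so that the path may be entangled --- but that is a different theorem than the one stated.
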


\begin{proof}
We sketch a polynomial-time many-one or Karp reduction from the \class{NEXP}-complete problem \osat~(see, e.g.~\cite{BR04}) to \succgscon. Specifically, in an \osat~instance, one is given as input an oracle \textup{$3$-CNF} formula $\eta$ consisting of $2^n$ variables and $2^r$ clauses; $\eta$ can be thought of as a circuit $C_\eta$ which, given index $i\in\set{0,1}^m$, outputs the $i$'th clause and the indices of the variables on which the $i$'th clause acts.

Our approach is as follows: We embed the oracle \textup{$3$-CNF} formula into an oracle $3$-local Hamiltonian in the trivial way, and subsequently combine this with the construction of Lemma~\ref{l:qcmahard} (\class{QCMA}-hardness). Specifically, our oracle Hamiltonian $C_H$ acts as follows: Given index $i$, it runs $C_\eta$ on $i$ to obtain the $i$'th clause $c_i$. It then converts this to a diagonal Hamiltonian constraint $H_i$ (for example, clause $x_1\vee x_2\vee x_3$ is mapped to the diagonal operator $\operatorname{Diag}(1,0,0,0,0,0,0,0)$), and returns constraint $H_i\otimes P$ for $P:=I-\ketbra{00}{00}-\ketbra{11}{11}$. (Note that here $P$ needs only act on $2$ qubits since the unitaries $U_i$ are $1$-local.) The initial and final states are oracle quantum product states $C_\psi$ and $C_\phi$ representing $\ket{\psi}:=\ket{0}^{\otimes 2^n} \ket{0}^{\otimes 2}$ and $\ket{\phi}:=\ket{0}^{\otimes 2^n} \ket{1}^{\otimes 2}$, respectively. (Clearly, $C_\psi$ and $C_\phi$ have size $\poly(n)$.) Set $\aaa:=0$, ${\bbb:=1/(16m^2)}$, $\ccc:=0$, ${\ddd:=1/4}$, $\Delta\in {O}(1/m^2)$, and $l=1$. This concludes the construction of our $\succgscon$ instance.

To show correctness, for a YES instance, we proceed analogously to Lemma~\ref{l:qcmahard}, except now there is no history state to prepare; in particular, the sequence of $m$ unitaries is given by:
    \begin{enumerate}
        \item Apply Pauli $X$ gates to $h_1$ to prepare satisfying assignment $x$ for $\eta$, i.e. map $\ket{0}^{\otimes 2^n}$ to $\ket{x}$.
        \item Apply $(X\otimes I)_{G}$ to ``initiate'' checking of $\ket{x}$.
        \item Apply $(I\otimes X)_{G}$ to ``complete'' checking of $\ket{x}$.
        \item Apply $X$ gates to $h_1$ to map the initial proof $\ket{x}$ back to $\ket{0}^{\otimes 2^n}$.
    \end{enumerate}
Clearly, this process requires at most $m=2^{n+1}+2$ single-qubit unitaries, as desired. The analysis for a NO instance  proceeds essentially identically to Lemma~\ref{l:qcmahard}; one need only replace $\beta$ by $1$. The reason this works is because $H := \sum_i H_i \succeq I$ since it is a sum of diagonal projections and there does not exist a classical string $z$ such that $\bra{z} H \ket{z} = 0$.
\end{proof}

\subsection{Containment in \class{NEXP}}

We now show containment of \succgscon~in \class{NEXP}.

\begin{lemma}\label{l:inNEXP}
    \succgscon~with $l=1$ is in \class{NEXP} for $m\leq \poly(2^n)$ and $\Delta\geq 1/\poly(2^n)$, where $2^n$ is the number of qubits $H$ acts on.
\end{lemma}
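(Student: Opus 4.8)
The plan is to give a nondeterministic exponential-time verification procedure, mirroring the \class{QCMA} containment proof of Lemma~\ref{l:inQCMA} but scaling all resources up by an exponential factor. First I would have the prover send a sequence $(\widetilde{U}_i)_{i=1}^m$ of matrices drawn from the single-qubit $\epsilon$-net of Lemma~\ref{l:net} (since $l=1$, we only need the exact net over $\unitary{\complex^2}$, not the pseudo-net), for $\epsilon := \Delta/(c\cdot mL)$ where $L = 2^r$ is the number of local terms in $H$ and $c$ is an absolute constant chosen so that the accumulated error is at most $\Delta/4$ in each of the two tests below. Since $m \leq \poly(2^n)$ and $\log(1/\epsilon) \in \poly(n)$, each net index has polynomially many bits, so the entire proof string has size $m \cdot \poly(n) \in \poly(2^n)$, which is fine for \class{NEXP}. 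The verifier also receives, for each $i\in[m]$, the index of the single qubit on which $\widetilde{U}_i$ acts. Because $L = 2^r$ with $r \in \poly(n)$, and the oracle circuit $C_H$ returns each $H_i$ explicitly given its index, the verifier can enumerate all $2^r$ terms in exponential time; similarly $C_\psi$ and $C_\phi$ let the verifier write down the full product states $\ket\psi,\ket\phi$ (of dimension $2^{2^n}$) in exponential space, which an \class{NEXP} machine may use.

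Next I would describe the checks. The verifier forms $V_i := \widetilde{U}_i$ (these are already exactly unitary, so no rounding step is needed, unlike the \class{QCMA} proof) and, crucially, exploits the $1$-locality: the cumulative unitary $V_i\cdots V_1$ is a tensor product $\bigotimes_{j=1}^{2^n} W_{i,j}$ of single-qubit unitaries, which has a description of size $2^n\cdot\poly(n)$. For each $t\in[m]$ the verifier (i) computes $\bra{\psi_t}H\ket{\psi_t} = \sum_{i=1}^{2^r}\bra{\psi_t}H_i\ket{\psi_t}$, where each term $\bra{\psi_t}H_i\ket{\psi_t}$ depends only on the $k$ qubits $H_i$ acts on — and since $\ket\psi$ is a product state and $V_t\cdots V_1$ is a tensor product of single-qubit gates, $\ket{\psi_t}$ is still a product state, so each such term is computable exactly in exponential time; the verifier rejects if this sum exceeds $\aaa + \Delta/2$. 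Then (ii) after step $m$ the verifier computes $\enorm{V_m\cdots V_1\ket\psi - \ket\phi}$ exactly, again using the product structure of both states and of the cumulative unitary, and rejects if it exceeds $\ccc + \Delta/2$. If no rejection occurs, accept. All of these computations are deterministic exponential-time arithmetic on numbers with exponentially many bits, hence in \class{EXP} $\subseteq$ \class{NEXP}.

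For correctness, in the YES case the honest sequence $(U_i)_{i=1}^m$ can be $\epsilon$-approximated by net elements; using the standard fact that $\snorm{U_t\cdots U_1 - V_t\cdots V_1}\leq \sum_{i=1}^t \snorm{U_i-V_i}\leq m\epsilon$ together with Equation~(\ref{eqn:enorm}) and H\"older's inequality, the energy of each intermediate state shifts by at most $2L m\epsilon \leq \Delta/4$ and the final-state distance shifts by at most $m\epsilon \leq \Delta/4$, so both thresholds in (i) and (ii) are met and the verifier accepts. In the NO case, for \emph{any} net sequence the rounded unitaries $(V_i)$ form a valid $1$-local unitary sequence, so by the NO promise of \succgscon~either some intermediate state has energy $\geq\bbb = \aaa+\Delta$, which even after the $\Delta/4$ net error exceeds $\aaa+\Delta/2$ and triggers rejection in (i), or the final state is $\geq\ddd = \ccc+\Delta$ far from $\ket\phi$, which after the net error still exceeds $\ccc+\Delta/2$ and triggers rejection in (ii); hence the verifier rejects every proof.

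The main obstacle — really the only subtle point — is bookkeeping the exponential sizes carefully: one must check that the oracle descriptions $C_H, C_\psi, C_\phi$ genuinely let an \class{NEXP} machine access all $2^r$ Hamiltonian terms and all $2^n$ qubit factors in (single-)exponential time, that the product-state structure is preserved under $1$-local unitary evolution so that every inner product $\bra{\psi_t}H_i\ket{\psi_t}$ and the norm $\enorm{\ket{\psi_m}-\ket\phi}$ reduce to exponentially many but individually polynomial-size computations, and that the net $N$ from Lemma~\ref{l:net} with inverse-exponential $\epsilon$ still has indices of polynomial bit-length (it does, since $\abs{N}\in O(\epsilon^{-8})$ gives $O(\log(1/\epsilon))\in\poly(n)$ bits). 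None of this requires a new idea beyond Lemma~\ref{l:inQCMA} and Lemma~\ref{l:inPSPACE}; it is a direct ``scale-up,'' which is why only a sketch is warranted. I would conclude by noting that, exactly as in the remarks following Theorem~\ref{thm:qcmacomplete}, the argument is insensitive to the constant locality $k\geq 5$ of $H$ and to replacing $l=1$ by any constant $l$, provided one switches from Lemma~\ref{l:net} to the $\epsilon$-pseudo-net of Lemma~\ref{l:pseudonet} (with its check/round procedures) for $l\geq 2$.
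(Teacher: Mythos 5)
Your proposal is correct and takes essentially the same route as the paper, whose proof simply re-runs the \class{PSPACE} containment procedure (Algorithm~\ref{alg:pspace}): single-qubit unitaries from the net of Lemma~\ref{l:net} supplied nondeterministically (equivalently, as a $\poly(2^n)$-size certificate), with the Energy and Proximity Tests made feasible in exponential time by the product structure of $\ket{\psi}$, $\ket{\phi}$, and the cumulative $1$-local unitaries, at the cost of exponential space. Two cosmetic quibbles: if you drop the per-step rounding of cumulative products (Step 4(c) of Algorithm~\ref{alg:pspace}), their entries can grow to $\poly(2^n)$ bits rather than the $2^n\cdot\poly(n)$ you claim (still fine for \class{NEXP}), and since the net elements of Lemma~\ref{l:net} have irrational entries, the tests are carried out to sufficient precision rather than ``exactly.''
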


\begin{proof}
    The proof is essentially identical to that of Lemma~\ref{l:inPSPACE} (containment in \class{PSPACE}), i.e. the verifier runs Algorithm~\ref{alg:pspace}. As the Hamiltonian involved now acts on exponentially many qubits, a few remarks regarding the implementation of Algorithm~\ref{alg:pspace} are in order:
    \begin{itemize}
        \item The initial state $\ket{\psi}$, final state $\ket{\phi}$, and intermediate states $V_i\ket{\psi}$ are product states. Hence, the Energy Test (Step 4(d)) and Proximity Test (Step 4(e)) can be carried out in exponential time. For example, suppose for the former that we wish to estimate $\bra{\psi}V_i^\dagger HV_i\ket{\psi}$. For this, it suffices to  estimate each $\bra{\psi}V_i^\dagger H_jV_i\ket{\psi}$ individually. If $H_j$ acts on qubits $q_1,q_2,q_3$, then we simply query $C_\psi$ for the original state of qubits $q_1,q_2,q_3$, apply $V_{i,q_1}\otimes V_{i,q_2}\otimes V_{i,q_3}$ to these three qubits, and finally compute the desired expectation against $H_j$.

        \item The verification procedure now requires exponential space, since we must keep track of exponentially many cumulative $1$-qubit operators $V_{iq}$ which comprise the global $i$'th operator $V_i = V_{i,1}\otimes\cdots \otimes V_{i,{2^n}}$.
    \end{itemize}
\end{proof}

\section{{On the} tightness of the Traversal Lemma and properties of $k$-orthogonality}\label{sect:tightprops}

{In the next two subsections, we discuss tightness of the Traversal Lemma and study the properties of $k$-orthogonality further.}

\subsection{On the tightness of the Traversal Lemma}\label{sect:tightness}

We now ask whether the Traversal Lemma is tight in the following sense: In Lemma~\ref{l:traversal}, the lower bound on $\bra{v_i}P\ket{v_i}$ scales as $\Theta(1/m^2)$ (for $m$ the number of unitaries and for fixed $\epsilon$). This intuitively suggests that one can better ``avoid'' the subspace $P$ projects onto if one uses a longer sequence of local unitaries. Is such behavior possible? Or can the lower bound in Lemma~\ref{l:traversal} be improved to a constant independent of $m$? In this section, we show that
a dependence on $m$ in Lemma~\ref{l:traversal} is indeed necessary.

\begin{theorem} \label{thm:badtraversal}
We assume the notation of Lemma~\ref{l:traversal}. Fix any $0<\Delta<1/2$, and consider $2$-orthogonal states $\ket{v}=\ket{000}$ and $\ket{w}=\ket{111}$, with $P:=I-\ketbra{v}{v}-\ketbra{w}{w}$. Then, there exists a sequence of $m$ $2$-local unitary operations mapping $\ket{v}$ to $\ket{w}$ through intermediate states $\ket{v_i}$, each of which satisfy $\bra{v_i}P\ket{v_i}\leq \Delta$, and where $m\in O(1/\Delta^2)$.
\end{theorem}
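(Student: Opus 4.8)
The plan is to explicitly construct the sequence of $2$-local unitaries by working in the two-qubit subspace spanned by $\ket{00}$ and $\ket{11}$ on qubits $1$ and $2$ (leaving qubit $3$ to be flipped at the start and restored at the end), and to move the amplitude along a path through the ``forbidden'' state $\ket{11}\ket{0}$ in many small angular steps so that no intermediate state has more than $\Delta$ overlap with $P$.

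First I would reduce the problem to a one-parameter family of states. The key observation is that although $\ket{v}=\ket{000}$ and $\ket{w}=\ket{111}$ are $2$-orthogonal, the intermediate states need only stay \emph{close} to $\Span\{\ket{000},\ket{111}\}$, not inside it. Concretely, consider states of the form $\ket{\chi(\theta,\phi)} := \cos\theta\,\ket{000} + \sin\theta\cos\phi\,\ket{110} + \sin\theta\sin\phi\,\ket{111}$; here the middle term $\ket{110}$ is the only component lying in the range of $P$, so $\bra{\chi}P\ket{\chi} = \sin^2\theta\cos^2\phi$. The strategy is a two-phase path: in Phase 1, starting from $\ket{000}$, we gradually rotate on qubits $\{2,3\}$-type structure to build up amplitude on $\ket{111}$ while keeping the ``leakage'' coordinate $\sin\theta\cos\phi$ small — i.e.\ we sweep $\phi$ from $0$ to $\pi/2$ while simultaneously growing $\theta$; in Phase 2 we reverse the roles to bring all amplitude onto $\ket{111}$. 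Each elementary step is a $2$-local unitary acting either on qubits $\{1,2\}$ (a small rotation in the $\{\ket{00},\ket{11}\}\otimes(\cdot)$ block, which changes $\theta$) or on qubits $\{1,3\}$ or $\{2,3\}$ (a small rotation changing $\phi$); since each such rotation only mixes two of the three basis vectors $\ket{000},\ket{110},\ket{111}$, it is genuinely $2$-local and preserves the form of $\ket{\chi}$.

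The core of the argument is then a discretization/continuity estimate. I would first exhibit a continuous path $\gamma:[0,1]\to\{(\theta,\phi)\}$ from $(0,\cdot)$ to $(\pi/2,\pi/2)$ along which $\sin^2\theta\cos^2\phi \le \Delta$ everywhere — e.g.\ first increase $\phi$ to near $\pi/2$ while $\theta$ is still small (say $\sin^2\theta \le \Delta$ keeps the product bounded regardless of $\phi$), then increase $\theta$ to $\pi/2$ while $\cos^2\phi$ is already $\le \Delta$. Along this path the total ``angular length'' is $O(1)$. Then, replacing the continuous path by $m$ evenly spaced waypoints, the unitary connecting consecutive waypoints is a product of at most two small rotations, each by an angle $O(1/m)$; a routine estimate shows that \emph{within} each small segment the value of $\bra{\cdot}P\ket{\cdot}$ deviates from its endpoint values by at most $O(1/m)$. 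Choosing $m = O(1/\Delta)$ would already give overlap $O(\Delta)$; to get the claimed $m \in O(1/\Delta^2)$ one has extra slack, so any such polynomial bound suffices — I would simply pick $m = \lceil c/\Delta^2\rceil$ for a suitable constant $c$ and verify the bound $\bra{v_i}P\ket{v_i}\le\Delta$ holds at every waypoint and in between. Finally, prepend one $1$-local (hence $2$-local) step flipping qubit $3$ from $\ket{0}$ to $\ket{1}$? — actually no: I would instead \emph{start} by leaving qubit $3$ at $\ket{0}$, run the above on qubits $\{1,2\}$ (which moves $\ket{000}\to\ket{110}$ is forbidden...), so more carefully: keep the path entirely inside $\Span\{\ket{000},\ket{110},\ket{111}\}$ as above, ending exactly at $\ket{111}$, with no separate flip needed.

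The main obstacle I anticipate is bookkeeping the $2$-locality constraint together with the path design: one must check that the three ``pivot'' vectors $\ket{000},\ket{110},\ket{111}$ can indeed be connected by rotations each supported on only two qubits (they can — $\ket{000}\leftrightarrow\ket{110}$ differs on qubits $1,2$; $\ket{110}\leftrightarrow\ket{111}$ differs on qubit $3$ alone), and that composing the small rotations does not secretly create amplitude on a fourth basis vector outside this span — which it does not, since each generator is block-diagonal with respect to the relevant two-dimensional blocks. The remaining work is the elementary trigonometric estimate bounding $\sin^2\theta\cos^2\phi$ along and between waypoints, which is routine once the path $\gamma$ and the step count $m$ are fixed.
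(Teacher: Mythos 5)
There is a genuine gap, and it sits exactly at the point your proposal waves through: the claim that each elementary rotation ``only mixes two of the three basis vectors $\ket{000},\ket{110},\ket{111}$'' and hence preserves the span. A $2$-local rotation on qubits $\{1,2\}$ that mixes $\ket{00}$ with $\ket{11}$ necessarily acts the same way irrespective of qubit $3$, so whenever the state carries amplitude on $\ket{111}=\ket{11}_{12}\ket{1}_3$ it also rotates that component onto $\ket{001}$, which lies in the range of $P$ and outside your three-dimensional family. (Similarly, a bare rotation on qubit $3$ rotates the large $\ket{000}$ component onto $\ket{001}$; that one is fixable by using a controlled rotation on $\{2,3\}$, but the $\theta$-rotation is not.) This is not a bookkeeping issue but the heart of the matter: in your Phase 2 you need to increase $\theta$ while $\phi\approx\pi/2$, i.e.\ to transfer amplitude essentially directly from $\ket{000}$ to $\ket{111}$, and no single $2$-local unitary can do this while staying near $\Span\{\ket{000},\ket{111}\}$ --- that is precisely what $2$-orthogonality forbids. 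Concretely, if in Phase 2 you only apply small $\{1,2\}$-rotations, their composition is one $\{1,2\}$-rotation by total angle about $\pi/2$, and midway the state is close to $\cos\theta_0\ket{110}-\sin\theta_0\ket{001}$, giving $P$-overlap near $1$, not $\Delta$. Consequently the continuity/discretization estimate, and in particular the claim that $m\in O(1/\Delta)$ waypoints suffice because the path has $O(1)$ angular length, does not go through: the cost is not set by discretizing a short path but by how much probability a $2$-local move can transfer per unit of allowed leakage.

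The paper's construction repairs exactly this by using \emph{pairs} of unitaries per round: first a $\{1,2\}$-rotation with parameter $\beta=\sqrt{\Delta}$ that deliberately leaks weight $\beta^2$ onto \emph{both} $\ket{110}$ and $\ket{001}$, and then a second unitary on $\{2,3\}$ chosen to map the two leaked branches exactly back into $\Span\{\ket{000},\ket{111}\}$. The price is that each such round transfers only $\Delta(2\gamma_1^2-1)$ of probability, which degenerates near the equal superposition; this forces the cutoff/finishing step in the paper and is the true source of the $m\in O(1/\Delta^2)$ count. If you want to salvage your write-up, you must replace each ``$\theta$-step'' by such a leak-and-recombine pair (and handle the endpoint exactly, e.g.\ via the paper's Step 2 and a symmetric second half), and redo the step count from the per-round transfer rate rather than from path length.
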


The idea behind the proof is based on the following rough analogy: Suppose one wishes to map the point $(1,1)$ (corresponding to $\ket{000}$) in the 2D Euclidean plane to $(-1,-1)$ (corresponding to $\ket{111}$) via a sequence of moves with the following two restrictions: (1) For each current point $(x,y)$, the next move must leave precisely one of $x$ or $y$ invariant (analogous to $2$-local unitaries acting on a $3$-qubit state), and (2) the Euclidean distance between $(x,y)$ and the line through $(1,1)$ and $(-1,-1)$ never exceeds $\Delta$ (analogous to the overlap with $P$ not exceeding $\Delta$). In other words, we wish to stay close to a diagonal line while making only horizontal and vertical moves. This can be achieved by making a sequence of ``small'' moves resembling a ``staircase''. The smaller the size of each ``step'' in the staircase, the better we approximate the line, at the expense of requiring more moves (analogous to increasing the number of unitaries, $m$). Although the idea in this analogy is appealing in its simplicity, applying it to the setting of the Traversal Lemma is non-trivial, requiring a careful selection of $2$-local unitary operations.

\begin{proof}[Proof of Theorem~\ref{thm:badtraversal}]
Our high level approach is as follows. We first give a unitary $U$ which is a sequence of two-qubit unitaries mapping $\ket{000}$ to $(\ket{000}+\ket{111})/\sqrt{2}$. Given the technique behind $U$'s construction, one can analogously obtain a unitary $V$  which maps $(\ket{000}+\ket{111})/\sqrt{2}$ to $\ket{111}$. It follows that $VU\ket{000}=\ket{111}$. It thus suffices to describe $U$, which is done in two steps. The first step consists of a pair of unitaries which transfer a small amount of amplitude from $\ket{000}$ to $\ket{111}$; applying this step repeatedly yields a state $\ket{\psi}$ ``close'' to $(\ket{000}+\ket{111})/\sqrt{2}$. It is this iterative repetition which causes the overall number of unitaries $m$ to scale as $\Omega(1/\Delta)$. Step 2 then maps $\ket{\psi}$ precisely onto $(\ket{000}+\ket{111})/\sqrt{2}$. We now describe these steps.

\paragraph{Step 1: Iteratively make small steps towards $(\ket{000}+\ket{111})/2$.} Given any state of the form $\gamma_1\ket{000}+\gamma_2\ket{111}$ for real $\gamma_1>\gamma_2$, we give a pair of two-qubit unitaries $(U_1,U_2)$ which intuitively transfer a small amount of amplitude from $\ket{000}$ to $\ket{111}$.

We first apply a two-qubit unitary $U_1$ to qubits $1$ and $2$ with action $\ket{00}\mapsto\alpha\ket{00}+\beta\ket{11}$ and $\ket{11}\mapsto\beta\ket{00}-\alpha\ket{11}$ (for real $\alpha,\beta$ to be specified later), obtaining:
\begin{eqnarray}
    \gamma_1\ket{000}+\gamma_2\ket{111}&\mapsto&\gamma_1\alpha\ket{000}+\gamma_1\beta\ket{110}+\gamma_2\beta\ket{001}-\gamma_2\alpha\ket{111}\nonumber\\
    &=&\ket{0}(\gamma_1\alpha\ket{00}+\gamma_2\beta\ket{01})+\ket{1}(\gamma_1\beta\ket{10}-\gamma_2\alpha\ket{11}).\label{eqn:trick}
\end{eqnarray}
The overlap of this state with $P$ is $\beta^2$.

Next, apply a unitary $U_2$ on qubits $2$ and $3$ with action (omitting normalization {for clarity})
$    (\gamma_1\alpha\ket{00}+\gamma_2\beta\ket{01}) \mapsto \ket{00}$ and
$    (\gamma_1\beta\ket{10}-\gamma_2\alpha\ket{11}) \mapsto \ket{11}$, obtaining:
\[
    \sqrt{\gamma_1^2\alpha^2+\gamma_2^2\beta^2}\ket{000}+\sqrt{\gamma_1^2\beta^2+\gamma_2^2\alpha^2}\ket{111}
\]
{which has $0$ overlap with $P$.}
Setting $\beta=\sqrt{\Delta}$ ensures this process has at most $\Delta$ overlap with $P$ at each intermediate step.

Let us now analyze the rate at which amplitude is transferred from $\ket{000}$ to $\ket{111}$ by this mapping. To do so, define
\[
{    f(\gamma_1):= \sqrt{\gamma_1^2(1-\Delta)+(1-\gamma_1^2)\Delta} = \sqrt{(1-2\Delta)\gamma_1^2+\Delta},}
\]
which is the {new amplitude after} the map induced by $U_2U_1$ is applied to input amplitude $\gamma_1$. {Note that $f(\gamma_1) \geq 0$ for all $\gamma_1$ and $f(\gamma_1)^2 > 1/2$ when $\gamma_1^2 > 1/2$.}


{We now {quantify} how \emph{much} amplitude is transferred from $\ket{000}$ to $\ket{111}$ by this process}. Suppose we iteratively apply $U_2U_1$ so long as $\gamma_1^2\geq1/2+\zeta$ for some cutoff ${\zeta \in (0, 1/2)}$. Then, the difference $\gamma_1^2-f(\gamma_1)^2$ satisfies
\[
    2\Delta\zeta\leq\gamma_1^2-f(\gamma_1)^2=\Delta(2\gamma_1^2-1)\leq \Delta .
\]
In other words, each iterative step moves us at least $2\Delta\zeta$ and at most $\Delta$ towards our cutoff ${1/2+\zeta}$, implying the number of iterations required to reach the cutoff scales between $\Omega(1/\Delta)$ and $O(1/\Delta\zeta)$. Setting $\zeta := 2\Delta/(1+2\Delta)$ (which lies in $(0,1/2)$ when $\Delta \in (0, 1/2)$)  ensures that the number of iterations is at most $O(1/\Delta^2)$, and sets up the next step of our transformation, which we now discuss.


\paragraph{Step 2: Map an ``almost'' equal superposition to an equal superposition.} After $O(1/\Delta^2)$ iterations of Step $1$, we arrive at a state of the form $\gamma_1\ket{000}+\gamma_2\ket{111}$ where {$\gamma_1 \geq 0$ satisfies}
\begin{equation}\label{eqn:step2_1}
    \frac{1}{2}<\gamma_1^2\leq \frac{1}{2} + \frac{2\Delta}{1+2\Delta}.
\end{equation}
We seek a sequence of one and two-qubit unitaries which map this state to ${(\ket{000}+\ket{111})/\sqrt{2}}$. To attain this, we instead equivalently give a sequence of unitaries which {achieves} the reverse mapping.

To begin, {we apply a unitary to qubits} $1$ and $2$ with {action} $\ket{00}\mapsto\ket{00}$ and $\ket{11}\mapsto(\beta\ket{0}+\alpha\ket{1})\ket{1}$ for real {parameters} $\alpha,\beta$ to be {specified later}. {This maps} $(\ket{000}+\ket{111})/\sqrt{2}$ to
\begin{equation}\label{eqn:step1_1}
    \frac{1}{\sqrt{2}}\ket{000}+\frac{\beta}{\sqrt{2}}\ket{011}+\frac{\alpha}{\sqrt{2}}\ket{111}=\delta\ket{0}\left(\frac{1}{\sqrt{2}\delta}\ket{00}+\frac{\beta}{\sqrt{2}\delta}\ket{11}\right)+\frac{\alpha}{\sqrt{2}}\ket{1}\ket{11},
\end{equation}
{where} $\delta:=\sqrt{(1+\beta^2)/2}$. The overlap with $P$ at this point is $\beta^2/2$.

Next, define a unitary with action:
\[
    \frac{1}{\sqrt{2}\delta}\ket{00}+\frac{\beta}{\sqrt{2}\delta}\ket{11}\mapsto \ket{00}, \quad
    \frac{\beta}{\sqrt{2}\delta}\ket{00}-\frac{1}{\sqrt{2}\delta}\ket{11} \mapsto \ket{11}, \quad
    \ket{01}\mapsto\ket{01}, \quad
    \ket{10}\mapsto\ket{10}.
\]
Since {this unitary} is Hermitian, it follows that $\ket{11}$ {is mapped to} $\frac{\beta}{\sqrt{2}\delta}\ket{00}-\frac{1}{\sqrt{2}\delta}\ket{11}$; hence, applying {this unitary} to qubits $2$ and $3$ in Eqn.~(\ref{eqn:step1_1}) yields:
\[
    \delta\ket{000}+\frac{\alpha\beta}{2\delta}\ket{100}-\frac{\alpha}{2\delta}\ket{111}=\left(\delta\ket{00}+\frac{\alpha\beta}{2\delta}\ket{10}\right)\ket{0}-\frac{\alpha}{2\delta}\ket{11}\ket{1}.
\]
This state has overlap $\alpha^2\beta^2/(2\delta)^2\leq \beta^2/2$ with $P$.

Finally, apply a unitary on qubits $1$ and $2$ which maps $\ket{11}$ to ${-\ket{11}}$ and (the normalized version of) $\delta\ket{00}+\frac{\alpha\beta}{2\delta}\ket{10}$ to $\ket{00}$, obtaining
\begin{equation}\label{eqn:step1_3}
    \sqrt{1-\frac{\alpha^2}{4\delta^2}}\ket{000}+\frac{\alpha}{2\delta}\ket{111}.
\end{equation}

It remains to set $\beta$ so as (1) to prevent overlap more than $\Delta$ with $P$, i.e. we require $\beta^2/2\leq \Delta$, and (2) to ensure that the amplitude on $\ket{000}$ in Eqn.~(\ref{eqn:step1_3}) is precisely $\gamma_1$. Defining $\beta$ implicitly via the equation
\[
    \sqrt{1-\frac{\alpha^2}{4\delta^2}}=\sqrt{1-\frac{1-\beta^2}{2(1+\beta^2)}}=\gamma_1
\]
clearly satisfies the second of these requirements. Using the upper bound on $\gamma_1^2$ from Eqn.~(\ref{eqn:step2_1}), it is straightforward to verify that the first requirement is also met.
\end{proof}

\subsection{Properties of $k$-orthogonality}\label{app:prop}

{We now study the properties of $k$-orthogonality further, and give an intuitive characterization of the notion (Lemma \ref{l:prop2}).} We hope this may prove useful in possible independent applications of the concepts introduced in this {work}.

 We begin with the following useful lemma.

 \begin{lemma}\label{l:prop1}
     For any $\ket{v},\ket{w}\in(\complex^d)^{\otimes n}$, $\ket{v}$ and $\ket{w}$ are $k$-orthogonal if and only if for all subsets of qudits $S\subseteq[n]$ of size at most $k$, we have $\trace_{[n]\setminus S}(\ket{v}\bra{w})=0$.
 \end{lemma}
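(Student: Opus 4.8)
\textbf{Proof proposal for Lemma~\ref{l:prop1}.}

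\medskip

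The plan is to prove both directions by unpacking what a $k$-qudit unitary $U$ can do to the inner product $\bra{w}U\ket{v}$, and relating this to partial traces.

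First I would fix a subset $S\subseteq[n]$ with $\abs{S}\leq k$ (padding to size exactly $k$ only affects the argument trivially, since a $k$-qudit unitary subsumes any $j$-qudit unitary for $j\leq k$), and write $U = U_S \otimes I_{[n]\setminus S}$ for an arbitrary unitary $U_S$ acting on the qudits in $S$. The key identity is
\[
    \bra{w}\left(U_S\otimes I\right)\ket{v} = \trace\!\left( U_S \cdot \trace_{[n]\setminus S}\!\left(\ket{v}\bra{w}\right)\right),
\]
which follows by writing out both sides in a product basis: tracing out the complement of $S$ in $\ket{v}\bra{w}$ leaves exactly the operator on $\calH_S$ against which $U_S$ is contracted. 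This identity is the whole engine of the proof.

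For the ``if'' direction, suppose $\trace_{[n]\setminus S}(\ket{v}\bra{w})=0$ for every $S$ of size at most $k$. Then for any $k$-qudit unitary $U$, choosing $S$ to be its support (of size $\le k$), the displayed identity immediately gives $\bra{w}U\ket{v} = \trace(U_S\cdot 0) = 0$, so $\ket{v}$ and $\ket{w}$ are $k$-orthogonal. For the ``only if'' direction, suppose $\ket{v}$ and $\ket{w}$ are $k$-orthogonal, and fix $S$ with $\abs{S}\le k$; let $M:=\trace_{[n]\setminus S}(\ket{v}\bra{w}) \in \lin{(\complex^d)^{\otimes\abs{S}}}$. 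By the identity, $\trace(U_S M)=0$ for every unitary $U_S$ on $\calH_S$. Since the unitary operators span $\lin{\calH_S}$ (e.g.\ every matrix is a linear combination of unitaries, or one may take the generalized Pauli/Weyl--Heisenberg operators together with their phases as a spanning set of unitaries), it follows that $\trace(A M)=0$ for all $A\in\lin{\calH_S}$, hence $M=0$ by nondegeneracy of the trace inner product. This completes the proof.

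I expect the only mild subtlety — not really an obstacle — to be the bookkeeping in the partial-trace identity and the observation that unitaries span the full matrix algebra; both are standard, so no step should present real difficulty. If one prefers to avoid the ``unitaries span'' remark, an alternative is to note that $\trace(U_S M)=0$ for all unitaries $U_S$ forces, by polar decomposition $M = W\abs{M}$ with $W$ unitary, that $0 = \trace(W^\dagger M) = \trace(\abs{M}) = \trnorm{M}$, hence $M=0$; I would likely include this one-line alternative since it is cleanest.
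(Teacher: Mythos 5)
Your proposal is correct and follows essentially the same route as the paper: both hinge on the identity $\bra{w}(U_S\otimes I)\ket{v}=\trace\bigl(U_S\,\trace_{[n]\setminus S}(\ket{v}\bra{w})\bigr)$, and your polar-decomposition alternative (showing $\trace(U_S M)=0$ for all unitaries forces $\trnorm{M}=0$) is precisely the paper's argument via $\trnorm{A}=\max_U\abs{\trace(UA)}$. The "unitaries span the matrix algebra" variant you lead with is also fine, but it is only a cosmetic difference.
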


 \begin{proof}
     Assume first that $\ket{v}$ and $\ket{w}$ are $k$-orthogonal, and consider any $S\subseteq [n]$ with $\abs{S}\leq k$. Then, we have
     \[
         0
         =
         \max_{U_S\in \unitary{(\complex^d)^{\otimes \abs{S}}}}\abs{\bra{w}I\otimes U_S\ket{v}}
         =
         \max_{U_S\in \unitary{(\complex^d)^{\otimes \abs{S}}}}\abs{\inner{U_S}{\trace_{[n]\setminus S}(\ket{v}\bra{w})}}
         =
         \trnorm{\trace_{[n]\setminus S}(\ket{v}\bra{w})},
     \]
     where the second equality follows since $\trace((I_A\otimes C_B) D_{AB})=\trace(C\trace_A(D_{AB}))$ for all linear operators $C$ and $D$, and the third equality since $\trnorm{A}=\max_{U\in\unitary{\spa{X}}}\abs{\trace(UA)}$~\cite{W08_2} (intuitively, this holds since the optimal $U$ rotates the set of left singular vectors of $A$ into the set of right singular vectors of $A$). But now the claim follows, since $\trnorm{A}=0$ if and only if $A=0$. The converse direction proceeds analogously.
 \end{proof}


 \begin{lemma}\label{l:prop2}
     For any $\ket{v},\ket{w}\in(\complex^d)^{\otimes n}$, $\ket{v}$ and $\ket{w}$ are $k$-orthogonal if and only if for all subsets of qudits $S\subseteq[n]$ of size at most $k$, we have $(\trace_{S}\ketbra{v}{v})(\trace_{S}\ketbra{w}{w})=0$.
 \end{lemma}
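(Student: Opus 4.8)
The plan is to reduce Lemma~\ref{l:prop2} to Lemma~\ref{l:prop1} via a \emph{per-subset} equivalence. Concretely, I would show that for each fixed $S\subseteq[n]$ with $\abs{S}\leq k$,
\[
\trace_{[n]\setminus S}(\ketbra{v}{w})=0 \iff (\trace_S\ketbra{v}{v})(\trace_S\ketbra{w}{w})=0 .
\]
Since Lemma~\ref{l:prop1} already states that $\ket{v},\ket{w}$ are $k$-orthogonal if and only if the left-hand condition holds for \emph{all} $S$ of size at most $k$, this equivalence (applied to each such $S$) immediately gives that $k$-orthogonality is also equivalent to the right-hand condition holding for all such $S$, which is exactly Lemma~\ref{l:prop2}.

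To prove the per-subset equivalence I would use Schmidt decompositions across the bipartition $S\,|\,[n]\setminus S$. If $\ket{v}$ or $\ket{w}$ is the zero vector both conditions hold trivially, so assume both are nonzero and write $\ket{v}=\sum_i \sqrt{p_i}\,\ket{a_i}_S\ket{b_i}_{[n]\setminus S}$ and $\ket{w}=\sum_j\sqrt{q_j}\,\ket{c_j}_S\ket{d_j}_{[n]\setminus S}$, where all $p_i,q_j>0$ and $\{\ket{a_i}\},\{\ket{b_i}\},\{\ket{c_j}\},\{\ket{d_j}\}$ are orthonormal sets on the respective tensor factors. A direct computation gives $\trace_S\ketbra{v}{v}=\sum_i p_i\kb{b_i}$, whose support is $\Span\{\ket{b_i}\}$, and likewise $\trace_S\ketbra{w}{w}=\sum_j q_j\kb{d_j}$ with support $\Span\{\ket{d_j}\}$. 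Since these two operators are positive semidefinite, their product vanishes if and only if their supports are orthogonal, i.e. if and only if $\brakett{d_j}{b_i}=0$ for all $i,j$. On the other hand $\trace_{[n]\setminus S}\ketbra{v}{w}=\sum_{i,j}\sqrt{p_iq_j}\,\brakett{d_j}{b_i}\,\ketbra{a_i}{c_j}$, and because $\{\ket{a_i}\}$ and $\{\ket{c_j}\}$ are each orthonormal, the operators $\{\ketbra{a_i}{c_j}\}_{i,j}$ are linearly independent; hence this trace vanishes if and only if $\sqrt{p_iq_j}\,\brakett{d_j}{b_i}=0$ for all $i,j$, i.e., using $p_i,q_j>0$, if and only if $\brakett{d_j}{b_i}=0$ for all $i,j$. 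The two conditions therefore coincide.

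I do not anticipate a serious obstacle; the only points needing a line of justification are the two standard facts I invoke — that for positive semidefinite $A,B$ one has $AB=0$ iff $\operatorname{supp}(A)\perp\operatorname{supp}(B)$ (immediate from spectral decompositions, or from $\trace(A^2B)=\fnorm{AB^{1/2}}^2$), and the linear independence of $\{\ketbra{a_i}{c_j}\}$ (evaluate a vanishing linear combination on each $\ket{c_{j_0}}$ and use orthonormality of $\{\ket{a_i}\}$). A more computational alternative would be to rerun the $\trnorm{\cdot}=\max_U\abs{\trace(U\,\cdot)}$ manipulation from the proof of Lemma~\ref{l:prop1} directly at the level of reduced density matrices, but the Schmidt-decomposition bridge above is cleaner and lets us cite Lemma~\ref{l:prop1} as a black box.
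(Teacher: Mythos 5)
Your proposal is correct and follows essentially the same route as the paper: both use Schmidt decompositions of $\ket{v}$ and $\ket{w}$ across the cut $S \mid [n]\setminus S$ and reduce to Lemma~\ref{l:prop1} by showing that $\trace_{[n]\setminus S}(\ketbra{v}{w})=0$ is equivalent to orthogonality of the supports of the reduced states. Your packaging as a single per-subset ``if and only if'' (with the nonzero-Schmidt-coefficient convention) is a slightly cleaner way of handling the converse, which the paper simply declares analogous, but the substance is identical.
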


 \begin{proof}
     Assume first that $\ket{v}$ and $\ket{w}$ are $k$-orthogonal, and consider any $S\subseteq [n]$ with $\abs{S}\leq k$. Let $Y$ denote the register corresponding to $[n]\setminus S$. Then, suppose the Schmidt decompositions of $\ket{v}$ and $\ket{w}$  are $\ket{v}=\sum_i\alpha_i\ket{a_i}_S\ket{b_i}_Y$ and $\ket{w}=\sum_j\beta_j\ket{c_j}_S\ket{d_j}_Y$, respectively. Now, by Lemma~\ref{l:prop1}, we have
     \[
         0 = \trace_{Y}(\ket{v}\bra{w})=\sum_{ij}\alpha_i\beta_j\brakett{d_j}{b_i}\ketbra{a_i}{c_j}.
     \]
     Since ${\{ \, \ket{a_i} \}}$ and ${\{ \, \ket{c_j} \}}$ are orthonormal sets, this implies that for any $i$ and $j$, either $\alpha_i=0$, $\beta_j=0$, or $\brakett{d_j}{b_i}=0$. Thus, letting
 \[ \rho:=\trace_S(\ketbra{v}{v})=\sum_i\alpha_i^2\ketbra{b_i}{b_i}\quad \text{ and } \quad \sigma:=\trace_S(\ketbra{w}{w})=\sum_j\beta_j^2\ketbra{d_j}{d_j}, \]
 we have $\trace(\rho\sigma)=\sum_{ij}\alpha_i^2\beta_j^2\abs{\brakett{d_j}{b_i}}^2=0$, or equivalently, $\rho\sigma=0$ since $\rho,\sigma\succeq 0$. The converse direction proceeds analogously.
 \end{proof}

 Using Lemma~\ref{l:prop1}, we can also easily show the following statement regarding ``extensions'' of $k$-orthogonal states.

 \begin{lemma}
     For any $\ket{v},\ket{w}\in(\complex^d)^{\otimes n}$, $\ket{v}$ and $\ket{w}$ are $k$-orthogonal if and only if for \emph{all} $\ket{V},\ket{W}\in(\complex^{d})^{\otimes {n'}}$ for $n'\geq 0$, $\ket{v}\ket{V}$ and $\ket{w}\ket{W}$ are  $k$-orthogonal. (Note: What makes this not completely trivial is that the $k$-local unitary can act across the cut between $\ket{v}$ and $\ket{V}$.)
 \end{lemma}
 \begin{proof}
     Assume first that $\ket{v}$ and $\ket{w}$ are $k$-orthogonal, and consider arbitrary $n' \geq 1$ and vectors $\ket{V},\ket{W} \in(\complex^d)^{\otimes n'}$. Let $S \subseteq [n]$ and $S' \subseteq [n']$ be such that $|S \cup S'| \leq k$. Then we have
 \[
     \trace_{[n] \cup [n'] \setminus (S \cup S')}(\ket{v} \ket{V}\bra{w} \bra{W}) = \trace_{[n] \setminus S}(\ket{v} \bra{w}) \trace_{[n'] \setminus S'}(\ket{V} \bra{W})   = 0,
 \]
 where the last equality holds since $\ket{v}$ and $\ket{w}$ are $k$-orthogonal and by Lemma~\ref{l:prop1}. Thus, $\ket{v}\ket{V}$ and $\ket{w}\ket{W}$ are $k$-orthogonal. Since $\ket{V}$ and $\ket{W}$ are arbitrary, this direction of the claim holds. The converse statement is trivially true.
 \end{proof}

\section{Conclusions and open problems} \label{sect:conclusion}

In this paper, we defined a physically motivated notion of connectivity for ground {spaces} of quantum local Hamiltonians, and initiated its study. Specifically, we asked: Given a local Hamiltonian $H$ and initial and final states $\ket{\psi}$ and $\ket{\phi}$, respectively, can $\ket{\psi}$ be mapped via local unitary operations to $\ket{\phi}$ through the ground space (more generally, through the low-energy space) of $H$? Our main results showed that the complexity of this problem can range from \class{QCMA}-complete to \class{NEXP}-complete, depending on the specific formulation of the problem. As a result, we obtained a natural \class{QCMA}-complete problem, adding to the short list of known QCMA-complete problems. To show this \class{QCMA}-hardness result, we proved the Traversal Lemma, which allows one to analyze the path a unitary evolution must take in certain settings. We further showed that the Traversal Lemma is tight up to a polynomial factor in the length of the unitary evolution considered.

We close with the following open problems. (1) References~\cite{GKMP06} and~\cite{MNPR14} show dichotomy and trichotomy theorems, respectively, for classical reconfiguration problems involving Boolean satisfiability; can similar theorems be shown in the quantum setting? For example, are there non-trivial quantum cases of \gscon~which can be solved in P or BQP? (2) Our complexity theoretic results on GSCON depended crucially on the parameters $m$ (the number of unitaries) and $l$ (the locality of each unitary). We have shown that polynomial $m$ and $l=2$ characterizes QCMA, and that exponential $m$ and $l=1$ characterizes PSPACE. There is, however, an interesting regime left to consider: Exponential $m$ and $l=2$. In this case, our proof of containment in PSPACE seems to fail as each intermediate state in the evolution appears to require exponential space to represent. However, this variant of the problem is in NEXP, and we conjecture that it is in fact \class{NEXP}-complete. (3) Regarding our Traversal Lemma, can it (or some variant thereof) be used in other settings in quantum computational complexity, such as in analyzing quantum adiabatic algorithms? (4) Finally, are there other problems related to \gscon~which are also complete for quantum complexity classes such as \class{QCMA}?

\section*{Acknowledgements}

We thank Joel Klassen for suggesting the connection between \gscon~and quantum memories, {Barbara Terhal and Roberto Oliveira for helpful discussions regarding perturbation theory gadgets}, Sarvagya Upadhyay, Damian Markham, Eleni Diamanti, Attila Pereszlenyi, Amer Mouawad, Vinayak Pathak, and David Gosset for insightful discussions, {Shelby Kimmel for suggesting the use of perfect completeness of QCMA~\cite{JKNN12} to improve QCMA-completeness of \gscon\ to QCMA-completeness of \ffgscon,} and the anonymous referees whose feedback helped improve this paper. Part of this work was completed during the UC Berkeley Simons Institute for the Theory of Computing's program on Quantum Hamiltonian Complexity.

SG acknowledges support from a Government of Canada NSERC Banting Postdoctoral Fellowship and the Simons Institute for the Theory of Computing at UC Berkeley.

JS acknowledges support from a Government of Canada NSERC Postdoctoral Fellowship, the French National Research Agency (ANR-09-JCJC-0067-01), and the European Union (ERC project QCC 306537). Research at the Centre for Quantum Technologies at the National University of Singapore is partially funded by the Singapore Ministry of Education and the National Research Foundation, also through the Tier 3 Grant ``Random numbers from quantum processes,'' (MOE2012-T3-1-009).


\bibliographystyle{alpha}
\bibliography{Sevag_Gharibian_Central_Bibliography_Abbrv,Sevag_Gharibian_Central_Bibliography}

\appendix

\section{Proofs for Section~\ref{sect:nets}}\label{app:netproofs}

\begin{proof}[Proof of Lemma~\ref{l:net}.]
    Any $2\times 2$ unitary $U$ can be written in terms of parameters $0 \leq x \leq 1$ and $0 \leq \phi_1, \phi_2, \phi_3 \leq 2\pi$ such that
    \begin{equation}\label{eqn:2x2}
        U = \left(
            \begin{array}{cc}
              \sqrt{x} \, e^{i \phi_1} & \sqrt{1-x} \, e^{i \phi_2} \\
              \sqrt{1-x} \, e^{i \phi_3} & \sqrt{x} \, e^{i \phi_4} \\
            \end{array}
          \right),
    \end{equation}
where we let $\phi_4 := -\phi_1 + \phi_2 + \phi_3 + \pi$ for brevity. The net is constructed by a straightforward discretization of the ranges of $x$, $\phi_1$, $\phi_2$, and $\phi_3$ into segments of size $\delta>0$, for $\delta$ to be chosen as needed. For any unitary $U$, there hence exist parameters $0 \leq y \leq 1$ and $0 \leq \theta_1, \theta_2, \theta_3 \leq 2\pi$ in the discretization such that $|x-y|, |\phi_1 - \theta_1|, |\phi_2 - \theta_2|, |\phi_3 - \theta_3| \leq \delta$. We now upper bound $\snorm{U - \widetilde{U}}$, where we have defined the unitary matrix
    \[
        \widetilde{U} := \left(
            \begin{array}{cc}
              \sqrt{y} \, e^{i \theta_1} & \sqrt{1-y} \, e^{i \theta_2} \\
              \sqrt{1-y} \, e^{i \theta_3} & \sqrt{y} \, e^{i \theta_4} \\
            \end{array}
          \right)
    \]
with $\theta_4 := -\theta_1 + \theta_2 + \theta_3 + \pi$ (implying $|\phi_4 - \theta_4| \leq 3 \delta$).
We first upper bound the magnitude of each entry of $U - \widetilde{U}$ individually. For $j \in \set{1,4}$, we have
\begin{eqnarray*}
|\sqrt{x} \, e^{i \phi_j} - \sqrt{y} \, e^{i \theta_j}|
& \leq & |\sqrt{x} \, e^{i \phi_j} - \sqrt{x} \, e^{i \theta_j}| + |\sqrt{x} \, e^{i \theta_j} - \sqrt{y} \, e^{i \theta_j}| \\
& = & \sqrt{x} |e^{i \phi_j} - e^{i \theta_j}| + |\sqrt{x} - \sqrt{y}| \\
& \leq & |\phi_j - \theta_j| + \sqrt{|x - y|} \\
& \leq & 4 \sqrt{\delta},
\end{eqnarray*}
{where we used} $x \leq 1$,  $\abs{e^{i\phi}-e^{i\theta}}\leq\abs{\phi-\theta}$ when $\abs{\phi-\theta} \leq 1$, and the inequality $|\sqrt{x} - \sqrt{y}| \leq \sqrt{|x-y|}$. The same argument yields $|\sqrt{1-x} \, e^{i \phi_j} - \sqrt{1-y} \, e^{i \theta_j}| \leq 4 \sqrt{\delta}$ for $j \in \set{2, 3}$.

We now use our bounds on each entry of $U-\tilde{U}$ as follows. For $\mnorm{A}:=\max_{ij}\abs{A_{ij}}$, it holds that $\snorm{A}\leq d\mnorm{A}$ for any $A\in\lin{\complex^d}$~(see, e.g., \cite{HJ90}). Hence,
    \[
        \snorm{U - \widetilde{U}} \leq 8 \sqrt{\delta}.
    \]
    Thus, in order to obtain an $\epsilon$-net over single-qubit unitaries, it suffices to set $\delta=\epsilon^2/64$.

    To complete the proof of our claim, we now need to bound the size of our net. Since we have $4$ parameters $x, \phi_1, \phi_2, \phi_3$, each discretized into segments of length $\delta \in O(\epsilon^2)$, our net contains $O(\epsilon^{-8})$ elements. Ordering our net elements by canonically ordering the discretization of each individual parameter $x, \phi_1, \phi_2, \phi_3$ thus implies we can represent each $U_i$ in our net using $O(\log(1/\epsilon))$ bits and retrieve $U_i$ in time $O(\log^2(1/\epsilon))$.
\end{proof}

 \begin{proof}[Proof of Lemma~\ref{l:pseudonet}.]
     The construction of $N$ is straightforward: Cast a $\delta$-net over the unit disk for each entry $(i,j)$ of a $d\times d$ complex matrix, for $\delta$ to be chosen as needed. For the checking algorithm $C$, let $\ket{u_i}$ denote the $i$'th column of $\widetilde{U}\in N$. Then, defining $B:=\sum_{i=1}^d \ketbra{u_i}{u_i}$, $C$ accepts if and only if
     \begin{equation}\label{eqn:B}
         \snorm{B-I}\leq \frac{\epsilon}{2({d}+\epsilon)}.
     \end{equation}
    Finally, the rounding algorithm $R$ maps input $\widetilde{U}\in N$ to a matrix $U$ whose $i$'th column is given by $\ket{u_i'}:=B^{-1/2}\ket{u_i}$. We remark that the rounding algorithm is heavily inspired by the epsilon net construction in~\cite{PGACHW11,G13}.

     In order to proceed with the proof, we require a $\delta'$-net $D'$ over $d$-dimensional vectors, where $\delta':= \epsilon/[6d({d}+\epsilon)]$. For this, let $D$ denote our  $\delta$-net cast over the unit disk in our construction of $N$, and set $\delta := \delta'/\sqrt{d}$. Then, we claim that $D':=D^{\times d}$ gives us the desired $\delta'$-net over $\complex^d$. To see this, for $\ve{v}\in \complex^d$, let $\ve{w}$ be the vector obtained by snapping the coordinates of $\ve{v}$ to the $\delta$-net. Then,
      \[
          \enorm{\ve{v}-\ve{w}}= \sqrt{\sum_{i=1}^d (v_i - w_i)^2} \leq \sqrt{d \delta^2}=\delta'.
      \]

     We now prove that $N$ is an $\epsilon$-pseudo-net. Let $U\in\unitary{\complex^d}$. We first show that there exists $\widetilde{U}\in N$ such that $C$ accepts $\widetilde{U}$, and that $\inlinesnorm{U-\widetilde{U}}\leq \epsilon$. We proceed as follows: For each column $\ket{u_i}$ of $U$, replace it with a {$\delta'$-close} vector $\ket{u_i'}\in D'$. Letting $\widetilde{U}$ denote the resulting matrix, note that $\widetilde{U}\in N$. We now show the required two properties:

     \begin{enumerate}
     	\item ($\widetilde{U}$ is accepted by $C$) Let $A:=\sum_{i=1}^d\ketbra{u_i'}{u_i'}$. Then,
        \begin{equation}\label{eqn:Abound}
              \snorm{A-I}
               \leq \sum_{i=1}^d \snorm{\ketbra{u_i'}{u_i'}-\ketbra{u_i}{u_i}}
               \leq \sum_{i=1}^d \fnorm{\ketbra{u_i'}{u_i'}-\ketbra{u_i}{u_i}}
               \leq (2 + \delta') d \delta'
               \leq 3d \delta',
          \end{equation}
          where the last inequality follows since $\delta' \leq 1$, and the third inequality follows from Equation~(\ref{eqn:fnormjj}) and the fact that
          $	\norm{ \ket{u'_i} }_2 \leq \norm{ \ket{u_i}}_2 + \norm{\ket{u'_i} - \ket{u_i}}_2 \leq \delta' + 1.
          $
          Thus, $\widetilde{U}$ is accepted by $C$ since
      	\[
      		\snorm{A-I} \leq 3d \delta' =\frac{\epsilon}{2({d}+\epsilon)}.
      	\]

      \item ($\inlinesnorm{U-\widetilde{U}}\leq \epsilon$) We have
      \begin{equation}\label{eqn:simplerproof}
          \snorm{U-\widetilde{U}}
          \leq
          \sum_{i=1}^d \snorm{\ketbra{u_i}{i} - \ketbra{u_i'}{i}}
          =
          \sum_{i=1}^d \norm{\ket{u_i} - \ket{u_i'}}_2
          \leq d \delta'
          \leq \epsilon,
      \end{equation}
	where the second inequality holds since $D'$ is a $\delta'$-net.
     \end{enumerate}

     Conversely, suppose that $\widetilde{U}\in N$. We show that if $\widetilde{U}$ is accepted by $C$, then $R$ maps $\widetilde{U}$ to a unitary $U\in \unitary{\complex^d}$ such that $\inlinesnorm{\widetilde{U}-U}\leq \epsilon$. To do this, we first show that $B$ (as {used} in Equation~(\ref{eqn:B})) is invertible (otherwise, the algorithm $R$ we have described is not well-defined). Indeed, suppose to the contrary that $B\ket{v}=0$ for unit vector $\ket{v}$. Then, ${\enorm{(B-I)\ket{v}}=1}$. But this contradicts the fact that $C$ accepts $\widetilde{U}$, i.e., ${\snorm{B-I}\leq \epsilon/[2({d}+\epsilon)]<1}$. Next, observe that $U$ is unitary since
     \[
         \sum_{i=1}^d\ketbra{u_i'}{u_i'}=\sum_{i=1}^dB^{-1/2}\ketbra{u_i}{u_i}B^{-1/2}=B^{-1/2}BB^{-1/2}=I.
     \]
     Finally, to show that $\inlinesnorm{\widetilde{U}-U}\leq \epsilon$, by the same argument as in Equation~(\ref{eqn:simplerproof}), we have
     \begin{equation}\label{eqn:Ubound}
\snorm{U-\widetilde{U}}
          \leq
          \sum_{i=1}^d \norm{\ket{u_i} - \ket{u_i'}}_2
          =\sum_{i=1}^d \norm{(I-B^{-1/2})\ket{u_i}}_2
          \leq d \snorm{I-B^{-1/2}}.
     \end{equation}
     Thus, we are left to upper bound $\inlinesnorm{I-B^{-1/2}}$. We instead first upper bound $\inlinesnorm{I-B}$; using an argument analogous to Equation~(\ref{eqn:Abound}), we have that $\inlinesnorm{I-B}\leq 3d\delta'$. Applying now the fact that if $x\neq0$ and $\abs{x-1}\leq y$, then $|(1/\sqrt{x})-1|\leq y/(1-y)$, it follows that $\inlinesnorm{I-B^{-1/2}}\leq (3d\delta')/(1-3d\delta')$. Substituting this bound into Equation~(\ref{eqn:Ubound}), we conclude that $\inlinesnorm{U-\widetilde{U}}\leq \epsilon$. This completes the proof that $N$ constitutes an $\epsilon$-pseudo-net.

 Next, to bound the size of the net $N$, note that since $\delta\in\Theta(\epsilon/d^{5/2})$, a trivial construction of a $\delta$-net over the unit disk (i.e., place a square lattice of points down on the unit disk) has $O(d^5/\epsilon^2)$ elements. Since we cast the $\delta$-net over $d^2$ matrix entries, the size of $N$ is $O(d^{7}/\epsilon^2)$.

 Finally, to compute $\widetilde{U}_i$ given $i$ using $O(d^2\log^2(d^{5/2}/\epsilon))$ bit operations, note that $i$ encodes the entries of $d^2$ matrix positions $(s,t)$ of $\widetilde{U}_i$, each of which requires {$\log(d^{5/2}/\epsilon)$} bits\footnote{Simply encode the offsets on the imaginary and real axes.} to encode which element from the $\delta$-net we have at position $(s,t)$ . Since $\widetilde{U}_i$ has $d^2$ entries which need to be computed given $i$, the claim follows.
 \end{proof}


\end{document}